\newcommand{\mjrty}{\textsf{MJRTY}\xspace}
\newcommand{\misrag}{\textsf{MisraGries}\xspace}
\newcommand{\countsketch}{\textsf{CountSketch}\xspace}
\newcommand{\countmin}{\textsf{CountMin}\xspace}
\newcommand{\countsieve}{\textsf{CountSieve}\xspace}
\newcommand{\lossy}{\textsf{LossyCounting}\xspace}
\newcommand{\spacesave}{\textsf{SpaceSaving}\xspace}
\newcommand{\frequent}{\textsf{Frequent}\xspace}
\newcommand{\pickdrop}{\textsf{Pick-and-Drop}\xspace}
\newcommand{\ceil}[1]{\lceil #1 \rceil}
\newcommand{\eps}{\varepsilon}
\renewcommand{\epsilon}{\varepsilon}
\newcommand{\bptree}{\textsf{BPTree}\xspace}
\newcommand{\hho}{\textsf{HH1}\xspace}
\newcommand{\hht}{\textsf{HH2}\xspace}
\newcommand{\inprod}[1]{\left\langle #1 \right\rangle}
\newcommand{\bfone}{\mathbf{1}}
\DeclareMathOperator{\poly}{poly}
\DeclareMathOperator{\median}{median}
\newcommand{\supbound}[2]{c #1\beta^{#2}}
\newcommand{\calC}{\mathcal{C}}
\newcommand{\E}{\mathbb{E}}
\newcommand{\calH}{\mathcal{H}}
\newcommand{\R}{\mathbb{R}}
\newcommand{\calT}{\mathcal{T}}
\newcommand{\Z}{\mathbb{Z}}
\newtheorem{theorem}{Theorem}
\newtheorem{lemma}[theorem]{Lemma}
\newtheorem{proposition}[theorem]{Proposition}
\begin{document}

\title{BPTree: an $\ell_2$ heavy hitters algorithm using constant memory}
\author{
 Vladimir Braverman\thanks{\texttt{vova@cs.jhu.edu}. Johns Hopkins University. Supported in part by NSF grants IIS-1447639 and CCF-1650041 and by a Google Faculty Research Award.}
\and
Stephen R.~Chestnut\thanks{\texttt{stephenc@ethz.ch}. ETH Zurich}
\and
Nikita Ivkin\thanks{\texttt{nivkin1@jhu.edu}. Johns Hopkins University. Supported in part by NSF grants IIS-1447639 and CCF-1650041 and by DARPA grant N660001-1-2-4014.}
\and 
Jelani Nelson\thanks{\texttt{minilek@seas.harvard.edu}. Harvard University. Supported by NSF grant IIS-1447471 and CAREER CCF-1350670, ONR Young Investigator award N00014-15-1-2388, and a Google Faculty Research Award.}
\and
Zhengyu Wang\thanks{\texttt{zhengyuwang@g.harvard.edu}. Harvard University. Supported in part by NSF grant CCF-1350670.}
\and
David P.~Woodruff\thanks{\texttt{dwoodruf@cs.cmu.edu}. Carnegie Mellon University. This work was done while a research staff member at IBM Research.}
}

\maketitle

\begin{abstract}
The task of finding heavy hitters is one of the best known and well studied problems in the area of data streams.
One is given a list $i_1,i_2,\ldots,i_m\in[n]$ and the goal is to identify the items among $[n]$ that appear frequently in the list.
In sub-polynomial space, the strongest guarantee available is the $\ell_2$ guarantee, which requires finding all items that occur at least $\epsilon\|f\|_2$ times in the stream, where the vector~$f\in\R^n$ is the count histogram of the stream with $i$th coordinate equal to the number of times~$i$ appears~$f_i\coloneqq\#\{j\in[m]:i_j=i\}$.
The first algorithm to achieve the $\ell_2$ guarantee was the \countsketch of \cite{CharikarCF04}, which requires $O(\epsilon^{-2}\log n)$ words of memory and $O(\log n)$ update time and is known to be space-optimal if the stream allows for deletions. The recent work of \cite{BravermanCIW16} gave an improved algorithm for insertion-only streams, using only $O(\epsilon^{-2}\log\epsilon^{-1}\log\log n)$ words of memory.
In this work, we give an algorithm \bptree for $\ell_2$ heavy hitters in insertion-only streams that achieves $O(\epsilon^{-2}\log\epsilon^{-1})$ words of memory and $O(\log\epsilon^{-1})$ update time, which is the optimal dependence on $n$ and $m$.
In addition, we describe an algorithm for tracking $\|f\|_2$ at all times with $O(\epsilon^{-2})$ memory and update time.
Our analyses rely on bounding the expected supremum of a Bernoulli process involving Rademachers with limited independence, which we accomplish via a Dudley-like chaining argument that may have applications elsewhere.
\end{abstract}

\section{Introduction}

The {\em streaming model} of computation is well-established as one important model for processing massive datasets. 
A sequence of data is seen which could be, for example, destination IP addresses of TCP/IP packets or query terms submitted to a search engine, and which is modeled as a list of integers $i_1,i_2,\ldots,i_m\in[n]$.
Each item $i\in[n]$ has a \emph{frequency} in the stream that is the number of times it appears and is denoted $f_i\coloneqq\#\{j\in[m]:i_j=i\}$.
The challenge is to define a system to answer some pre-defined types of queries about the data, such as distinct element counts, quantiles, frequent items, or other statistics of the vector $f$, to name a few. 
The system is allowed to read the stream only once, in the order it is given, and it is typically assumed that the stream being processed is so large that explicitly storing it is undesirable or even impossible. 
Ideally, streaming algorithms should use space sublinear, or even exponentially smaller than, the size of the data to allow the algorithm's memory footprint to fit in cache for fast stream processing. 
The reader is encouraged to read \cite{BabcockBDMW02,Muthukrishnan05} for further background on the streaming model of computation.

Within the study of streaming algorithms, the problem of finding frequent items is one of the most well-studied and core problems, with work on the problem beginning in 1981 \cite{BoyerM81,BoyerM91}. 
Aside from being an interesting problem in its own right, algorithms for finding frequent items are used as subroutines to solve many other streaming problems, such as moment estimation \cite{IndykW05}, entropy estimation \cite{ChakrabartiCM10,HarveyNO08}, $\ell_p$-sampling \cite{MonemizadehW10}, finding duplicates \cite{GopalanR09}, and several others.

Stated simply, the goal is to report a list of items that appear least $\tau$ times, for a given threshold $\tau$.
Naturally, the threshold $\tau$ should be chosen to depend on some measure of the size of the stream.
The point of a frequent items algorithm is to highlight a \emph{small} set of items that are frequency outliers.
A choice of $\tau$ that is independent of $f$ misses the point; it might be that all frequencies are larger than $\tau$.

With this in mind, previous work has parameterized $\tau$ in terms of different norms of $f$ with \misrag~\cite{MisraG82} and \countsketch~\cite{CharikarCF04} being two of the most influential examples. 
A value $\epsilon>0$ is chosen, typically $\epsilon$ is a small constant independent of $n$ or $m$, and $\tau$ is set to be $\epsilon\|f\|_1=\epsilon m$ or $\epsilon\|f\|_2$.
These are called the $\ell_1$ and $\ell_2$ guarantees, respectively.
Choosing the threshold $\tau$ in this manner immediately limits the focus to outliers since no more than $1/\epsilon$ items can have frequency larger than $\epsilon\|f\|_1$ and no more than $1/\epsilon^2$ can have frequency $\epsilon\|f\|_2$ or larger.

A moments thought will lead one to conclude that the $\ell_2$ guarantee is stronger, i.e.\ harder to achieve, than the $\ell_1$ guarantee because $\|x\|_1\geq\|x\|_2$, for all $x\in\R^n$.
Indeed, the $\ell_2$ guarantee is much stronger.
Consider a stream with all frequencies equal to 1 except one which is equal $j$.
With $\epsilon=1/3$, achieving the $\ell_1$ guarantee only requires finding an item with frequency~$j= n/2$, which means that it occupies more than one-third of the positions in the stream, whereas achieving the $\ell_2$ guarantee would require finding an item with frequency $j=\sqrt{n}$, such an item is a negligible fraction of the stream!

As we discuss in Section~\ref{sec: prev work}, the algorithms achieving $\ell_2$ guarantee, like \countsketch~\cite{CharikarCF04}, achieve essentially the best space-to-$\tau$ trade-off.
But, since the discovery of \countsketch, which uses $O(\epsilon^{-2}\log{n})$ words of memory\footnote{Unless explicitly stated otherwise, space is always measured in machine words. 
It is assumed a machine word has at least $\log_2 \max\{n,m\}$ bits, to store any ID in the stream and the length of the stream.}, it has been an open problem to determine the smallest space possible for achieving the $\ell_2$ guarantee.
Since the output is a list of up to $\epsilon^{-2}$ integers in $[n]$, $\Omega(\epsilon^{-2})$ words of memory are necessary.
The recent algorithm \countsieve shrunk the storage to $O(\epsilon^{-2}\log\epsilon^{-1}\log\log n)$ words~\cite{BravermanCIW16}.
Our main contribution is to describe an algorithm~\bptree that uses only $O(\epsilon^{-2}\log\epsilon^{-1})$ words---independent of $n$ and $m$!---and obviously optimal in the important setting where $\epsilon$ is a constant.

\subsection{Previous work}\label{sec: prev work}

Work on the heavy hitters problem began in 1981 with the \mjrty algorithm of \cite{BoyerM81,BoyerM91}, which is an algorithm using only two machine words of memory that could identify an item whose frequency was strictly more than half the stream. 
This result was generalized by the \misrag algorithm in \cite{MisraG82}, which, for any $0 < \eps \le 1/2$, uses $2(\ceil{1/\eps} - 1)$ counters to identify every item that occurs strictly more than an $\eps m$ times in the stream. 
This data structure was rediscovered at least two times afterward \cite{DemaineLM02,KarpSP03} and became also known as the \frequent algorithm. 
It has implementations that use $O(1/\eps)$ words of memory, $O(1)$ expected update time to process a stream item (using hashing), and $O(1/\eps)$ query time to report all the frequent items. 
Similar space requirements and running times for finding $\eps$-frequent items were later achieved by the \spacesave \cite{MetwallyAA05} and \lossy \cite{MankuM12} algorithms. 

A later analysis of these algorithms in \cite{BerindeCIS09} showed that they not only identify the heavy hitters, but they also provided estimates of the frequencies of the heavy hitters.
Specifically, when using $O(k/\eps)$ counters they provide, for each heavy hitter $i\in[n]$, an estimate~$\tilde{f}_i$ of the frequency $f_i$ such that $|\tilde{f}_i - f_i| \le (\eps/k)\cdot \|f_{tail(k)}\|_1 \leq (\eps/k)\|f\|_1$.  
Here $f_{tail(k)}$ is the vector $f$ but in which the largest $k$ entries have been replaced by zeros (and thus the norm of $f_{tail(k)}$ can never be larger than that of $f$). 
We call this the \emph{$((\eps/k), k)$-tail guarantee}.
A recent work of \cite{BhattacharyyaDW16} shows that for $0<\alpha < \eps \le 1/2$, all $\eps$-heavy hitters can be found together with approximate for them $\tilde{f}_i$ such that $|\tilde{f}_i - f_i| \le \alpha \|f\|_1$, and the space complexity is $O(\alpha^{-1}\log(1/\eps) + \eps^{-1}\log n + \log\log\|f\|_1)$ bits.

All of the algorithms in the previous paragraph work in one pass over the data in the {\em insertion-only} model, also known as the {\em cash-register} model \cite{Muthukrishnan05}, where deletions from the stream are not allowed. Subsequently, many algorithms have been discovered that work in more general models such as the strict turnstile and general turnstile models. 
In the turnstile model, the vector $f\in\R^n$ receives updates of the form $(i,\Delta)$, which triggers the change $f_i \leftarrow f_i + \Delta$; note that we recover the insertion-only model by setting $\Delta=1$ for every update. 
The value $\Delta$ is assumed to be some bounded precision integer fitting in a machine word, which can be either positive or negative. In the {\em strict} turnstile model we are given the promise that $f_i \ge 0$ at all times in the stream. That is, items cannot be deleted if they were never inserted in the first place. In the general turnstile model no such restriction is promised (i.e.\ entries in $f$ are allowed to be negative). This can be useful when tracking differences or changes across streams. For example, if $f^1$ is the query stream vector with $f^1_i$ being the number of times word $i$ was queried to a search engine yesterday, and $f^2$ is the similar vector corresponding to today, then finding heavy coordinates in the vector $f = f^1 - f^2$, which corresponds to a sequence of updates with $\Delta = +1$ (from yesterday) followed by updates with $\Delta = -1$ (from today), can be used to track changes in the queries over the past day. 

In the general turnstile model, an $\eps$-heavy hitter in the $\ell_p$ norm is defined as an index $i\in[n]$ such that $|f_i| \ge \eps\|f\|_p$. Recall $\|f\|_p$ is defined as $(\sum_{i=1}^n |f_i|^p)^{1/p}$. 
The \countmin sketch treats the case of $p=1$ and uses $O(\eps^{-1}\log n)$ memory to find all $\eps$-heavy hitters and achieve the $(\eps, 1/\eps)$-tail guarantee
\cite{CormodeM05}. 
The \countsketch treats the case of $p=2$ and uses $O(\eps^{-2}\log n)$ memory, achieving the $(\eps, 1/\eps^2)$-tail guarantee. It was later showed in \cite{JowhariST11} that the \countsketch actually solves $\ell_p$-heavy hitters for all $0<p\le 2$ using $O(\eps^{-p}\log n)$ memory and achieving the $(\eps, 1/\eps^p)$-tail guarantee. 
In fact they showed something stronger: that {\em any} $\ell_2$ heavy hitters algorithm with error parameter $\eps^{p/2}$ achieving the tail guarantee automatically solves the $\ell_p$ heavy hitters problem with error parameter $\eps$ for any $p\in (0,2]$. 
In this sense, solving the heavy hitters for $p=2$ with tail error, as \countsketch does, provides the strongest guarantee among all $p\in (0,2]$. 

Identifying $\ell_2$ heavy hitters is optimal in another sense, too. 
When $p>2$ by H\"older's Inequality $\epsilon\|f\|_p\geq \frac{\epsilon}{n^{1/2-1/p}}\|f\|_2$.
Hence, one can use an $\ell_2$ heavy hitters algorithm to identify items with frequency at least $\epsilon\|f\|_p$, for $p>2$, by setting the heaviness parameter of the $\ell_2$ algorithm to $\epsilon/n^{1/2-1/p}$.
The space needed to find $\ell_p$ heavy hitters with a \countsketch is therefore $O(\epsilon^{-2}n^{1-2/p}\log n)$ which is known to be optimal~\cite{li2013tight}.
We conclude that the $\ell_2$ guarantee leads to the best space-to-frequency-threshold ratio among all $p>0$.

It is worth pointing out that both the \countmin sketch and \countsketch are randomized algorithms, and with small probability $1/n^c$ (for a user specified constant $c>0$), they can fail to achieve their stated guarantees. The work \cite{JowhariST11} also showed that the \countsketch algorithm is optimal: they showed that {\em any} algorithm, even in the strict turnstile model, solving $\ell_p$ heavy hitters even with $1/3$ failure probability must use $\Omega(\eps^{-p}\log n)$ memory.

The reader may also recall the \pickdrop algorithm of \cite{braverman2013approximating} for finding $\ell_p$ heavy hitters, $p\geq 3$, in insertion-only streams.
\pickdrop uses $O(n^{1-2/p})$ words, so it's natural to wonder whether the same approach would work for $\ell_2$ heavy hitters in $O(1)$ memory.
However, \pickdrop breaks down in multiple, fundamental ways that seem to prevent any attempt to repair it, as we describe in 
Appendix~\ref{app: pick and drop}. 
In particular, for certain streams it has only polynomially small probability to correctly identify an $\ell_2$ heavy hitter.

Of note is that the \misrag and other algorithms in the insertion-only model solve $\ell_1$ heavy hitters using (optimal) $O(1/\eps)$ memory, whereas the \countmin and \countsketch algorithms use a larger $\Theta(\eps^{-1}\log n)$ memory in the strict turnstile model, which is optimal in that model. Thus there is a gap of $\log n$ between the space complexities of $\ell_1$ heavy hitters in the insertion-only and strict turnstile models.
\cite{BravermanCIW16} recently showed that a gap also exists for $\ell_2$ heavy hitters.

The paper~\cite{BravermanCIW16} also introduced a $(1\pm\epsilon)$-relative error $F_2$ tracking scheme based on a linear sketch that uses only $O(\log m\log\log m)$ bits of memory, for constant $\epsilon$.
This is known to be optimal when $n = (\log m)^{O(1)}$ by a lower bound of~\cite{huang2014tracking}.

\subsection{Our contributions}

We provide a new one-pass algorithm, \bptree, which in the insertion-only model solves $\ell_2$ heavy hitters and achieves the $(\eps, 1/\eps^2)$-tail guarantee. 
For any constant $\eps$ our algorithm only uses a constant $O(1)$ words of memory, which is optimal. 
This is the first optimal-space algorithm for $\ell_2$ heavy hitters in the insertion-only model for constant $\eps$.
The algorithm is described in Theorem~\ref{thm: like CountSketch}.

En route to describing \bptree and proving its correctness we describe another result that may be of independent interest.
Theorem~\ref{thm: sup bound} is a new limited randomness supremum bound for Bernoulli processes.  
Lemma~\ref{lem: F2 tracker chaining} gives a more advanced analysis of the algorithm of Alon, Matias, and Szegedy (AMS) for approximating $\|f\|_2$~\cite{AlonMS99}, showing that one can achieve the same (additive) error as the AMS algorithm \emph{at all points in the stream}, at the cost of using 8-wise independent random signs rather than 4-wise independent signs.
An alternative is described in Appendix~\ref{app: another F2} where we show that if one accepts an additional $\log1/\epsilon$ space then 4-wise independent signs suffice.
Note that~\cite{BravermanCIW16} describes an algorithm using $O(\log\log n)$ words that does $F_2$ tracking in an insertion only stream with a multiplicative error $(1\pm\epsilon)$.
The multiplicative guarantee is stronger, albeit with more space for the algorithm, but the result can be recovered as a corollary to our additive $F_2$ tracking theorem, which has a much simplified algorithm and analysis compared to~\cite{BravermanCIW16}.

After some preliminaries, Section~\ref{sec: algo} presents both algorithms and their analyses. 
The description of \bptree is split into three parts.
Section~\ref{sec: chaining} states and proves the chaining inequality.
Section~\ref{sec: experiments} presents the results of some numerical experiments.

\subsection{Overview of approach}\label{sec: overview of approach}

Here we describe the intuition for our heavy hitters algorithm in the case of a single heavy hitter $H\in[n]$ such that $f_H^2\geq\frac{9}{10}\|f\|_2^2$. The reduction from multiple heavy hitters to this case is standard. 
Suppose also for this discussion we knew a constant factor approximation to $F_2 \coloneqq \|f\|_2^2$. 
Our algorithm and its analysis use several of the techniques developed in~\cite{BravermanCIW16}.
We briefly review that algorithm for comparison.

Both \countsieve and \bptree share the same basic building block, which is a subroutine that tries to identify one bit of information about the identity of $H$.
The one-bit subroutine hashes the elements of the stream into two buckets, computes one \emph{Bernoulli process} in each bucket, and then compares the two values.
The Bernoulli process is just the inner product of the frequency vector with a vector of Rademacher (i.e., uniform $\pm1$) random variables.
The hope is that the Bernoulli process in the bucket with $H$ grows faster than the other one, so the larger of the two processes reveals which bucket contains $H$.
In order to prove that the process with $H$ grows faster, \cite{BravermanCIW16} introduce a chaining inequality for insertion-only streams that bounds the supremum of the Bernoulli processes over all times.
The one-bit subroutine essentially gives us a test that $H$ will pass with probability, say, at least $9/10$ and that any other item passes with probability at most $6/10$.
The high-level strategy of both algorithms is to repeat this test sequentially over the stream.

\countsieve uses the one-bit subroutine in a two part strategy to identify $\ell_2$ heavy hitters with $O(\log\log n)$ memory.  
The two parts are (1) amplify the heavy hitter so $f_H\geq (1-\frac{1}{\poly(\log n)})\|f\|_2$ and (2) identify $H$ with independent repetitions of the one-bit subroutine.
Part (1) winnows the stream from, potentially, $n$ distinct elements to at most $n/\poly(\log n)$ elements.
The heavy hitter remains and, furthermore, we get $f_H\geq (1-\frac{1}{\poly(\log n)})\|f\|_2$ because many of the other elements are removed.
\countsieve accomplishes this by running $\Theta(\log\log n)$ independent copies of the one-bit subroutine in parallel, and discarding elements that do not pass a super-majority of the tests.
A standard Chernoff bound implies that only $n/2^{O(\log\log n)}=n/\poly(\log n)$ items survive.
Part (2) of the strategy identifies $\Theta(\log n)$ `break-points' where $\|f\|_2$ of the winnowed stream increases by approximately a $(1+1/\log n)$ factor from one break-point to the next.
Because $H$ already accounts for nearly all of the value of $\|f\|_2$ it is still a heavy hitter within each of the $\Theta(\log n)$ intervals.
\countsieve learns one bit of the identity of $H$ on each interval by running the one-bit subroutine.
After all $\Theta(\log n)$ intervals are completed the identity of $H$ is known.
 
\bptree merges the two parts of the above strategy.  
As above, the algorithm runs a series of $\Theta(\log n)$ rounds where the goal of each round is to learn one bit of the identity of $H$. 
The difference from \countsieve is that \bptree discards more items after every round, then recurses on learning the remaining bits.
As the algorithm proceeds, it discards more and more items and $H$ becomes heavier and heavier in the stream. 
This is reminiscent of work on adaptive compressed sensing \cite{IndykPW11}, but here we are able to do everything in a single pass given the insertion-only property of the stream. 
Given that the heavy hitter is even heavier, it allows us to weaken our requirement on the two counters at the next level in the recursion tree: we now allow their suprema to deviate even further from their expectation, and this is precisely what saves us from having to worry that one of the $O(\log n)$ Bernoulli processes that we encounter while walking down the tree will have a supremum which is too large and cause us to follow the wrong path. 
The fact that the heavy hitter is even heavier also allows us to ``use up'' even fewer updates to the heavy hitter in the next level of the tree, so that overall we have enough updates to the heavy hitter to walk to the bottom of the tree. 

\section{Preliminaries}

An insertion only stream is a list of items $p_1,\ldots,p_m\in[n]$.
The frequency of $j$ at time $t$ is $f_j^{(t)}\coloneqq\#\{i\leq t\mid p_i=j\}$, $f^{(t)}\in\Z_{\geq0}^n$ is called the \emph{frequency vector}, we denote $f\coloneqq f^{(m)}$, $F_2^{(t)} = \sum_{i=1}^n (f_i^{(t)})^2$, $F_2=\sum_{j=1}^nf_j^2$, and $F_0 =\#\{j\in[n]: f_j>0\}$.
An item $H\in[n]$ is a $\alpha$-heavy hitter\footnote{This definition is in a slightly different form from the one given in the introduction, but this form is more convenient when $f_H^2$ is very close to $F_2$.} if $f_H^2\geq \alpha^2 \sum_{j\neq H}f_j^2 = \alpha^2(F_2-f_H^2)$.
For $W\subseteq[n]$, denote by $f^{(t)}(W)\in\Z_{\geq0}^n$ the frequency vector at time $t$ of the stream restricted to the items in $W$, that is, a copy of $f^{(t)}$ with the $i$th coordinate replaced by 0 for every $i\notin W$.
We also define $f^{(s:t)}(W)\coloneqq f^{(t)}(W) - f^{(s)}(W)$ and $F_2(W) = \sum_{j\in W}f_j^2$.
In a case where the stream is semi-infinite (it has no defined end) $m$ should be taken to refer to the time of a query of interest.
When no time is specified, quantities like $F_2$ and $f$ refer to the same query time $m$.

Our algorithms make use of 2-universal (pairwise independent), 4-wise independent, and $8$-wise independent hash functions.
We will commonly denote such a function $h:[n]\to[p]$ where $p$ is a prime larger than $n$, or we may use $h:[n]\to\{0,1\}^R$, which may be taken to mean a function of the first type for some prime $p\in[2^{R-1},2^{R})$.
We use $h(x)_i$ to denote the $i$th bit, with the first bit most significant (big-endian).
A crucial step in our algorithm involves comparing the bits of two values $a,b\in[p]$.
Notice that, for any $0\leq r \leq \lceil\log_2 p\rceil$, we have $a_i=b_i$, for all $1\leq i\leq r$, if and only if $|a-b|<2^{\lceil\log_2p\rceil-r}$.
Therefore, the test $a_i=b_i$, for all $1\leq i\leq r$, can be performed with a constant number of operations.

We will use, as a subroutine, and also compare our algorithm against \countsketch~\cite{CharikarCF04}.
To understand our results, one needs to know that \countsketch has two parameters, which determine the number of ``buckets'' and ``repetitions'' or ``rows'' in the table it stores.
The authors of~\cite{CharikarCF04} denote these parameters $b$ and $r$, respectively. 
The algorithm selects, independently, $r$ functions $h_1,\ldots,h_r$ from a 2-universal family with domain $[n]$ and range $[b]$ and $r$ functions $\sigma_1,\ldots,\sigma_r$ from a 2-universal family with domain $[n]$ and range $\{-1,1\}$.
\countsketch stores the value $\sum_{j:h_t(j)=i} \sigma(j)f_j$, in cell $(t,i)\in[r]\times[b]$ of the table.

In our algorithm we use the notation $\bfone(A)$ denote the indicator function of the event $A$.
Namely, $\bfone(A)=1$ if $A$ is true and 0 otherwise.
We sometimes use $x\lesssim y$ to denote $x=O(y)$.

\section{Algorithm and analysis}\label{sec: algo}

We will now describe and analyze the main algorithm, which is broken into several subroutines.
The most important subroutine is \hho, Algorithm~\ref{algo: HH}, which finds a single $O(1)$-heavy hitter assuming we have an estimate $\sigma$ of $\sqrt{F_2}$ such that $\sqrt{F_2}\leq\sigma\leq2\sqrt{F_2}$.
Next is \hht, Algorithm~\ref{algo: guess}, which  removes the assumption entailing $\sigma$ by repeatedly ``guessing'' values for $\sigma$ and restarting \hho as more items arrive.
The guessing in \hht is where we need $F_2$ tracking.
Finally, a well known reduction from finding $\epsilon$-heavy hitters to finding a single $O(1)$-heavy hitter leads us to the main heavy hitters algorithm \bptree, which is formally described in Theorem~\ref{thm: like CountSketch}.

This section is organized as follows. 
The first subsection gives an overview of the algorithm and its analysis. 
Section~\ref{sec: chaining} proves the bound on the expected supremum of the Bernoulli processes used by the algorithm.
Section~\ref{sec: HH1} uses the supremum bound to prove the correctness of the main subroutine~\hho.
Section~\ref{sec: ams sup} establishes the correctness of $F_2$ tracking.
The subroutine~\hht, which makes use of the $F_2$ tracker, and the complete algorithm~\bptree are described and analyzed in Section~\ref{sec: main algo proof}.

\subsection{Description of the algorithm}
\begin{algorithm}[t]
  \begin{algorithmic}
    \Procedure{\hho}{$\sigma$, $p_1$, $p_2$,\,\ldots,$p_m$}
    \State $R\gets 3\lfloor\log_2(\min\{n,\sigma^2\}+1)\rfloor$
    \State Initialize $b = b_1b_2\cdots b_R=0\in[2^R]$  
    \State Sample $h:[n]\to\{0,1\}^R\sim 2\text{-wise indep.\ family}$ 
    \State Sample $Z\in\{-1,1\}^n$ $4$-wise indep. 
    \State $X_0,X_1\gets 0$ 
    \State $r\gets 1$, $H\gets -1$
    \State $\beta\gets 3/4$, $c\gets 1/32$
    \For {$t=1,2,\ldots,m$ and $r<R$}
      \If {$h(p_t)_i = b_i$, for all $i\leq r-1$} 
        \State $H\gets p_t$
        \State $X_{h(p_t)_r}\gets X_{h(p_t)_r} + Z_{p_t}$ 
        \If {$|X_{0}+X_1|\geq\supbound{\sigma}{r}$}
          \State Record one bit $b_{r}\gets \bfone(|X_1|>|X_0|)$
          \State Refresh $(Z_i)_{i=1}^n$, $X_0,X_1\gets 0$
          \State $r\gets r+1$
        \EndIf
      \EndIf
    \EndFor
    \State \Return $H$ 
    \EndProcedure
  \end{algorithmic}
  \caption{Identify a heavy hitter.}\label{algo: HH}
\end{algorithm}

The crux of the problem is to identify one $K$-heavy hitter for some constant~$K$.
\hho, which we will soon describe in detail, accomplishes that task given a suitable approximation $\sigma$ to $\sqrt{F_2}$.
\hht, which removes the assumption of knowing an approximation $\sigma \in [\sqrt{F}_2, 2\sqrt{F}_2]$, is described in Algorithm~\ref{algo: guess}. 
The reduction from finding all $\epsilon$-heavy hitters to finding a single $K$-heavy hitter is standard from the techniques of \countsketch; it is described in Theorem~\ref{thm: like CountSketch}.

\hho, Algorithm~\ref{algo: HH}, begins with randomizing the item labels by replacing them with pairwise independent values on $R=\Theta(\log \min\{n,\sigma^2\})$ bits, via the hash function $h$.
Since $n$ and $\sigma^2\geq F_2$ are both upper bounds for the number of distinct items in the stream, $R$ can  be chosen so that every item receives a distinct hash value.

Once the labels are randomized, \hho proceeds in rounds wherein one bit of the randomized label of the heavy hitter is determined during each round.
It completes all of the rounds and outputs the heavy hitter's identity within one pass over the stream.
As the rounds proceed, items are discarded from the stream.
The remaining items are called \emph{active}.
When the algorithm discards an item it will never reconsider it (unless the algorithm is restarted).
In each round, it creates two Bernoulli processes $X_0$ and $X_1$.
In the $r$th round, $X_0$ will be determined by the active items whose randomized labels have their $r$th bit equal to 0, and $X_1$ determined by those with $r$th bit 1.
Let $f_0^{(t)},f_1^{(t)}\in\Z_{\geq 0}^n$ be the frequency vectors of the active items in each category, respectively, initialized to 0 at the beginning of the round.
Then the Bernoulli processes are $X_0^{(t)} = \langle Z,f_0^{(t)}\rangle$ and $X_1^{(t)} = \langle Z,f_1^{(t)}\rangle$, where $Z$ is a vector of $4$-wise independent Rademacher random variables (i.e.\ the $Z_i$ are marginally uniformly random in $\{-1,1\}$).

The $r$th round ends when $|X_0+X_1|>\supbound{\sigma}{r-1}$, for specified\footnote{$c=1/32$ and $\beta=3/4$ would suffice.} constants $c$ and $\beta$.
At this point, the algorithm compares the values $|X_0|$ and $|X_1|$ and records the identity of the larger one as the $r$th bit of the candidate heavy hitter.
All those items with $r$th bit corresponding to the smaller counter are discarded (made inactive), and the next round is started.

After $R$ rounds are completed, if there is a heavy hitter then its randomized label will be known with good probability.
The identity of the item can be determined by selecting an item in the stream that passes all of the $R$ bit-wise tests, or by inverting the hash function used for the label.
If it is a $K$-heavy hitter, for a sufficiently large $K=O(1)$, then the algorithm will find it with probability at least $2/3$.
The algorithm is formally presented in Algorithm~\ref{algo: HH}.

The most important technical component of the analysis is the following theorem, which is proved in Section~\ref{sec: chaining}.
Theorem~\ref{thm: sup bound} gives us control of the evolution of $|X_0|$ and $|X_1|$ so we can be sure that the larger of the two identifies a bit of $H$.
\newcommand{\chainingtheoremtext}{If $Z\in \{-1,1\}^{n}$ is drawn from a $4$-wise independent family, $\E \sup_{t}|\langle f^{(t)}, Z\rangle|< 23\cdot \|f^{(m)}\|_2$.}
\begin{theorem}\label{thm: sup bound}
\chainingtheoremtext
\end{theorem}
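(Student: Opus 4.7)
The plan is a Dudley-style chaining argument tailored to the insertion-only setting, using fourth-moment estimates in place of subgaussian tails so that only 4-wise independence of $Z$ is needed. Write $\sigma=\|f^{(m)}\|_2$ and $Y_t=\langle f^{(t)},Z\rangle$ (so $Y_0=0$); the goal is $\E\sup_t|Y_t|<23\sigma$. The pivotal observation, which is specific to insertion-only streams, is the variance monotonicity
$$\|f^{(t)}-f^{(s)}\|_2^2\leq\|f^{(t)}\|_2^2-\|f^{(s)}\|_2^2\qquad(s\leq t),$$
a consequence of $(f_i^{(t)})^2-(f_i^{(s)})^2=(f_i^{(t)}-f_i^{(s)})(f_i^{(t)}+f_i^{(s)})\geq(f_i^{(t)}-f_i^{(s)})^2$ together with non-negativity. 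This lets one convert bounds on the growth of $\|f^{(t)}\|_2^2$ into bounds on the $\ell_2$-distance between frequency vectors at chain-adjacent times.

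I would construct dyadic nets driven by $F_2$-level crossings. Fix $J$ with $2^{2J}\geq m$; set $T_0=\{0\}$ and $T_J=\{0,1,\ldots,m\}$, and for $1\leq j<J$ let
$$T_j=\{t_j(k):k=0,1,\ldots,2^{2j}\},\quad t_j(k)=\min\{t:\|f^{(t)}\|_2^2\geq k\cdot 2^{-2j}\sigma^2\},$$
so that $|T_j|\leq 2^{2j}+1$ and $T_{j-1}\subseteq T_j$ (the level-$j$ thresholds refine those of level $j-1$). Let $\pi_j(t)$ be the largest element of $T_j$ with $\pi_j(t)\leq t$. Monotonicity then forces
$$\|f^{(t)}-f^{(\pi_j(t))}\|_2^2\leq\|f^{(t)}\|_2^2-\|f^{(\pi_j(t))}\|_2^2<2^{-2j}\sigma^2.$$
Since $\pi_0(t)=0$ and $\pi_J(t)=t$, there is the telescoping identity $Y_t=\sum_{j=1}^J(Y_{\pi_j(t)}-Y_{\pi_{j-1}(t)})$, and hence
$$\sup_t|Y_t|\leq\sum_{j=1}^J\max_{s\in T_j}\bigl|Y_s-Y_{\pi_{j-1}(s)}\bigr|.$$

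Each level is controlled by a fourth-moment union bound. A direct expansion shows $\E\langle Z,v\rangle^4\leq 3\|v\|_2^4$ for any $v\in\R^n$ and any 4-wise independent Rademacher vector $Z$. Applied to $v=f^{(s)}-f^{(\pi_{j-1}(s))}$, whose squared norm is at most $2^{-2(j-1)}\sigma^2=4\cdot 2^{-2j}\sigma^2$, this yields $\E|Y_s-Y_{\pi_{j-1}(s)}|^4\leq 48\cdot 2^{-4j}\sigma^4$. Combining $\E\max_s X_s\leq(\sum_s\E X_s^4)^{1/4}$ with $|T_j|\leq 2^{2j+1}$ gives
$$\E\max_{s\in T_j}\bigl|Y_s-Y_{\pi_{j-1}(s)}\bigr|\leq\bigl(96\cdot 2^{-2j}\sigma^4\bigr)^{1/4}=96^{1/4}\cdot 2^{-j/2}\sigma,$$
and summing the geometric series $\sum_{j\geq 1}2^{-j/2}=\sqrt{2}+1$ bounds $\E\sup_t|Y_t|$ by $96^{1/4}(\sqrt{2}+1)\sigma\approx 7.6\sigma$, comfortably below $23\sigma$.

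The main obstacle is choosing the correct nets: in a general turnstile stream the variance monotonicity fails and the chain collapses, so the insertion-only property is doing real work in guaranteeing that chain-adjacent frequency vectors are close in $\ell_2$. Once the nets are defined via $F_2$-level crossings the rest is routine chaining bookkeeping, with one minor subtlety at the terminal level $J$: to eliminate any residual $Y_t-Y_{\pi_J(t)}$ we must include every integer time in $T_J$, and one should check that even though $|T_J|$ can be as large as $m+1$, the per-pair variance bound at level $J$ has shrunk enough that the level-$J$ contribution remains geometrically dominated in the sum.
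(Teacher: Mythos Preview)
Your proposal is correct and follows essentially the same route as the paper: a Dudley-style chaining with $\ell_2$ nets of size $O(4^{j})$ at scale $2^{-j}$, built from the insertion-only structure, combined with a fourth-moment (Khintchine) bound that requires only $4$-wise independence. The paper constructs its nets greedily by successively picking the first vector more than $\varepsilon$ away in $\ell_2$, whereas you parameterize the same nets via $F_2$-level crossings; both size bounds rest on the identical variance-monotonicity inequality $\|f^{(t)}-f^{(s)}\|_2^2\le\|f^{(t)}\|_2^2-\|f^{(s)}\|_2^2$. Your bookkeeping is a bit tidier (the nests $T_{j-1}\subseteq T_j$ make the telescoping and the count of link pairs immediate, avoiding the paper's separate argument that $|\{x^k-x^{k-1}:x\in T\}|\le 2|S_k|$), and your direct use of $\E\max_s|X_s|\le(\sum_s\E X_s^4)^{1/4}$ in place of a tail integral yields the sharper constant $96^{1/4}(\sqrt{2}+1)\approx 7.6$ against the paper's $23$.
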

We will use $C_*<23$ to denote the optimal constant in Theorem~\ref{thm: sup bound}.

The key idea behind the algorithm is that as we learn bits of the heavy hitter and discard other items, it becomes easier to learn additional bits of the heavy hitter's identity. 
With fewer items in the stream as the algorithm proceeds, the heavy hitter accounts for a larger and larger fraction of the remaining stream as time goes on.
As the heavy hitter gets heavier the discovery of the bits of its identity can be sped up.
When the stream does not contain a heavy hitter this acceleration of the rounds might not happen, though
that is not a problem because when there is no heavy hitter the algorithm is not required to return any output.
Early rounds will each use a constant fraction of the updates to the heavy hitter, but the algorithm will be able to finish all $R=\Theta(\log n)$ rounds because of the speed-up.
The parameter $\beta$ controls the speed-up of the rounds.
Any value of $\beta\in(\frac{1}{2},1)$ can be made to work (possibly with an adjustment to $c$), but the precise value affects the heaviness requirement and the failure probability.

\subsection{Proof of Theorem~\ref{thm: sup bound}}\label{sec: chaining}

Let $Z\in \{-1, 1\}^n$ be random. We are interested in bounding $\E \sup_{t}|\langle f^{(t)}, Z\rangle|$.
It was shown in \cite{BravermanCIW16} that if each entry in $Z$ is drawn independently and uniformly from $\{-1,1\}$, then $\E \sup_{t}|\langle f^{(t)}, Z\rangle|\lesssim \|f^{(m)}\|_2$. 
We show that this inequality still holds if the entries of $Z$ are drawn from a $4$-wise independent family, which is used both in our analyses of \hho and our $F_2$ tracking algorithm. 
The following lemma is implied by \cite{Haagerup82}.
\begin{lemma}[Khintchine's inequality]
Let $Z\in\{-1,1\}^n$ be chosen uniformly at random, and $x\in\mathbb{R}^n$ a fixed vector. Then for any even integer $p$, $\E\inprod{Z, x}^p \le \sqrt{p}^p \cdot \|x\|_2^p$.
\end{lemma}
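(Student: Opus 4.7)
The plan is to compare the multinomial expansions of $\E\inprod{Z,x}^p$ and of $\|x\|_2^p=(\sum_i x_i^2)^{p/2}$ coefficient-by-coefficient, reducing the claim to two elementary arithmetic inequalities.

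First I would expand $\inprod{Z,x}^p$ by the multinomial theorem and take expectations. Since the $Z_i$ are independent and uniform on $\{-1,1\}$, $\E Z_i^{k}=1$ when $k$ is even and $0$ when $k$ is odd, so only monomials in which every exponent is even survive. Writing $k_i=2\ell_i$ with $\sum_i \ell_i = p/2$, I get
\[
\E\inprod{Z,x}^p \;=\; \sum_{\sum_i \ell_i = p/2} \frac{p!}{\prod_i (2\ell_i)!}\,\prod_i x_i^{2\ell_i},
\]
while the usual multinomial expansion yields
\[
\|x\|_2^p \;=\; \sum_{\sum_i \ell_i = p/2} \frac{(p/2)!}{\prod_i \ell_i!}\,\prod_i x_i^{2\ell_i}.
\]
The two sums are indexed by the same tuples $(\ell_1,\ldots,\ell_n)$, and each monomial $\prod_i x_i^{2\ell_i}$ is nonnegative, so a coefficient-wise bound transfers directly into the desired moment inequality.

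Second, I would prove the coefficient-wise inequality $\frac{p!}{\prod_i(2\ell_i)!}\le p^{p/2}\cdot\frac{(p/2)!}{\prod_i \ell_i!}$. Rearranging, this becomes
\[
\frac{p!}{(p/2)!} \cdot \prod_i \frac{\ell_i!}{(2\ell_i)!} \;\le\; p^{p/2}.
\]
Here I would invoke two elementary facts: (i) $p!/(p/2)!=\prod_{j=p/2+1}^{p} j$ is a product of $p/2$ integers each at most $p$, hence is at most $p^{p/2}$; and (ii) for every $\ell\ge 0$, $(2\ell)!/\ell!$ is a product of $\ell$ consecutive integers starting from $\ell+1$ and is therefore at least $1$, so each factor $\ell_i!/(2\ell_i)!$ is at most $1$. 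Multiplying these observations gives the required bound.

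The only subtlety is confirming that coefficient-wise dominance really does suffice rather than having to compare the full sums; this is free once one observes that the two expansions are indexed by the same set of tuples and that $\prod_i x_i^{2\ell_i}\ge 0$ for every $x\in\R^n$. The optimal constant obtained by Haagerup would require a much more delicate analysis, but the elementary combinatorial route above is enough to establish the stated $\sqrt{p}^p$ bound, which is all that is needed downstream.
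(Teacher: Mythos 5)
Your proof is correct. The paper itself gives no argument for this lemma: it simply states that the bound is implied by the sharp Khintchine constants of Haagerup \cite{Haagerup82}, so your route is genuinely different in that it is a self-contained elementary derivation. The multinomial expansion of $\E\inprod{Z,x}^p$ is right (only even-exponent monomials survive by independence and $\E Z_i^{2\ell}=1$), the expansion of $\|x\|_2^p=(\sum_i x_i^2)^{p/2}$ is right, and your coefficient-wise inequality $\frac{p!}{(p/2)!}\prod_i\frac{\ell_i!}{(2\ell_i)!}\le p^{p/2}$ follows exactly as you say, since $p!/(p/2)!$ is a product of $p/2$ factors each at most $p$ and each ratio $\ell_i!/(2\ell_i)!$ is at most $1$; nonnegativity of the monomials $\prod_i x_i^{2\ell_i}$ then legitimately transfers the termwise bound to the sums. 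What the citation buys the paper is the option of sharper constants (the same expansion with the tighter bound $(2\ell_i)!\ge 2^{\ell_i}\ell_i!$ gives the classical constant $\frac{p!}{2^{p/2}(p/2)!}\le\sqrt{p}^p$, and Haagerup's constants are optimal), which could slightly improve the constant $C_*<23$ in Theorem~\ref{thm: sup bound}; what your argument buys is a two-line, fully elementary proof of exactly the stated $\sqrt{p}^p$ bound, which is all that the chaining argument downstream uses.
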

\begin{proof}[of Theorem~\ref{thm: sup bound}]
To simplify notation, we first normalize the vectors in $\{f^{(0)}=0, f^{(1)}, \ldots, f^{(m)}\}$ (i.e., divide by $\|f^{(m)}\|_2$). Denote the set of these normalized vectors by $T=\{v_0,\ldots,v_m\}$, where $\|v_m\|_2=1$. Recall that an $\eps$-net of some set of points $T$ under some metric $d$ is a set of point $T'$ such that for each $t\in T$, there exists some $t'\in T'$ such that $d(t, t') \le \eps$. For every $k\in \mathbb{N}$, we can find a $1/2^k$-net of $T$ in $\ell_2$ with size $|S_k|\le 2^{2k}$ by a greedy construction as follows.

To construct an $\varepsilon$-net for $T$, we first take $v_0$, then choose the smallest $i$ such that $\|v_i-v_0\|_2> \varepsilon$, and so on. To prove the number of elements selected is upper bounded by $1/\varepsilon^2$, let $u_0, u_1, u_2, \ldots, u_t$ denote the vectors we selected accordingly, and note that the second moments of $u_1-u_0, u_2-u_1, \ldots, u_t-u_{t-1}$ are greater than $\varepsilon^2$. Because the vectors $u_{i}-u_{i-1}$ have non-negative coordinates, $\|u_t\|_2^2$ is lower bounded by the summation of these moments, while on the other hand $\|u_t\|_2^2 \le 1$. Hence the net is of size at most $1/\varepsilon^2$.

Let $S$ be a set of vectors. Let $Z\in \{-1,1\}^n$ be drawn from a $p$-wise independent family, where $p$ is an even integer. By Markov and Khintchine's inequality,
\begin{align*}
\Pr(|\langle x,Z\rangle|>\lambda \cdot |S|^{1/p} \cdot \|x\|_2)&< \frac{\E|\langle x,Z \rangle|^p}{\lambda^p\cdot |S|\cdot \|x\|_2^p}\\
 &< \frac 1{|S|}\cdot \left(\frac{\sqrt{p}}{\lambda}\right)^p .
\end{align*}
Hence,
\begin{align*}
\E\sup_{x\in S} &|\langle x,Z\rangle | = \int_0^\infty\Pr(\sup_{x\in S} |\langle x,Z\rangle| > u)du\\
{}&=|S|^{1/p}\cdot \sup_{x\in S}\|x\|_2\cdot \\
&\qquad  \int_0^\infty \Pr(\sup_{x\in S}|\langle x,Z\rangle| > \lambda\cdot |S|^{1/p}\cdot \sup_{x\in S}\|x\|_2) d\lambda\\
{}&< |S|^{1/p}\cdot \sup_{x\in S}\|x\|_2 \cdot \left(\sqrt{p} + \int_{\sqrt{p}}^\infty \left(\frac {\sqrt{p}}{\lambda}\right)^pd\lambda\right)\\
&\qquad \text{ (union bound)}\\
{}&= |S|^{1/p}\cdot \sup_{x\in S}\|x\|_2 \cdot \sqrt{p}\cdot \left(1 + \frac{1}{p-1}\right) 
\end{align*}

Now we apply a similar chaining argument as in the proof of Dudley's inequality (cf.\ \cite{Dudley67}). For $x\in T$, let $x^k$ denote the closest point to $x$ in $S_k$. Then $\|x^k-x^{k-1}\|_2\le \|x^k-x\|_2+\|x-x^{k-1}\|_2\le (1/2^k)+(1/2^{k-1})$.  Note that if for some $x\in T$ one has that $x_k = v_t$ is the closest vector to $x$ in $T_k$ (under $\ell_2$), then the closest vector $x_{k-1}$ to $x$ in $T_{k-1}$ must either be the frequency vector $v_{t'}$ in $T_{k-1}$ such that $t'$ is the smallest timestamp  after $t$ of a vector in $T_{k-1}$, or the largest timestamp before $t$ in $T_{k-1}$. Thus the size of $\{x^k-x^{k-1}|x\in T\}$ is upper bounded by $2|S_k|\le 2^{2k+1}$, implying for $p = 4$ 
\begin{align*}
\E &\sup_{x\in T} |\langle x,Z \rangle| 
\le \sum_{k=1}^{\infty} {\E \sup |\langle x^k-x^{k-1}, Z \rangle|}\\ 
&< 3\cdot 2^{1/p}\sqrt{p}\left(1 + \frac{1}{p-1}\right) \sum_{k=1}^{\infty} {(2^{2k})^{1/p}\cdot (1/2^k)} \\
&< 23.
\end{align*}
\end{proof}

\subsection{Identifying a single heavy hitter given an approximation to $F_2$}\label{sec: HH1}

This section analyzes the subroutine \hho, which is formally presented in Algorithm~\ref{algo: HH}.
The goal of this section is to prove Lemma~\ref{lem: hho correctness}, the correctness of \hho.
We use $H\in[n]$ to stand for the identity of the most frequent item in the stream. 
It is not assumed to be a heavy hitter unless explicitly stated.

\subsubsection{Randomizing the labels}\label{sec: random labels}
The first step of \hho is to choose a hash function $h:[n]\to\{0,1\}^R$, for $R=O(\log n)$, that relabels the universe of items~$[n]$.
For each $r\geq 0$, let
\[\calH_r \coloneqq \{i\in[n]\setminus\{H\}\mid h(i)_k=h(H)_k\text{ for all }1\leq k\leq r\},\]
 and let $\bar{\calH}_r \coloneqq \calH_{r-1}\setminus\calH_r$, with $\bar{\calH}_0=\emptyset$ for convenience.
By definition, $\calH_R\subseteq\calH_{R-1}\subseteq\cdots\subseteq\calH_0=[n]\setminus\{H\}$, and, in round $r\in[R]$, our hope is that the active items are those in $\calH_{r-1}$. 

The point of randomizing the labels is as a sort of ``load balancing'' among the item labels.
The idea is that each bit of $h(H)$ partitions the active items into two roughly equal sized parts, i.e.\, $|\calH_r|\approx|\bar{\calH}_r|$ in every round $r$. 
This leads \hho to discard roughly half of the active items after each round, allowing us to make progress on learning the (hashed) identity of a heavy hitter. 
We will make use of the randomized labels in the next section, within the proof of Lemma~\ref{lem: one round sup}.

For $h$, we recommend choosing a prime $p\approx \min\{n,F_2\}^2$ and assigning the labels $h(i) = a_0+a_1i\bmod{p}$, for $a_0$ and $a_1$ randomly chosen in $\{0,1,\ldots,p-1\}$ and $a_1\neq0$.  
We can always achieve this with $R=3\log_2 (\min\{n,F_2\} +1)$, which is convenient for the upcoming analysis. This distribution on $h$ is known to be a $2$-wise independent family \cite{CarterW79}. Note computing $h(i)$ for any $i$ takes $O(1)$ time. It is also simple to invert: namely $x=a_1^{-1}(h(x)-a_0)\bmod{p}$, so $x$ can be computed quickly from $h(x)$ when $p>n$.  
Inverting requires computing the inverse of $a_1$ modulo $p$, which takes $O(\log \min\{n,F_2\})$ time via repeated squaring, however this computation can be done once, for example during initialization of the algorithm, and the result stored for all subsequent queries. 
Thus, the time to compute $a_1^{-1}\bmod{p}$ is negligible in comparison to reading the stream. 

\subsubsection{Learning the bits of the randomized label}

\begin{figure}
\begin{center}
\begin{tikzpicture}[every node/.style={draw,shape=circle,fill=black,minimum size=6pt,inner sep=0pt}]
\node[label=$\calH_0$] (a0) at (0,0) {};
\node[label=$\calH_1$] (b0) at (1.25,0.5) {};
\node[label=below:{$\bar{\calH}_1$}] (b1) at (1.25,-0.5) {};
\node[label=$\bar{\calH}_2$] (c0) at (2.5,1) {};
\node[label=below:{$\calH_2$}] (c1) at (2.5,0) {};
\node[label=$\calH_{\tiny R-1}$] (d0) at (4.5,0) {};
\node[label=right:{$\calH_{R}$}] (e0) at (5.75,0.5) {};
\node[label=right:{$\bar{\calH}_R$}] (e1) at (5.75,-0.5) {};
\draw (a0) -- node[inner sep=1pt,above,draw=none,fill=none] {0} ++ (b0);
\draw (a0) -- node[inner sep=1pt,below,draw=none,fill=none] {1} ++ (b1);
\draw (b0) -- node[inner sep=1pt,above,draw=none,fill=none] {0} ++ (c0);
\draw (b0) -- node[inner sep=1pt,below,draw=none,fill=none] {1} ++ (c1);
\draw[dashed] (c1) -- (3.125,0.25) node [inner sep=1pt,near end,above,draw=none,fill=none] {0};
\draw[dashed] (c1) -- (3.125,-0.25) node [inner sep=1pt,near end,below,draw=none,fill=none] {1};
\draw[dashed] (d0) -- (3.875,-0.25) node [inner sep=1pt,near end,above,draw=none,fill=none] {0};
\draw (d0) -- node[inner sep=1pt,above,draw=none,fill=none] {0} ++ (e0);
\draw (d0) -- node[inner sep=1pt,below,draw=none,fill=none] {1} ++ (e1);
\end{tikzpicture}
\end{center}
\caption{In this example of the execution of \hho, the randomized label $h(H)$ of the heavy hitter $H$ begins with $01$ and ends with $00$.
Each node in the tree corresponds to a round of \hho, which must follow the path from $\calH_0$ to $\calH_R$ for the output to be correct.}\label{fig: tree}
\end{figure}

After randomizing the labels \hho proceeds with the series of $R$ rounds to identify the (randomized) label $h(H)$ of the heavy hitter $H$.
The sequence of rounds is depicted in Figure~\ref{fig: tree}.
Each node in the tree corresponds to one round of \hho.
The algorithm traverses the tree from left to right as the rounds progress.  
Correctness of \hho means it traverses the path from $\calH_0$ to $\calH_R$.
The 0/1 labels on the path leading to $\calH_R$ are the bits of $h(H)$, and when $R=O(\log n)$ is sufficiently large we get $\calH_R=\{H\}$ with high probability.
Thus the algorithm correctly identifies $h(H)$, from which it can determine $H$ with the method discussed in Section~\ref{sec: random labels}.

Now let us focus on one round and suppose that $H$ is a $K$-heavy hitter, for some large constant $K$.
Suppose the algorithm is in round $r\geq 1$, and recall that the goal of the round is to learn the $r$th bit of $h(H)$.
Our hope is that the active items are those in $\calH_{r-1}$ (otherwise the algorithm will fail), which means that the algorithm has correctly discovered the first $r-1$ bits of $h(H)$.
The general idea is that \hho partitions $\calH_{r-1}\cup\{H\}$ into $\calH_r\cup\{H\}$ and $\bar{\calH}_r$, creates a Bernoulli process for each of those sets of items, and compares the values of the two Bernoulli processes to discern $h(H)_r$.
Suppose that the active items are indeed $\calH_{r-1}$ and, for the sake of discussion, that the $r$th bit of the heavy hitter's label is $h(H)_r=0$.
Then the Bernoulli processes $X_0$ and $X_1$, defined in Algorithm~\ref{algo: HH}, have the following form
\[X_0(t) = Z_Hf_H^{(s:t)} + \sum_{i\in \calH_r}Z_if_i^{(s:t)},\quad X_1(t) =\sum_{i\in \bar{\calH}_r}Z_if_i^{(s:t)},\]
where $s\leq m$ is the time of the last update to round $r-1$ and $t$ is the current time.
To simplify things a little bit we adopt the notation $f(S)$ for the frequency vector restricted to only items in $S$.
For example, in the equations above become $X_0 = Z_Hf_H^{(s:t)}+\langle Z,f^{(s:t)}(\calH_r)\rangle$ and $X_1 = \langle Z,f^{(s:t)}(\bar{\calH}_r)\rangle$.

The round is a success if $|X_0|>|X_1|$ (because we assumed $h(H)_r=0$) at the first time when $|X_0+X_1|>c\sigma\beta^r$.
When that threshold is crossed, we must have $|X_0|\geq c\sigma\beta^r/2$, $|X_1|\geq c\sigma\beta^r/2$, or both.
The way we will ensure that the round is a success is by establishing the following bound on the Bernoulli process $X_1$:  $|X_1| = |\langle Z,f^{(s:t)}(\bar{\calH}_r)\rangle| < c\sigma\beta^r/2$, for all times $t\geq s$. 
Of course, the round does not end until the threshold is crossed, so we will also establish a bound on the complementary Bernoulli process $|X_0-Z_Hf_H^{(s:t)}| = |\langle Z,f^{(s:t)}(\bar{\calH}_r)\rangle| < c\sigma\beta^r/2$, at all times $t\geq s$.
When this holds we must have $|X_0+X_1|>c\sigma\beta^r$ no later than the first time $t$ where $f_H^{(s:t)}\geq 2c\sigma\beta^r$, so the round ends after at most $2c\sigma\beta^r$ updates to $H$.
In total over all of the rounds this uses up  no more than $\sum_{r\geq0}2c\sigma\beta^r<f_H$ updates to $H$, where we have used $\sigma<2\sqrt{F_2}<3f_H$ by our assumption that $H$ is a heavy hitter. 
In truth, both of those inequalities fail to hold with some probability, but the failure probability is $O(1/\beta^r2^{r/2})$ so the probability that the algorithm succeeds will turn out to be $1-\sum_{r\geq0}O(1/\beta^r2^{r/2})>2/3$.

The next lemma establishes the control on the Bernoulli processes that we have just described (compare the events $E_i$ with the previous paragraph).  
We will use it later with $K\approx\beta^r$ so that, while the rounds progress, the upper bounds on the process maxima and the failure probabilities both decrease geometrically as desired.
This means that the lengths of the rounds decreases geometrically and the latter means that a union bound suffices to guarantee that all of the events $E_i$ occur.
In our notation $F_2-f_H^2=F_2(\calH_0)$.

\begin{lemma}\label{lem: one round sup}
For any $r\in\{0,1,\ldots,R\}$ and $K>0$, the events 
\[E_{2r-1}\coloneqq\left\{\max_{s,t\leq m} |\left\langle Z, f^{(s:t)}(\bar{\calH}_{r})\right\rangle| \leq K F_2(\calH_0)^{1/2}\right\}\] and 
\[E_{2r}\coloneqq\left\{\max_{s,t\leq m} |\left\langle Z, f^{(s:t)}(\calH_r)\right\rangle| \leq K F_2(\calH_0)^{1/2}\right\}\]
have respective probabilities at least $1-\frac{4C_* }{K 2^{r/2}}$ of occurring, where $C_* < 23$ is the constant from Theorem~\ref{thm: sup bound}.
\end{lemma}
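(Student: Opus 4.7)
The plan is to reduce both events to a supremum over a single time index, then condition on the hash function $h$ to separate the randomness of $h$ from that of $Z$ and apply Theorem~\ref{thm: sup bound} in conjunction with Markov's inequality. Observe first that for any fixed subset $S \subseteq [n]$ and any $s,t$, we have $f^{(s:t)}(S) = f^{(t)}(S) - f^{(s)}(S)$, so by the triangle inequality
\[
\max_{s,t\leq m} |\langle Z, f^{(s:t)}(S)\rangle| \leq 2\max_{t\leq m} |\langle Z, f^{(t)}(S)\rangle|.
\]
Conditional on $h$, the restricted stream $(f^{(t)}(S))_{t\leq m}$ is itself an insertion-only stream, so Theorem~\ref{thm: sup bound} gives $\E_Z \max_t |\langle Z, f^{(t)}(S)\rangle| < C_* F_2(S)^{1/2}$. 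Combining with the reduction above and Markov's inequality yields, for either $S = \calH_r$ or $S = \bar{\calH}_r$,
\[
\Pr\left(\max_{s,t} |\langle Z, f^{(s:t)}(S)\rangle| > K F_2(\calH_0)^{1/2} \,\Big|\, h\right) \leq \frac{2 C_* F_2(S)^{1/2}}{K F_2(\calH_0)^{1/2}}.
\]

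The second step is to take expectation over $h$ and exploit its pairwise independence. Since $h$ is $2$-wise independent, for any $i \neq H$ we have $\Pr(i \in \calH_r) = \Pr(h(i)_{1:r} = h(H)_{1:r}) = 2^{-r}$, so
\[
\E_h F_2(\calH_r) = \sum_{i\neq H} f_i^2 \Pr(i \in \calH_r) = F_2(\calH_0)\cdot 2^{-r},
\]
and since $\bar{\calH}_r \subseteq \calH_{r-1}$, also $\E_h F_2(\bar{\calH}_r) \leq F_2(\calH_0)\cdot 2^{-(r-1)}$. Applying Jensen's inequality to pull the square root out of the expectation gives $\E_h F_2(\calH_r)^{1/2} \leq F_2(\calH_0)^{1/2}\cdot 2^{-r/2}$ and $\E_h F_2(\bar{\calH}_r)^{1/2} \leq F_2(\calH_0)^{1/2}\cdot 2^{-(r-1)/2}$.

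Plugging these bounds back in, we obtain $\Pr(E_{2r}^c) \leq 2C_*/(K 2^{r/2})$ and $\Pr(E_{2r-1}^c) \leq 2\sqrt{2}\, C_*/(K 2^{r/2})$; both are at most $4C_*/(K 2^{r/2})$ as claimed. There is no real obstacle here beyond being careful that $h$ and $Z$ are independent so that the conditional application of Theorem~\ref{thm: sup bound} is valid, and that $\bar{\calH}_r$ still fits inside $\calH_{r-1}$ so that the pairwise independence bound applies with $r-1$ in place of $r$ (this is what produces the extra $\sqrt{2}$ factor absorbed into the $4$).
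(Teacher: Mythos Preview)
Your proof is correct and follows essentially the same approach as the paper: reduce $\max_{s,t}$ to $\max_t$ via the triangle inequality (a factor of $2$), condition on $h$ (equivalently on $\calH_r$) to apply Theorem~\ref{thm: sup bound} with Markov's inequality, then take expectation over $h$ using pairwise independence and Jensen for the square root. The paper performs the triangle-inequality step last rather than first and does not separately track the $\sqrt{2}$ slack between $\calH_r$ and $\bar{\calH}_r$, but these are cosmetic differences.
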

\begin{proof}
By the Law of Total Probability and Theorem~\ref{thm: sup bound} with Markov's Inequality we have
\begin{align*}
\Pr&\left(
\max_{t\leq m}|\langle Z,f^{(t)}(\calH_r)\rangle| \geq \frac{1}{2}K F_2(\calH_0)^{1/2}\right)\\
&= \E\left\{\Pr\left(
\max_{t\leq m}|\langle Z,f^{(t)}(\calH_r)\rangle| \geq \frac{1}{2}K F_2(\calH_0)^{1/2}\middle|\calH_r\right)\right\}\\
&\leq \E\left\{\frac{2C_* F_2(\calH_r)^{1/2}}{K F_2(\calH_0)^{1/2}} \right\}\leq \frac{2C_*F_2(\calH_0)^{1/2}}{K F_2(\calH_0)^{1/2}2^{r/2}},
\end{align*}
where the last inequality is Jensen's.
The same holds if $\calH_r$ is replaced by $\bar{\calH}_r$.

Applying the triangle inequality to get $|\langle Z, f^{(s:t)}(\calH_r)\rangle| \leq |\langle Z, f^{(s)}(\calH_r)\rangle| + |\langle Z, f^{(t)}(\calH_r)\rangle|$ we then find
$P(E_{2r})\geq 1-\frac{4C_*}{K2^{r/2}}$.
A similar argument proves $P(E_{2r-1})\geq 1-\frac{4 C_*}{K2^{r/2}}$.
\end{proof}

From here the strategy to prove the correctness of \hho is to  inductively use Lemma~\ref{lem: one round sup} to bound the success of each round.
The correctness of \hho, Lemma~\ref{lem: hho correctness}, follows directly from Lemma~\ref{lem: R rounds}.

Let $U$ be the event $\{h(j)\neq h(H)\text{ for all }j\neq H, f_j>0\}$ which has, by pairwise independence, probability $\Pr(U) \geq 1-F_02^{-R}\geq 1-\frac{1}{\min\{n,F_2\}^2}$, recalling that $F_0\leq\min\{n,F_2\}$ is the number of distinct items appearing before time $m$.
The next lemma is the main proof of correctness for our algorithm.
\begin{lemma}\label{lem: R rounds}
Let $K'\geq 128$, $c=1/32$, and $\beta=3/4$.  
If $2K'C_*\sqrt{F_2(\calH_0)}\leq\sigma\leq 2\sqrt{2}f_H$ and $f_H> 2K'C_*\sqrt{F_2(\calH_0)}$ then, with probability at least $1-\frac{1}{\min\{F_2,n\}^2}-\frac{8}{K'c(\sqrt{2}\beta-1)}$ the algorithm \hho returns $H$.
\end{lemma}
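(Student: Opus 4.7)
My plan is to condition on a high-probability event and then carry out a straightforward induction over the rounds. The relevant events are (i) the no-collision event $U$ from Section~\ref{sec: random labels}, which by pairwise independence of $h$ fails with probability at most $1/\min\{F_2,n\}^2$ and on which inverting $h$ on the learned string uniquely recovers $H$; and (ii) for each round $r\in\{1,\ldots,R\}$, a pair of events $E_{2r-1},E_{2r}$ from Lemma~\ref{lem: one round sup} applied with the choice $K_r \coloneqq c\sigma\beta^r/(2\sqrt{F_2(\calH_0)})$. With this choice the two maxima controlled by the lemma are bounded by $c\sigma\beta^r/2$, which is exactly half the threshold used in round $r$ by \hho. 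The hypothesis $\sigma \geq 2K'C_*\sqrt{F_2(\calH_0)}$ forces $K_r \geq K'C_*c\beta^r$, so each event fails with probability at most $4/(K'c\beta^r 2^{r/2})$. Summing the two events per round over $r\geq 1$ yields a geometric series that converges because $\sqrt{2}\beta>1$ when $\beta=3/4$, contributing at most $8/(K'c(\sqrt{2}\beta-1))$. Union-bounding with $U$ gives the failure probability in the statement.

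Conditioning on $U$ and all the $E_j$, I would prove by induction on $r$ that: (a) the active items at the start of round $r$ are exactly $\calH_{r-1}\cup\{H\}$; (b) round $r$ terminates after at most $2c\sigma\beta^r$ fresh updates to $H$; and (c) the bit $b_r$ recorded equals $h(H)_r$. The base case is immediate since every item is active initially. For the inductive step, once (c) holds for all previous rounds, the active items are by definition those whose hashes agree with $h(H)$ in the first $r-1$ bits, which is $\calH_{r-1}\cup\{H\}$. Let $s$ be the start time of round $r$, and without loss of generality assume $h(H)_r=0$, so the round's counters decompose as
\[
X_0(t) = Z_H f_H^{(s:t)} + \langle Z, f^{(s:t)}(\calH_r)\rangle,\quad X_1(t) = \langle Z, f^{(s:t)}(\bar{\calH}_r)\rangle.
\]
The uniform bounds from $E_{2r-1}$ and $E_{2r}$ applied to the fresh $Z$ of round $r$ give $|X_1(t)|\leq c\sigma\beta^r/2$ and $|\langle Z,f^{(s:t)}(\calH_r)\rangle|\leq c\sigma\beta^r/2$ at every time $t$. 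The triangle inequality then yields $|X_0(t)+X_1(t)|\geq f_H^{(s:t)}-c\sigma\beta^r$, so the termination test $|X_0+X_1|\geq c\sigma\beta^r$ fires no later than when $f_H^{(s:t)}$ reaches $2c\sigma\beta^r$, proving (b). At that termination time, $|X_0|\geq |X_0+X_1|-|X_1|\geq c\sigma\beta^r/2\geq |X_1|$, so $b_r=0=h(H)_r$ is recorded correctly, giving (c).

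The last thing to verify is that the stream has enough remaining updates to $H$ to actually fire every termination test. Summing the per-round budgets, $\sum_{r\geq 1}2c\sigma\beta^r = 2c\sigma\beta/(1-\beta) = (3/16)\sigma \leq (3\sqrt{2}/8)f_H < f_H$ by the hypothesis $\sigma\leq 2\sqrt{2}f_H$. Iterating the induction through round $R$ recovers $b=h(H)$; under $U$, the only item passing all $R$ bit tests is $H$ itself, so the variable $H$ returned by the algorithm is the true heavy hitter. The main technical obstacle is precisely this budget computation: the constants $\beta=3/4$ and $c=1/32$ are tuned so that $\sqrt{2}\beta>1$ (keeping the failure-probability series convergent) while simultaneously $\sum 2c\sigma\beta^r$ stays below $f_H$ under $\sigma\leq 2\sqrt{2}f_H$. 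Everything else---the decomposition of $X_0,X_1$, the probability bound, and the per-round bit identification---is a direct consequence of Lemma~\ref{lem: one round sup} and the triangle inequality.
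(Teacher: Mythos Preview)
Your approach mirrors the paper's almost exactly: condition on $U$ and on the supremum events from Lemma~\ref{lem: one round sup} (with the same threshold $c\sigma\beta^r/2$), then induct over rounds. The probability computation and the per-round bit-identification argument are both correct.

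There is, however, one small but genuine gap in your budget argument. You claim round $r$ terminates after at most $2c\sigma\beta^r$ fresh updates to $H$ and then sum these real numbers to get $(3\sqrt{2}/8)f_H<f_H$. But frequencies are integers: each round actually consumes up to $\lceil 2c\sigma\beta^r\rceil$ updates to $H$, and since $2c\sigma\beta^r\to 0$ while $R=\Theta(\log\min\{n,\sigma^2\})$, the ceilings contribute an additive $R$ to the total. When $f_H$ is small (the paper's calculation lands at $f_H\le 104$) this extra $R$ can exceed the slack $(1-3\sqrt{2}/8)f_H$, so your induction could run out of $H$-updates before reaching round $R$. The paper patches this edge case explicitly: in that regime the hypothesis $f_H>2K'C_*\sqrt{F_2(\calH_0)}$ with $K'\ge 128$ forces $F_2(\calH_0)<1$, hence $F_2(\calH_0)=0$, so $H$ is the only item in the stream and \hho trivially returns it. The same degenerate case also makes your definition $K_r=c\sigma\beta^r/\bigl(2\sqrt{F_2(\calH_0)}\bigr)$ ill-posed; you should dispose of $F_2(\calH_0)=0$ separately at the outset.
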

\begin{proof}
Recall that $H$ is \emph{active} during round $r$ if it happens that $h(H)_i=b_i$, for all $1\leq i\leq r-1$, which implies that updates from $H$ are not discarded by the algorithm during round $r$.
Let $K=K(r)=K'cC_*\beta^{r}$ in Lemma~\ref{lem: one round sup}, and let $E$ be the event that $U$ and $\cap_{r=1}^{2R}E_r$ both occur.
We prove by induction on $r$ that if $E$ occurs then either $b_r = h(H)_r$, for all $r\in[R]$ or $H$ is the only item appearing in the stream.  
In either case, the algorithm correctly outputs $H$, where in the former case it follows because $E\subseteq U$. 

Let $r\geq 1$ be such that $H$ is still active in round $r$, i.e. $b_i=h(H)_i$ for all $1\leq i\leq r-1$.
Note that all items are active in round 1.
Since $H$ is active, the remaining active items are exactly $\calH_{r-1}=\calH_{r}\cup\bar{\calH}_{r}$.
Let $t_{r}$ denote the time of the last update received during the $r$th round, and define $t_0=0$.
At time $t_{r-1}\leq t<t_{r}$ we have 
\begin{align*}
c\sigma\beta^{r} &>|X_0+X_1|\\
&= |\langle Z, f^{(t_{r-1}:t)}(\calH_{r}\cup\bar{\calH}_{r})\rangle + Z_Hf^{(t_{r-1}:t)}_H|\\
&\geq f_H^{(t_{r-1}:t)} - K(r-1) F_2(\calH_0)^{1/2},
\end{align*}
where the last inequality follows from the definition of $E_{2(r-1)}$.
Rearranging and using the assumed lower bound on $\sigma$, we get the bound 
\begin{equation}\label{eq: KF2 calc}
K(r-1)F_2(\calH_0)^{1/2}\leq\frac{K(r-1)}{2K'C_*}\sigma = \frac{1}{2}c\sigma\beta^{r-1}.
\end{equation}
Therefore, by rearranging we see $f_H^{(t_{r-1}:t_{r})}\leq 1+f^{(t_{r-1}:t)}_H< 1+\frac{3}{2}c\sigma\beta^{r-1}$.
That implies
$f^{(t_{r})}_H =\sum_{k=1}^{r} f^{(t_{k-1}:t_{k})}_H < r+\frac{3}{2}c\sigma\sum_{k=1}^{r}\beta^{k-1}\leq r+\frac{3\sqrt{2}c}{1-\beta}f_H$.
Thus, if $f_H- \frac{3\sqrt{2}c}{1-\beta}f_H>R$ then round $r\leq R$ is guaranteed to be completed and a further update to $H$ appears after the round.
Suppose, that is not the case, and rather $R\geq f_H-\frac{3\sqrt{2}c}{1-\beta}f_H\geq\frac{1}{2}f_H$, where the last inequality follows from our choices $\beta=3/4$ and $c=1/32$.
Then, by the definition of $R$,  $9(1+\log_28f_H^2)^2\geq R^2\geq \frac{1}{4} f_H^2$.  
One can check that this inequality implies that $f_H\leq 104$.
Now $K'\geq 128$ and the heaviness requirement of $H$ implies that $F_2(\calH_0)=0$.
Therefore, $H$ is the only item in the stream, and, in that case the algorithm will always correctly output $H$.

Furthermore, at the end of round $r$, $|X_0+X_1|\geq c\sigma\beta^{r}$, so we have must have either $|X_0|\geq c\sigma\beta^{r}/2$ or $|X_1|\geq c\sigma\beta^{r}/2$. 
Both cannot occur for the following reason.
The events $E_{2r-1}$ and $E_{2r}$ occur, recall these govern the non-heavy items contributions to $X_0$ and $X_1$, and these events, with the inequality \eqref{eq: KF2 calc}, imply 
\[|\langle Z,f^{(t_{r-1}:t_{r})}(\calH_r)\rangle|\leq K(r)F_2(\calH_0)^{1/2}<\frac{1}{2}c\sigma\beta^r\]
and the same holds for $\bar{\calH_r}$. 
Therefore, the Bernoulli process not including $H$ has its value smaller than $c\sigma\beta^r/2$, and the other, larger process identifies the bit $h(H)_r$.
By induction, the algorithm completes every round $r=1,2,\ldots,R$ and there is at least one update to $H$ after round $R$.
This proves the correctness of the algorithm assuming the event $E$ occurs.

It remains to compute the probability of $E$.
Lemma~\ref{lem: one round sup} provides the bound
\begin{align*}
\Pr(U\text{ and }\cap_{i=1}^{2R}E_i ) 
&\geq 1 - \frac{1}{\min\{n,F_2\}^2} - \sum_{r=0}^{R}\frac{8C_*}{K(r)2^{r/2}}\\
&= 1 - \frac{1}{\min\{n,F_2\}^2} - \sum_{r=0}^{R}\frac{8}{K'c\beta^{r}2^{r/2}}\\
&> 1-\frac{1}{\min\{n,F_2\}^2}-\frac{8}{K'c(\sqrt{2}\beta-1)}.
\end{align*}
\end{proof}

\begin{proposition}\label{lem: sigma to f_H}
Let $\alpha\geq 1$.
If $F_2^{1/2}\leq\sigma\leq 2F_2^{1/2}$ and $f_H\geq \alpha \sqrt{F_2(\calH_0)}$ then $\alpha\sqrt{F_2(\calH_0)}\leq \sigma\leq 2\sqrt{2}f_H.$
\end{proposition}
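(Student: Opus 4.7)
The plan is to use the identity $F_2 = f_H^2 + F_2(\calH_0)$, which holds since $\calH_0 = [n]\setminus\{H\}$, together with the hypothesis $f_H \ge \alpha\sqrt{F_2(\calH_0)}$ to sandwich $F_2$ between $f_H^2$ and $2f_H^2$ (modulo the $\alpha^2+1$ factor on the other side). The bounds on $\sigma$ then follow by combining with the given inequality $\sqrt{F_2}\le\sigma\le 2\sqrt{F_2}$.

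For the lower bound $\alpha\sqrt{F_2(\calH_0)}\le\sigma$, I would first square the heaviness assumption to get $f_H^2 \ge \alpha^2 F_2(\calH_0)$, and hence
\[F_2 = f_H^2 + F_2(\calH_0) \;\ge\; \alpha^2 F_2(\calH_0) + F_2(\calH_0) \;\ge\; \alpha^2 F_2(\calH_0).\]
Taking square roots and using $\sigma \ge \sqrt{F_2}$ then gives $\sigma \ge \alpha\sqrt{F_2(\calH_0)}$.

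For the upper bound $\sigma \le 2\sqrt{2}\,f_H$, I would use $\alpha\ge 1$ to conclude $f_H \ge \sqrt{F_2(\calH_0)}$, i.e.\ $F_2(\calH_0) \le f_H^2$, so that $F_2 = f_H^2 + F_2(\calH_0) \le 2f_H^2$. Combining with $\sigma \le 2\sqrt{F_2}$ yields $\sigma \le 2\sqrt{2}\,f_H$.

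There is really no obstacle here; the statement is a short algebraic observation whose entire content is the decomposition of $F_2$ into the heavy coordinate and the rest, with $\alpha\ge 1$ used only to dominate the tail by the head. The role of the proposition in the larger argument is to verify that whenever the global $F_2$-approximation $\sigma$ is within a factor of $2$ of $\sqrt{F_2}$ (which is what the $F_2$-tracker of Section~\ref{sec: ams sup} will provide), the hypotheses on $\sigma$ demanded by Lemma~\ref{lem: R rounds} (via the identification $K'C_*=\alpha/2$ up to constants) are automatically satisfied, so \hho can be invoked as a black box inside \hht.
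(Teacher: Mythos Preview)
Your proof is correct and follows essentially the same approach as the paper's: both use the decomposition $F_2=f_H^2+F_2(\calH_0)$ together with the heaviness assumption to bound $F_2$ between $\alpha^2F_2(\calH_0)$ and $2f_H^2$, and then combine with $\sqrt{F_2}\le\sigma\le 2\sqrt{F_2}$. The paper compresses this into the single chain $\sigma^2\ge F_2\ge f_H^2=F_2-F_2(\calH_0)\ge(1-\tfrac{1}{1+\alpha^2})F_2\ge\tfrac18\sigma^2$, but the content is identical to what you wrote.
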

\begin{proof}
$\sigma^2 \geq F_2\geq f_H^2 =F_2-F_2(\calH_0)\geq (1-\frac{1}{1+\alpha^2})F_2\geq\frac{1}{8}\sigma^2$.
\end{proof}

\begin{lemma}[\hho Correctness]\label{lem: hho correctness}
There is a constant $K$ such that if $H$ is a $K$-heavy hitter and $\sqrt{F_2}\leq\sigma\leq2\sqrt{F_2}$, then with probability at least $2/3$ algorithm \hho returns $H$.
\hho uses $O(1)$ words of storage.
\end{lemma}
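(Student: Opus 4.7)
The plan is to deduce Lemma~\ref{lem: hho correctness} as a corollary of the existing Lemma~\ref{lem: R rounds} plus Proposition~\ref{lem: sigma to f_H}, by choosing the heaviness constant $K$ large enough to verify that lemma's hypotheses and large enough to drive the failure probability below $1/3$. Concretely, I would fix $K := 3K' C_*$ where $K' \ge 128$ is a constant to be tuned at the end, and recall that an item is a $K$-heavy hitter by the paper's convention exactly when $f_H \ge K\sqrt{F_2(\calH_0)}$ (since $f_H^2 \ge K^2 F_2(\calH_0)$).

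The bulk of the paragraph that follows this setup would check the three hypotheses of Lemma~\ref{lem: R rounds}. The lower bound $2K' C_* \sqrt{F_2(\calH_0)} \le \sigma$ is immediate: by the assumption $\sigma \ge \sqrt{F_2} \ge f_H$ together with $K$-heaviness we get $\sigma \ge K \sqrt{F_2(\calH_0)} \ge 2K' C_* \sqrt{F_2(\calH_0)}$. The upper bound $\sigma \le 2\sqrt{2}\, f_H$ is exactly the conclusion of Proposition~\ref{lem: sigma to f_H} applied with $\alpha = K \ge 1$, using $\sqrt{F_2}\le\sigma\le 2\sqrt{F_2}$. The strict inequality $f_H > 2K' C_* \sqrt{F_2(\calH_0)}$ follows when $F_2(\calH_0) > 0$ from our choice $K = 3K' C_* > 2K' C_*$; the case $F_2(\calH_0) = 0$ means the stream contains only copies of $H$, and the algorithm trivially outputs $H$ as soon as any update arrives (one can also appeal to the ``only item in the stream'' branch handled inside Lemma~\ref{lem: R rounds}).

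With the hypotheses verified, Lemma~\ref{lem: R rounds} gives success probability at least $1 - \frac{1}{\min\{F_2,n\}^2} - \frac{8}{K' c(\sqrt{2}\beta - 1)}$ with $c = 1/32$, $\beta = 3/4$. The second term is a constant depending only on $K'$, so taking $K'$ sufficiently large (and hence $K$ sufficiently large) makes it at most, say, $1/4$; for any nontrivial stream the first term is negligible, so the overall success probability is at least $2/3$. This yields the probability bound claimed in the lemma.

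Finally, for the space bound I would enumerate the state kept by \hho: the current bit index $r$ and recorded bits $b_1\ldots b_R$ (with $R = O(\log\min\{n,\sigma^2\})$, so fitting in $O(1)$ words since $\sigma \le 2\sqrt{F_2} \le 2m$); the two running sums $X_0, X_1$ bounded in magnitude by $\sigma$, hence $O(\log m)$ bits each; the candidate $H$; the seeds for the $2$-wise independent hash $h:[n]\to\{0,1\}^R$ (two field elements) and the $4$-wise independent Rademacher vector $Z$ (four field elements in a prime field of size $O(n)$). Each of these is $O(1)$ machine words under the standard assumption that a word holds $\Omega(\log \max\{n,m\})$ bits, giving the claimed $O(1)$ space. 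The only step I expect any subtlety in is tuning $K'$ to handle the strict-inequality edge case cleanly, but the slight inflation from $2K'C_*$ to $3K'C_*$ sidesteps this.
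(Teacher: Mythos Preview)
Your proposal is correct and follows essentially the same route as the paper: apply Proposition~\ref{lem: sigma to f_H} with $\alpha = K$ to verify the hypotheses of Lemma~\ref{lem: R rounds}, then choose $K'$ (and hence $K = \Theta(K'C_*)$) large enough to push the failure probability below $1/3$. The paper's proof is a one-line invocation with the specific choice $K'=2^{13}$, $K=2^{14}C_*$; your version is more explicit about the edge cases (the strict inequality, the $F_2(\calH_0)=0$ case) and spells out the $O(1)$-space accounting, which the paper leaves implicit.
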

\begin{proof}
The Lemma follows immediately from Proposition~\ref{lem: sigma to f_H} and Lemma~\ref{lem: R rounds} by setting $K'=2^{13}$, which allows $K=2^{14}C_*\leq 380,000$.
\end{proof}

\subsection{$F_2$ Tracking}\label{sec: ams sup}
This section proves that the AMS algorithm with 8-wise, rather than 4-wise, independent random signs has an additive $\epsilon F_2$ approximation guarantee at all points in the stream.
We will use the tracking to ``guess'' a good value of $\sigma$ for input to \hho, but, because the AMS algorithm is a fundamental streaming primitive, it is of independent interest from the \bptree algorithm.
The following theorem is a direct consequence of Lemma~\ref{lem: F2 tracker chaining} and~\cite{AlonMS99}.
\begin{theorem}\label{thm: beefed up AMS}
Let $0<\epsilon<1$.
There is a streaming algorithm that outputs at each time $t$ a value $\hat{F}_2^{(t)}$ such that
$\Pr(|\hat{F}_2^{(t)}-F_2^{(t)}|\leq \epsilon F_2,\text{ for all }0\leq t\leq m)\geq 1-\delta$.
The algorithm use $O(\frac{1}{\epsilon^2}\log\frac{1}{\delta})$ words of storage and has $O(\frac{1}{\epsilon^{2}}\log\frac{1}{\delta})$ update time.
\end{theorem}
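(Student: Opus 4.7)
\textbf{Proof plan for Theorem~\ref{thm: beefed up AMS}.} The plan is to run the standard AMS estimator on top of the supremum control promised by Lemma~\ref{lem: F2 tracker chaining}, and boost confidence via median-of-means. Concretely, I would maintain $r = \Theta(\log 1/\delta)$ independent instances, where instance $j$ stores $s = \Theta(1/\epsilon^2)$ counters $X_{j,1}^{(t)}, \ldots, X_{j,s}^{(t)}$, each of the form $X_{j,i}^{(t)} = \langle Z^{(j,i)}, f^{(t)}\rangle$ with $Z^{(j,i)} \in \{-1,1\}^n$ drawn from an 8-wise independent family, all $rs$ sign vectors mutually independent across $(j,i)$. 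The instance's estimate is $Y_j^{(t)} = \frac{1}{s}\sum_{i=1}^s (X_{j,i}^{(t)})^2$, and the algorithm outputs $\hat F_2^{(t)} = \median_j Y_j^{(t)}$ whenever queried.

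The first step is to invoke Lemma~\ref{lem: F2 tracker chaining}, which is the technical heart of the analysis: it says that for a single AMS instance built from 8-wise independent signs, $\Pr\bigl(\sup_{0 \leq t \leq m} |Y_j^{(t)} - F_2^{(t)}| > \epsilon F_2\bigr) \leq 1/3$. The point of 8-wise (rather than 4-wise) independence is precisely to extend the classical single-time AMS bound (which only needs 4-wise independence to control $\Var[X^2]$) uniformly over $t$, via a Dudley-style chaining on the quadratic Rademacher process $\langle Z, f^{(t)}\rangle^2 - F_2^{(t)} = \sum_{i \neq j} Z_i Z_j f_i^{(t)} f_j^{(t)}$. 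This is analogous in spirit to Theorem~\ref{thm: sup bound} and will be the step that consumes the extra independence.

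Given the per-instance supremum bound, the second step is standard median boosting: define $B_j$ to be the indicator that instance $j$ fails to satisfy $\sup_t |Y_j^{(t)} - F_2^{(t)}| \leq \epsilon F_2$. By Lemma~\ref{lem: F2 tracker chaining} the $B_j$ are independent Bernoullis with mean at most $1/3$, and a Chernoff bound gives $\Pr(\sum_j B_j \geq r/2) \leq \delta$ for $r = \Theta(\log 1/\delta)$. Whenever fewer than half of the instances fail, the coordinate-wise median at every $t$ lies between the minimum and maximum of the good instances' estimates, so $|\hat F_2^{(t)} - F_2^{(t)}| \leq \epsilon F_2$ holds simultaneously for every $t \in \{0,1,\ldots,m\}$.

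The space and time accounting is then immediate: we store $rs = O(\epsilon^{-2}\log 1/\delta)$ machine words for the counters (plus $O(\log n)$ bits of seed for each 8-wise independent family, which fits in a constant number of words), and each stream update touches all $rs$ counters, giving update time $O(\epsilon^{-2}\log 1/\delta)$. The only real obstacle is the chaining argument hidden inside Lemma~\ref{lem: F2 tracker chaining}; everything in Theorem~\ref{thm: beefed up AMS} beyond that is a routine combination of the per-instance supremum bound with median-of-means.
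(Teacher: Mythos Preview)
Your proposal is correct and matches the paper's approach exactly: the paper simply states that Theorem~\ref{thm: beefed up AMS} is a direct consequence of Lemma~\ref{lem: F2 tracker chaining} (the per-instance supremum bound) together with the standard median amplification from~\cite{AlonMS99}, which is precisely the two-step argument you spell out. One small note: Lemma~\ref{lem: F2 tracker chaining} as stated gives an \emph{expectation} bound $\E\sup_t\bigl|Y_j^{(t)}-F_2^{(t)}\bigr|\le\epsilon\|f\|_2^2$, so you need an explicit Markov step (and a constant-factor adjustment in $s$) to obtain the per-instance failure probability $\le 1/3$ that you quote before applying Chernoff.
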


Let us remark that it follows from Theorem~\ref{thm: beefed up AMS} and a union bound that one can achieve a $(1\pm\epsilon)$ \emph{multiplicative} approximation to $F_2$ at all points in the stream using $O(\epsilon^{-2}\log\log m)$ words.
The proof that this works breaks the stream into $O(\log{m})$ intervals of where the change in $F_2$ doubles.  

In its original form~\cite{AlonMS99}, the AMS sketch is a product of the form $\Pi f$, where $\Pi$ is a random $k\times n$ matrix chosen with each row independently composed of 4-wise independent $\pm1$ random variables.
The sketch uses $k=\Theta(1/\epsilon^2)$ rows to achieve a $(1\pm\epsilon)$-approximation with constant probability.
We show that the AMS sketch with $k \simeq 1/\epsilon^2$ rows and $8$-wise independent entries provides $\ell_2$-tracking with additive error $\epsilon \|f\|_2$ at all times.
We define $v_t = f^{(t)} / \|f^{(m)}\|_2$ so $\|v_t\|_2 \le 1$, for all $t\geq0$, and $\|v_m\|_2 = 1$. Define $T = \{v_0,v_1,\ldots,v_m\}$. 
We use $\|A\|$ to denote the spectral norm of $A$, which is equal to its largest singular value, and $\|A\|_F$ for the Frobenious norm, which is the Euclidean length of $A$ when viewed as a vector.
Our proof makes use of the following moment bound for quadratic forms.
Recall that given a metric space $(X,d)$ and $\epsilon>0$, an \emph{$\epsilon$-net} of $X$ is a subset $N\subseteq X$ such that $d(x,N)=\inf_{y\in N}d(x,y)\leq\epsilon$ for all $x\in X$.

\begin{theorem}[Hanson-Wright~\cite{hanson1971bound}]
For $B\in\mathbb{R}^{n\times n}$ symmetric with $(Z_i)$ uniformly random in $\{-1,1\}^n$, for all $p\ge 1$,
$
\|Z^T B Z - \E Z^T B Z\|_p \lesssim \sqrt{p}\|B\|_F + p\|B\| .
$
\end{theorem}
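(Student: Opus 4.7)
The claim is the Rademacher version of the Hanson--Wright moment inequality, and my plan is to combine three classical ingredients---diagonal reduction, decoupling, and conditional Khintchine---and then close the resulting self-referential recursion by induction on the moment $p$.

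First I would reduce to the off-diagonal case: since $Z_i^2 \equiv 1$, the diagonal part $\sum_i B_{ii} Z_i^2 = \sum_i B_{ii}$ equals $\E Z^T B Z$, so $Z^T B Z - \E Z^T B Z = \sum_{i\ne j} B_{ij} Z_i Z_j$, a pure off-diagonal Rademacher chaos of order two. I would then decouple via the standard de la Pe\~na--Gin\'e inequality for Rademacher chaos of order two: for every convex $\Phi$,
\[
\E\,\Phi\!\Bigl(\sum_{i \ne j} B_{ij} Z_i Z_j\Bigr) \;\le\; \E\,\Phi\!\Bigl(C_0 \sum_{i,j} B_{ij} Z_i Z_j'\Bigr),
\]
where $Z'$ is an independent copy of $Z$ and $C_0$ is a universal constant. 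Choosing $\Phi(x)=|x|^p$ reduces the problem to bounding the decoupled quantity $\|Z^T B Z'\|_p$.

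Conditioning on $Z'$, the quantity $Z^T B Z' = \sum_i Z_i (B Z')_i$ is a linear Rademacher combination, so Khintchine's inequality (the lemma the paper records just above the theorem) gives $(\E_Z |Z^T B Z'|^p)^{1/p} \le \sqrt{p}\,\|B Z'\|_2$ for even $p$, and $L^p$-monotonicity extends this to all $p\ge 1$. Taking the $L^p$ norm over $Z'$,
\[
\|Z^T B Z'\|_p \;\lesssim\; \sqrt{p}\,\bigl\|\,Z'^T A\, Z'\bigr\|_{p/2}^{1/2},
\]
where $A := B^T B$ is symmetric PSD with $\E Z'^T A Z' = \|B\|_F^2$, $\|A\|_F \le \|B\|\,\|B\|_F$, and $\|A\| = \|B\|^2$. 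Splitting $Z'^T A Z' = \|B\|_F^2 + (Z'^T A Z' - \E Z'^T A Z')$ and applying the triangle inequality in $L^{p/2}$, the centered piece is itself an instance of the inequality being proved; writing $\psi(M,q) := \|Z^T M Z - \E Z^T M Z\|_q$, this yields the recursion
\[
\psi(B,p) \;\le\; C_1\,\sqrt{p}\,\bigl(\|B\|_F^2 + \psi(A,\,p/2)\bigr)^{1/2}.
\]

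The main obstacle is closing this self-referential estimate, which I would handle by induction on $p$. Assume $\psi(M, p/2) \le K(\sqrt{p/2}\,\|M\|_F + (p/2)\|M\|)$ for every symmetric $M$, substitute $M = A$, use $\|A\|_F \le \|B\|\,\|B\|_F$ and $\|A\| = \|B\|^2$, and apply the AM--GM estimate $p^{3/4}\sqrt{\|B\|\,\|B\|_F} \le \tfrac12(\sqrt{p}\,\|B\|_F + p\,\|B\|)$ to collapse the cross term $\sqrt{p}\cdot K^{1/2}(p/2)^{1/4}\sqrt{\|B\|\,\|B\|_F}$ that appears after $\sqrt{a+b+c}\le\sqrt a+\sqrt b+\sqrt c$. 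The recursion then simplifies to $\psi(B,p) \le C_1(1+\sqrt{K})(\sqrt{p}\,\|B\|_F + p\,\|B\|)$, so choosing $K$ large enough that $C_1(1+\sqrt K) \le K$ (for instance $K = 4C_1^2$ when $C_1 \ge 1/2$) makes the induction close. The base case $p$ of order one is the direct variance calculation $\E(Z^T B Z - \E Z^T B Z)^2 = 2\sum_{i\ne j} B_{ij}^2 \le 2\|B\|_F^2$ together with $L^p$-monotonicity. This produces the advertised bound $\|Z^T B Z - \E Z^T B Z\|_p \lesssim \sqrt{p}\,\|B\|_F + p\,\|B\|$ with an absolute constant.
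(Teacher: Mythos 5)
The paper does not prove this statement at all: it is imported as a black box with a citation to Hanson and Wright, so there is no internal argument to compare yours against. Your proposal is a correct, self-contained proof along the standard modern moment-method lines (diagonal reduction, de la Pe\~na--Gin\'e decoupling, conditional Khintchine, and a dyadic recursion in $p$ closed by induction), and the recursion bookkeeping --- $\|A\|_F\le\|B\|\,\|B\|_F$, $\|A\|=\|B\|^2$, the AM--GM absorption of the $p^{3/4}$ cross term, and the choice $C_1(1+\sqrt K)\le K$ --- is sound. One small imprecision to patch: the decoupling inequality for order-two chaos produces the \emph{off-diagonal} decoupled sum $\sum_{i\ne j}B_{ij}Z_iZ_j'$, not the full bilinear form $\sum_{i,j}B_{ij}Z_iZ_j'$ as you wrote; either carry the off-diagonal matrix $B_{\mathrm{od}}$ through the conditional Khintchine step (using $\|B_{\mathrm{od}}\|_F\le\|B\|_F$ and $\|B_{\mathrm{od}}\|\le 2\|B\|$) or reinstate the diagonal decoupled term and bound it separately by $\sqrt p\,\bigl(\sum_i B_{ii}^2\bigr)^{1/2}\le\sqrt p\,\|B\|_F$ via Khintchine; either fix is one line. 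You should also state explicitly that non-even $p$ and the terminal levels $p/2<2$ of the recursion are handled by $L^p$-monotonicity and the variance base case, which you already have in hand, so the constant stays absolute.
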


Observe the sketch can be written $\Pi x = A_{x} Z$, where
\begin{align*}
\label{eqn:AdelxDef}
A_x &:= \frac{1}{\sqrt{k}}\sum_{i=1}^k\sum_{j=1}^n x^i_j e_i\otimes e_{n(i-1)+j} \\
&= \frac{1}{\sqrt{k}}\begin{bmatrix} 
- x - & 0 & \cdots & 0\\
0 & - x - & \cdots & 0\\
\vdots &\vdots &  &\vdots\\
0&0&\cdots& - x -
\end{bmatrix} .
\end{align*}
We are thus interested in bounding
$\E_{Z} \sup_{x\in T} | Z^T B_x Z - \E Z^T B_x Z  |,  $
for $B_x = A_x^T A_x$. 
Note for any $\|x\|_2, {\|y\|_2 \le 1}$,
\begin{equation}
\|xx^T - yy^T\|_F \le 4\|x-y\|_2 \label{eqn:frob}.
\end{equation}

\begin{lemma}[$F_2$ tracking]\label{lem: F2 tracker chaining}
If $k\gtrsim 1/\epsilon^2$ and $Z\in\{-1,+1\}^{kn}$ are $8$-wise independent then 
\[\E\sup_{t}\left|\|\Pi f^{(t)}\|_2^2 - \|f^{(t)}\|_2^2\right|\leq \epsilon\|f\|_2^2. \]
\end{lemma}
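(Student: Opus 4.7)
The plan is to mirror the Dudley-style chaining from the proof of Theorem~\ref{thm: sup bound}, but to swap the Khintchine tail for the Hanson--Wright inequality, which is the right tool for the quadratic form $Z^T B_x Z$. First I would normalize by setting $v_t = f^{(t)}/\|f\|_2$. Since $B_x = A_x^T A_x$ is block-diagonal with $k$ identical copies of $xx^T/k$, one has $\E Z^T B_x Z = \mathrm{tr}(B_x) = \|x\|_2^2$, so $\|\Pi f^{(t)}\|_2^2 - \|f^{(t)}\|_2^2 = \|f\|_2^2\cdot(Z^T B_{v_t} Z - \E Z^T B_{v_t} Z)$, and it suffices to prove $\E \sup_{x \in T}|Z^T B_x Z - \E Z^T B_x Z| \lesssim 1/\sqrt{k}$, where $T = \{v_0,\ldots,v_m\}$ is the normalized set of frequency vectors already used in the proof of Theorem~\ref{thm: sup bound}.

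The next step is to translate the $\ell_2$-geometry of $T$ into the matrix geometry Hanson--Wright needs. The block-diagonal structure gives $\|B_x - B_y\|_F = \|xx^T - yy^T\|_F/\sqrt{k}$ and $\|B_x - B_y\| = \|xx^T - yy^T\|/k$; combined with (\ref{eqn:frob}) these yield $\|B_x - B_y\|_F \le 4\|x-y\|_2/\sqrt{k}$ and $\|B_x - B_y\| \le 4\|x-y\|_2/k$. For any $p \le 4$ the $p$-th moment of $Y_{x,y} \coloneqq Z^T(B_x-B_y)Z - \E Z^T(B_x-B_y)Z$ depends only on the joint distribution of at most $2p \le 8$ coordinates of $Z$, so 8-wise independence is enough to invoke Hanson--Wright at $p = 4$, which delivers $\|Y_{x,y}\|_4 \lesssim \|x-y\|_2/\sqrt{k}$.

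For the chaining I would reuse verbatim the greedy $1/2^j$-nets $S_j$ of $T$ in $\ell_2$, with $|S_j| \le 2^{2j}$, constructed in the proof of Theorem~\ref{thm: sup bound}, take $x^j \in S_j$ to be the closest point to $x \in T$, and telescope $Z^T B_x Z - \|x\|_2^2$ along the chain $x^0 = 0, x^1, x^2, \ldots$. The consecutive-timestamp argument from Theorem~\ref{thm: sup bound} again bounds $|\{(x^j, x^{j-1}) : x \in T\}| \le 2|S_j| \le 2^{2j+1}$. Combining the $L_4$ bound with Markov, a union bound across these differences, and integrating the resulting tail probability in $\lambda$ exactly as in the Khintchine calculation of Theorem~\ref{thm: sup bound} gives a per-level contribution of order $|S_j|^{1/4}\cdot 2^{-j}/\sqrt{k} \lesssim 2^{-j/2}/\sqrt{k}$, and the geometric sum over $j \ge 1$ produces $O(1/\sqrt{k})$, which is at most $\epsilon$ as soon as $k \gtrsim 1/\epsilon^{2}$.

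The only delicate point I anticipate is verifying that 8-wise independence of $Z$ really is enough for Hanson--Wright at $p = 4$: standard proofs work via an expansion of $\E(Z^T B Z)^p$ that touches only moments of order $2p$, so the bound transfers to our limited-independence setting, but this deserves an explicit remark or pointer. Everything else --- the block-norm estimates on $B_x - B_y$, the $\epsilon$-net construction, and the combinatorial bound on chain differences --- is either immediate or lifted intact from the proof of Theorem~\ref{thm: sup bound}.
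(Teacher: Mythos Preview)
Your proposal is correct and follows essentially the same route as the paper: normalize to $T=\{v_t\}$, observe $B_x=A_x^TA_x$ is block-diagonal so the $\ell_2$-net geometry of $T$ transfers to Frobenius/operator-norm control on $B_x-B_y$, apply Hanson--Wright at $p=4$ (for which $8$-wise independence suffices since only $2p$-th moments of $Z$ are touched), and chain along the greedy $2^{-j}$-nets of Theorem~\ref{thm: sup bound} with the same consecutive-timestamp bound $|\{(x^j,x^{j-1})\}|\le 2|S_j|$, yielding a geometric sum of order $1/\sqrt{k}$. The only cosmetic differences are that the paper bounds $\|B_x-B_y\|$ by $\|B_x-B_y\|_F$ rather than your sharper $O(\|x-y\|_2/k)$, and it handles the base term $\gamma(B_{x_0})$ by Hanson--Wright rather than by taking $x^0=0$; neither affects the argument.
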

\begin{proof}
Let $A_x$, $x\in T$, as defined above and $B_x=A_x^TA_x$.
By \eqref{eqn:frob}, $\|B_x-B_y\|_F\leq \frac{4}{\sqrt{k}}\|x-y\|_2$, for all $x,y\in T$.
In particular, $\sup_{x\in T} \|B_x\| \leq \sup_{x\in T} \|B_x\|_F \leq 1/\sqrt{k}$. 
Let $T_\ell$ be a $(1/2^\ell)$-net of $T$ under $\ell_2$; we know we can take $|T_\ell| \le 4^\ell$. 
$\mathcal{B}_\ell = \{ B_x : x \in T_\ell\}$ is a $1/\sqrt{k}2^\ell$-net under $\|\cdot\|$ and also under $\|\cdot\|_F$. 
For $x\in T$, let $x_\ell\in T_\ell$ denote the closest element in $T_\ell$, under $\ell_2$. 
Then we can write
$B_x = B_{x_0} + \sum_{\ell=1}^\infty \Delta_{x_\ell}$,
where $\Delta_{x_\ell}=B_{x_\ell} - B_{x_{\ell-1}}$.
For brevity, we will also define $\gamma(A)\coloneqq |Z^TAZ^T-\E Z^TAZ^T|$.
Thus if the $(Z_i)$ are $2p$-wise independent
\begin{align}
\nonumber \E \sup_{x\in T} \gamma(B_x) &\le \E \sup_{x\in T} \gamma(B_{x_0}) + \E \sup_{x\in T} \sum_{\ell=1}^\infty \gamma(\Delta_{x_\ell})\\
{}&\lesssim \frac{p}{\sqrt{k}} + \sum_{\ell=1}^\infty \E \sup_{x\in T}  \gamma(\Delta_{x_\ell})\label{eqn:chain-b}
\end{align}
If $A\in\mathbb{R}^{n\times n}$ is symmetric, then by the Hanson-Wright Inequality
\begin{align*}
\Pr(&\gamma(A)> \lambda\cdot S^{1/p})< \frac 1S\cdot \left[\left(\frac {C\sqrt{p}\|A\|_F}{\lambda}\right)^p + \left(\frac {Cp\|A\|}{\lambda}\right)^p\right]
\end{align*}
for some constant $C > 0$. Thus if $\mathcal{A}$ is a collection of such matrices, $|\mathcal{A}| = S$, choosing $u^* = C (\sqrt{p} \cdot\sup_{A\in\mathcal{A}} \|A\|_F + p \cdot\sup_{A\in\mathcal{A}} \|A\|)$
\begin{align}
\nonumber \E& \sup_{A\in\mathcal{A}} \gamma(A) = \int_0^\infty \Pr(\sup_{A\in\mathcal{A}} \gamma(A) > u) du\\
\nonumber {}&= S^{1/p} \cdot \int_0^\infty \Pr(\sup_{A\in\mathcal{A}} \gamma(A) > \lambda\cdot S^{1/p}) d\lambda\\
\nonumber {}& = S^{1/p} (u^* + \int_{u^*}^\infty \Pr(\sup_{A\in\mathcal{A}} \gamma(A) > \lambda\cdot S^{1/p}) d\lambda)\\
{}&\lesssim S^{1/p} (\sqrt{p} \cdot\sup_{A\in\mathcal{A}} \|A\|_F + p \cdot\sup_{A\in\mathcal{A}} \|A\|) \label{eqn:chain}
\end{align}

Now by applying \eqref{eqn:chain} to \eqref{eqn:chain-b} repeatedly with $\mathcal{A} = \mathcal{A}_\ell = \{ B_{x_\ell} - B_{x_{\ell-1}}: x \in T\}$, noting $\sup_{A\in\mathcal{A}_\ell} \|A\|\leq\sup_{A\in\mathcal{A}_\ell} \|A\|_F \leq 1/\sqrt{k}2^\ell$ and $|\mathcal{A}_\ell| \le 2|T_\ell| \le 2\cdot 2^{2\ell}$,
\begin{align}
\nonumber
 \E \sup_{x\in T} \gamma(B_x)  \lesssim \frac{p}{\sqrt{k}} + \sum_{\ell=1}^\infty \frac{p2^{2\frac{\ell}{p} - \ell}}{\sqrt{k}}  \lesssim \frac{p}{\sqrt{k}}
\end{align}
for $p\ge 4$. 
Thus it suffices for the entries of $Z$ to be $2p$-wise independent, i.e.\ $8$-wise independent.
\end{proof}

\subsection{The complete heavy hitters algorithm}\label{sec: main algo proof}

We will now describe \hht, formally Algorithm~\ref{algo: guess}, which is an algorithm that removes the assumption on $\sigma$ needed by \hho. 
It is followed by the complete algorithm \bptree.
The step in \hht that guesses an approximation $\sigma$ for $\sqrt{F_2}$ works as follows. 
We construct the estimator $\hat{F}_2$ of the previous section to (approximately) track $F_2$.
\hht starts a new instance of \hho each time the estimate $\hat{F}_2$ crosses a power of 2. 
Each new instance is initialized with the current estimate of $\sqrt{F_2}$ as the value for $\sigma$, but \hht maintains only the two most recent copies of \hho. 
Thus, even though, overall, it may instantiate $\Omega(\log n)$ copies of \hho at most two will running concurrently and the total storage remains $O(1)$ words.
At least one of the thresholds will be the ``right'' one, in the sense that \hho gets initialized with $\sigma$ in the interval $[\sqrt{F_2},2\sqrt{F_2}]$, so we expect the corresponding instance of \hho to identify the heavy hitter, if one exists.

The scheme could fail if $\hat{F}_2$ is wildly inaccurate at some points in the stream, for example if $\hat{F}_2$ ever grows too large then the algorithm could discard every instance of \hho that was correctly initialized. 
But, Theorem~\ref{thm: beefed up AMS} guarantees that it fails only with small probability.

We begin by proving the correctness of \hht in Lemma~\ref{lem: main} and then complete the description and correctness of \bptree in Theorem~\ref{thm: like CountSketch}.

\begin{algorithm}[t]
  \begin{algorithmic}
    \Procedure{\hht}{$p_1$, $p_2$,\,\ldots,$p_m$}
    \State Run $\hat{F}_2$ from Theorem~\ref{thm: beefed up AMS} with $\epsilon=1/100$ and $\delta=1/20$
    \State Start \hho($1$, $p_1$,\ldots, $p_m$)
    \State Let $t_0=1$ and $t_k = \min\{t \mid \hat{F}_2^{(t)}\geq 2^k\}$, for $k\geq 1$.
    \For{each time $t_k$} 
      \State Start $\hho( (\hat{F}_2^{(t_k)})^{1/2}, p_{t_k},p_{t_k+1},\ldots p_m)$
      \State Let $H_k$ denote its output if it completes
      \State Discard $H_{k-2}$ and the copy of \hho started at $t_{k-2}$
    \EndFor
    \State \Return $H_{k-1}$
    \EndProcedure
  \end{algorithmic}
  \caption{Identify a heavy hitter by guessing $\sigma$.}\label{algo: guess}
\end{algorithm}

\newcommand{\maintheoremtext}{There exists a constant $K>0$ and a 1-pass streaming algorithm~\hht, Algorithm~\ref{algo: guess}, such that if the stream contains a $K$-heavy hitter then with probability at least $0.6$ \hht returns the identity of the heavy hitter.
The algorithm uses $O(1)$ words of memory and $O(1)$ update time.}
\begin{lemma}\label{lem: main}
\maintheoremtext
\end{lemma}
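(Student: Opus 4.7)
\textbf{Proof plan for Lemma~\ref{lem: main}.}
The plan is to condition on good behavior of the $F_2$-tracker and then show that the \hho instance \hht ultimately returns is initialized with a $\sigma$ satisfying the hypotheses of Lemma~\ref{lem: R rounds} on the stream suffix it sees. First, instantiate the tracker of Theorem~\ref{thm: beefed up AMS} with $\epsilon=1/100$ and $\delta=1/20$, and condition on the event $\mathcal{T}=\{|\hat{F}_2^{(t)}-F_2^{(t)}|\leq F_2/100\text{ for all }t\leq m\}$, which has probability at least $19/20$. Let $k^\star=\max\{k\geq 0:t_k\leq m\}$. By the definition of $t_k$ and the tracker accuracy, $\hat{F}_2^{(m)}\geq 0.99 F_2$ and $\hat{F}_2^{(t)}<2^{k^\star+1}$ for all $t\leq m$, so $2^{k^\star-1}\in[F_2/4.04,F_2/2]$ up to $O(\epsilon F_2)$ corrections. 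The \hho instance started at time $t_{k^\star-1}$ therefore has parameter $\sigma=\sqrt{\hat{F}_2^{(t_{k^\star-1})}}$ with $\sigma^2\in[F_2/4,(1/2+\epsilon)F_2]$ approximately. Moreover, this instance is not discarded before time $m$ since the event that would evict it ($t_{k^\star+1}$ occurring) is ruled out by maximality of $k^\star$, so \hht's returned value $H_{k^\star-1}$ is precisely the output of this instance.

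The next step is to verify the three hypotheses of Lemma~\ref{lem: R rounds} for this instance, applied to the suffix stream $(p_{t_{k^\star-1}},\ldots,p_m)$ with frequency vector $g=f^{(m)}-f^{(t_{k^\star-1}-1)}$. By $\hat{F}_2^{(t_{k^\star-1}-1)}<2^{k^\star-1}$ and $\mathcal{T}$, I get $F_2^{(t_{k^\star-1}-1)}<2^{k^\star-1}+F_2/100\leq(1/2+\epsilon)F_2$, hence $f_H^{(t_{k^\star-1}-1)}\leq\sqrt{F_2^{(t_{k^\star-1}-1)}}\leq\sqrt{0.51}\,\sqrt{F_2}$. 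Combining this with $f_H\geq\sqrt{F_2(1-1/K^2)}\approx\sqrt{F_2}$ from $K$-heaviness gives $g_H\geq(1-\sqrt{0.51})\sqrt{F_2}\geq 0.28\sqrt{F_2}$. Coordinate-wise $g_i\leq f_i$, so $\sqrt{F_2(\calH_0)^{(\text{suffix})}}\leq\sqrt{F_2(\calH_0)}\leq\sqrt{F_2}/K$ by the heaviness assumption. Then: hypothesis (a) $2K'C_\ast\sqrt{F_2(\calH_0)^{(\text{suffix})}}\leq\sigma$ reduces to $2K'C_\ast/K\leq 1/2$; hypothesis (b) $\sigma\leq 2\sqrt{2}\,g_H$ reduces to $\sqrt{1/2+\epsilon}\leq 2\sqrt{2}\cdot 0.28\approx 0.79$, which holds since $\sqrt{0.51}\approx 0.72$; and hypothesis (c) $g_H>2K'C_\ast\sqrt{F_2(\calH_0)^{(\text{suffix})}}$ reduces to $0.28>2K'C_\ast/K$. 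All three are satisfied by taking $K\gtrsim 8K'C_\ast$ (a constant), analogous to the choice of constants in Lemma~\ref{lem: hho correctness}.

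With the hypotheses verified, Lemma~\ref{lem: R rounds} (using independent randomness for \hho and the tracker) guarantees that this instance completes all $R$ rounds before the end of the stream and outputs $H$ with probability at least $2/3$. A union bound over the failure of $\mathcal{T}$ and the failure of \hho then yields overall success probability at least $1-1/20-1/3>0.6$. For space and update time: by Theorem~\ref{thm: beefed up AMS} with constant $\epsilon,\delta$ the tracker uses $O(1)$ words and $O(1)$ update time; \hht maintains at most two \hho instances concurrently (the rest being discarded), and each uses $O(1)$ words and $O(1)$ update time by Lemma~\ref{lem: hho correctness}. Detecting threshold crossings $t_k$ from the tracker output is constant time per update.

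The main obstacle is hypothesis (b): showing $\sigma\leq 2\sqrt{2}\,g_H$ for the instance started at $t_{k^\star-1}$. This is delicate because both $\sigma^2\approx 2^{k^\star-1}$ (which can range over $[F_2/4,F_2/2]$) and $g_H$ (the residual frequency of $H$ after time $t_{k^\star-1}-1$) depend on where $t_{k^\star-1}$ falls in the stream, and a naive analysis fails if $\sigma^2$ is near its upper end while $g_H$ is near its lower end. The key observation is that these two extremes are coupled via the tracker: $\sigma^2\leq F_2/2+\epsilon F_2$ \emph{forces} $F_2^{(t_{k^\star-1}-1)}\lesssim F_2/2$, which in turn bounds $f_H^{(t_{k^\star-1}-1)}\leq\sqrt{F_2/2}$ and thereby guarantees that $g_H$ retains a constant fraction of $\sqrt{F_2}$, exactly enough to close the inequality $\sqrt{1/2+\epsilon}\leq 2\sqrt{2}(1-\sqrt{1/2+\epsilon})$ with slack.
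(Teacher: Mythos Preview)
Your strategy matches the paper's in outline: condition on the tracker succeeding, show the penultimate \hho instance is the one returned, and certify its hypotheses. Where you differ is in the certification step. You verify the three hypotheses of Lemma~\ref{lem: R rounds} directly on the entire suffix $(t_{k^\star-1},m]$, whereas the paper instead selects an intermediate time $t_*\le m$ with $\|f^{(t_{k^\star-1}:t_*)}\|_2\in[\sigma/2,\sigma]$ and applies the cleaner Lemma~\ref{lem: hho correctness} to $(t_{k^\star-1},t_*]$. The paper's $t_*$ device matches $\sigma$ to the sub-stream by choice of endpoint, so only the crude bound $\sigma^2\le F_2^{(t_{k^\star-1})}+\epsilon F_2\le 1.01F_2$ (immediate from $\mathcal{T}$) is needed for $\sigma$; heaviness on $(t_{k^\star-1},t_*]$ then follows from a growth inequality for $\sqrt{F_2^{(t)}}$ over that interval.

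Your more direct route is also viable, but it has one concrete gap. You repeatedly use $\sigma^2\le(1/2+\epsilon)F_2$, yet the definition of $t_{k^\star-1}$ gives only the \emph{lower} bound $\sigma^2=\hat F_2^{(t_{k^\star-1})}\ge 2^{k^\star-1}$; you are tacitly conflating $\sigma^2$ with $2^{k^\star-1}$. Nothing you wrote rules out the tracker overshooting $2^{k^\star-1}$ substantially at the crossing step. This matters because your check of hypothesis~(b) is numerically tight: you need roughly $\sigma\le 0.79\sqrt{F_2}$, and the trivial bound $\sigma^2\le 1.01F_2$ from $\mathcal{T}$ gives $\sigma\approx\sqrt{F_2}>2\sqrt 2\cdot 0.28\sqrt{F_2}$, so~(b) would fail. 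The fix is to route through the true second moment and a one-step bound: under $\mathcal{T}$, $\sigma^2\le F_2^{(t_{k^\star-1})}+0.01F_2$; a single insertion changes $F_2^{(t)}$ by at most $2f_{p_t}^{(t-1)}+1\le 2\sqrt{F_2}+1$, so $F_2^{(t_{k^\star-1})}\le F_2^{(t_{k^\star-1}-1)}+2\sqrt{F_2}+1$; and $\mathcal{T}$ together with $\hat F_2^{(t_{k^\star-1}-1)}<2^{k^\star-1}\le 0.505F_2$ gives $F_2^{(t_{k^\star-1}-1)}<0.515F_2$. Combining, $\sigma^2\le 0.525F_2+O(\sqrt{F_2})$, which (for $F_2$ not trivially small) is enough to close~(b). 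This step is essential to your approach and should be made explicit.
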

\begin{proof}
The space and update time bounds are immediate from the description of the algorithm.
The success probability follows by a union bound over the failure probabilities in Lemma~\ref{lem: hho correctness} and Theorem~\ref{thm: beefed up AMS}, which are $1/3$ and $\delta=0.05$ respectively.
It remains to prove that there is a constant $K$ such that conditionally given the success of the $F_2$ estimator, the hypotheses of Lemma~\ref{lem: hho correctness} are satisfied by the penultimate instance of \hho by \hht.

Let $K'$ denote the constant from Lemma~\ref{lem: hho correctness} and set $K=12K'$, so if $H$ is a $K$-heavy hitter then for any $\alpha>2/K$ and in any interval $(t,t']$ where $(F_2^{(t')})^{1/2}-(F_2^{(t)})^{1/2}\geq \alpha\sqrt{F_2}$ we will have \[f_H^{(t:t')} + F_2(\calH_0)^{1/2}\geq\|f^{(t:t')}\|_2\geq\|f^{(t')}\|_2-\|f^{(t)}\|_2 \geq \alpha\sqrt{F_2}.\]
If follows with in the stream $p_{t},p_{t+1},\ldots,p_{t'}$ the heaviness of $H$ is at least 
\begin{equation}\label{eq: heaviness value}
\frac{f_H^{(t:t')}}{F_2^{(t:t')}(\calH_0)^{1/2}}\geq\frac{\alpha\sqrt{F_2}-\sqrt{F_2(\calH_0)}}{\sqrt{F_2(\calH_0)}}\geq K\alpha-1\geq \frac{K\alpha}{2}.
\end{equation}
Of course, if $F_2^{(t:t')}(\calH_0)=0$ the same heaviness holds.

Let $k$ be the last iteration of \hht.  By the definition of $t_k$, we have $(\hat{F}_2^{(t_{k-1})})^{1/2}\geq \frac{1}{4}(\hat{F}_2)^{1/2}\geq \frac{1}{4}\sqrt{(1-\epsilon)F_2}$.
Similar calculations show that there exists a time $t_*> t_{k-1}$ such that $(F_2^{(t_*)})^{1/2}-(F_2^{(t_{k-1})})^{1/2}\geq \frac{1}{6}F_2$ and $\|f^{t_{k-1}:t_*}\|_2\leq (\hat{F}_2^{(t_{k-1})})^{1/2} \leq 2\|f^{t_{k-1}:t_*}\|_2$. 
In particular, the second pair of inequalities implies that $\hat{F}_2^{(t_k-1)}$ is a good ``guess'' for $\sigma$ on the interval $(t_{k-1},t^*]$.
We claim $H$ is a $K'$ heavy hitter on that interval, too.
Indeed, because of \eqref{eq: heaviness value}, with $\alpha=1/6$, we get that $H$ is a $K'$-heavy hitter on the interval $(t_{k-1},t^*]$.
This proves that the hypotheses of Lemma~\ref{lem: hho correctness} are satisfied for the stream $p_{t_{k-1}+1},\ldots,p_{t^*}$.
It follows that from Lemma~\ref{lem: hho correctness} that \hho correctly identifies $H_{k-1}=H$ on that substream and the remaining updates in the interval $(t_*,t_m]$ do not affect the outcome.
\end{proof}

A now standard reduction from $\epsilon$-heavy hitters to $O(1)$-heavy hitters gives the following theorem.  
The next section describes an implementation that is more efficient in practice.
\newcommand{\likeCountSketchtext}{For any $\epsilon>0$ there is 1-pass streaming algorithm \bptree that, with probability at least $(1-\delta)$, returns a set of $\frac{\epsilon}{2}$-heavy hitters containing every $\epsilon$-heavy hitter and an approximate frequency for every item returned satisfying the $(\epsilon,1/\epsilon^2)$-tail guarantee.
The algorithm uses $O(\frac{1}{\epsilon^2} \log\frac{1}{\epsilon\delta})$ words of space and has $O(\log\frac{1}{\epsilon\delta})$ update and $O(\epsilon^{-2}\log\frac{1}{\epsilon\delta})$ retrieval time.}
\begin{theorem}\label{thm: like CountSketch}
\likeCountSketchtext 
\end{theorem}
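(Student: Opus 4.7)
The plan is to apply the standard CountSketch-style reduction \cite{CharikarCF04} to lift the single-heavy-hitter primitive \hht of Lemma~\ref{lem: main} to a full $\ell_2$-heavy hitters algorithm. I allocate $L = \Theta(\log\frac{1}{\epsilon\delta})$ rows and, in each row $r\in[L]$, an independent pairwise-independent hash $h_r:[n]\to[B]$ with $B = \Theta(\epsilon^{-2})$ and an independent pairwise-independent sign map $\sigma_r:[n]\to\{-1,+1\}$. In every cell $(r,b)$ I maintain a CountSketch counter $C_{r,b} = \sum_{i:h_r(i)=b}\sigma_r(i)f_i$ and, separately, an independent instance of \hht fed the substream $\{p_t : h_r(p_t) = b\}$. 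By Lemma~\ref{lem: main} each \hht uses only $O(1)$ words, so total storage is $O(LB) = O(\epsilon^{-2}\log\frac{1}{\epsilon\delta})$ words, and each stream update touches only one cell per row, giving update time $O(L) = O(\log\frac{1}{\epsilon\delta})$. Alongside I run one copy of the $F_2$ tracker of Theorem~\ref{thm: beefed up AMS} with constant accuracy to obtain a constant-factor estimate $\widehat{F}_2$ of $\|f\|_2^2$ at query time.

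For identification, fix any $\epsilon$-heavy hitter $i^*$ and row $r$, and let $b^* = h_r(i^*)$. Pairwise independence gives
\[
\E\!\left[\sum_{j\neq i^*,\,h_r(j)=b^*} f_j^2\right] = B^{-1}\sum_{j\neq i^*}f_j^2 \leq \|f\|_2^2/B,
\]
so by Markov, with probability at least $9/10$ the bucket tail mass is at most $10\|f\|_2^2/B$. Choosing $B = \Theta(K^2/\epsilon^2)$ with $K$ the constant of Lemma~\ref{lem: main} makes $i^*$ a $K$-heavy hitter in the substream inside bucket $(r,b^*)$, and since the internal randomness of \hht is independent of $(h_r,\sigma_r)$, the \hht there returns $i^*$ with conditional probability at least $0.6$. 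The per-row success probability is thus at least $1/2$, and a Chernoff bound over the $L$ independent rows makes $i^*$ the majority-vote output with probability at least $1 - \epsilon^2\delta/4$. Union-bounding over the at most $4/\epsilon^2$ items that could be $(\epsilon/2)$-heavy hitters recovers every $\epsilon$-heavy hitter with probability at least $1-\delta/2$. The combined candidate set has size at most $LB = O(\epsilon^{-2}\log\frac{1}{\epsilon\delta})$.

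For frequency estimates and the $(\epsilon,1/\epsilon^2)$-tail guarantee, the counters $C_{r,b}$ form exactly a CountSketch with $L$ rows and $B$ buckets, so by the standard analysis of \cite{CharikarCF04} the median estimator $\tilde{f}_i = \median_r\,\sigma_r(i)\,C_{r,h_r(i)}$ satisfies $|\tilde{f}_i - f_i| \leq (\epsilon/8)\|f_{tail(1/\epsilon^2)}\|_2$ simultaneously over all of the $O(\epsilon^{-2}\log\frac{1}{\epsilon\delta})$ candidate items with probability at least $1-\delta/2$, after adjusting the hidden constants in $B$ and $L$. At query time I compute $\tilde{f}_i$ for each candidate and output those satisfying $\tilde{f}_i \geq (3\epsilon/4)\sqrt{\widehat{F}_2}$. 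Combined with the constant-factor accuracy of $\widehat{F}_2$, this threshold retains every $\epsilon$-heavy hitter (whose true frequency exceeds $\epsilon\|f\|_2$) and rejects any item with frequency below $(\epsilon/2)\|f\|_2$, producing the required set of $(\epsilon/2)$-heavy hitters while preserving the tail bound on each reported $\tilde{f}_i$. Retrieval iterates over $O(LB)$ candidates and spends $O(L)$ time per median, costing $O(\epsilon^{-2}\log\frac{1}{\epsilon\delta})$ in total.

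The main obstacle is the layering of the probability events: \hht uses its own private random bits, independently of the hashes $h_r$ and signs $\sigma_r$, so that the bucket-heaviness event (over $h_r$) and the \hht success event (over its own randomness) multiply cleanly. Because the candidate set is determined by the $LB$ \hht outputs rather than by the full universe $[n]$, the CountSketch union bound is only over $O(\epsilon^{-2}\log\frac{1}{\epsilon\delta})$ items, which is what makes $L = \Theta(\log\frac{1}{\epsilon\delta})$ rows sufficient. The remaining work---tuning the constant factors in $B$, $L$, the Chernoff exponent, and the filter threshold---is routine bookkeeping.
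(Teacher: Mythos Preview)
Your proposal is correct and follows essentially the same approach as the paper's proof: hash into $O(\epsilon^{-2})$ buckets over $O(\log\frac{1}{\epsilon\delta})$ independent rows, run \hht in each bucket to generate candidates, and filter the candidate list with a parallel \countsketch; you simply make explicit the $F_2$ estimate needed for the final threshold and the Chernoff/union-bound bookkeeping that the paper leaves implicit. The only slip is your retrieval-time arithmetic---$O(LB)$ candidates at $O(L)$ time each for the median gives $O(\epsilon^{-2}\log^2\frac{1}{\epsilon\delta})$, not $O(\epsilon^{-2}\log\frac{1}{\epsilon\delta})$---though the paper's own proof glosses over this point as well.
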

\begin{proof}
The algorithm \bptree constructs a hash table in the same manner as \countsketch where the items are hashed into $b = O(1/\epsilon^2)$ buckets for $r=O(\log 1/\epsilon\delta)$ repetitions.
On the stream fed into each bucket we run an independent copy of \hht.
A standard $r\times b$ \countsketch is also constructed.
The constants are chosen so that when an $\epsilon$-heavy hitter in the stream is hashed into a bucket it becomes a $K$-heavy hitter with probability at least 0.95.
Thus, in any bucket with a the $\epsilon$-heavy hitter, the heavy hitter is identified with probability at least 0.55 by Lemma~\ref{lem: main} and the aforementioned hashing success probability.

At the end of the stream, all of the items returned by instances of \hht are collected and their frequencies checked using the \countsketch. 
Any items that cannot be $\epsilon$-heavy hitters are discarded.
The correctness of this algorithm, the bound on its success probability, and the $(\epsilon,1/\epsilon^2)$-tail guarantee follow directly from the correctness of \countsketch and the fact that no more than $O(\eps^{-2}\log(1/\delta\epsilon))$ items are identified as potential heavy hitters. 
\end{proof}

We can amplify the success probability of \hht to any $1-\delta$ by running $O(\log (1/\delta))$ copies in parallel and taking a majority vote for the heavy hitter.
This allows one to track $O(1)$-heavy hitters at all points in the stream with an additional $O(\log\log{m})$ factor in space and update time. 
The reason is because there can be a succession of at most $O(\log m)$ $2$-heavy hitters in the stream, since their frequencies must increase geometrically, so setting $\delta = \Theta(1/\log{m})$ is sufficient.
The same scheme works for \bptree tree, as well, and if one replaces each of the counters in the attached \countsketch with an $F_2$-at-all-times estimator of~\cite{BravermanCIW16} then one can track the frequencies of all $\epsilon$-heavy hitters at all times as well.
The total storage becomes $O(\frac{1}{\epsilon^2}(\log\log n + \log\frac{1}{\epsilon}))$ words and the update time is $O(\log\log n + \log\frac{1}{\epsilon})$.

\section{Experimental Results}\label{sec: experiments}

We implemented \hht in C to evaluate its performance and compare it against the \countsketch for finding one frequent item.
The source code is available from the authors upon request.
In practice, the hashing and repetitions perform predictably, so the most important aspect to understand the performance of \bptree is determine the heaviness constant $K$ where \hht reliably finds $K$-heavy hitters.
Increasing the number of buckets that the algorithm hashes to effectively decreases $n$.
Therefore, in order to maximize the ``effective'' $n$ of the tests that we can perform within a reasonable time, we will just compare \countsketch against \hht.

The first two experiments help to determine some parameters for \hht and the heaviness constant $K$.
Afterwards, we compare the performance of \hht and \countsketch for finding a single heavy hitter in the regime where the heavy hitter frequency is large enough so that both algorithms work reliably.

{\bf Streams.}
The experiments were performed using four types of streams (synthetic data).  
In all cases, one heavy hitter is present.
For a given $n$ and $\alpha$ there are $n$ items with frequency 1 and one item, call it $H$, with frequency $\alpha\sqrt{n}$.
If $\alpha$ is not specified then it is taken to be $1$.
The four types of streams are (1) all occurrences of $H$ occur at the start of the stream, (2) all occurrences of $H$ at the end of the stream, (3) occurrences of $H$ placed randomly in the stream, and (4) occurrences of $H$ placed randomly in blocks of $n^{1/4}$.

The experiments were run on a server with two 2.66GHz Intel Xenon X5650 processors, each with 12MB cache, and 48GB of memory. 
The server was running Scientific Linux 7.2.

\subsection{$F_2$ tracking experiment}

The first experiment tests the accuracy of the $F_2$ tracking for different parameter settings. 
We implemented the $F_2$ tracking in $C$ using the speed-up of \cite{ThorupZ04}, which can be proved to correct for tracking using Appendix~\ref{app: another F2}.
The algorithm uses the same $r\times b$ table as a \countsketch. 
To query $F_2$ one takes the median of $r$ estimates, each of which is the sum of the squares of the $b$ entries in a row of the table.  
The same group of ten type (3) streams with $n=10^8$ and $\alpha=1$ was used for each of the parameter settings.

The results are presented in Table~\ref{tab: F2 tracking}.  
Given the tracker $\hat{F}_2(t)$ and true evolution of the second moment~$F_2(t)$, we measure the maximum $F_2$ tracking error of one instance as $\max_t |\hat{F}_2(t)-F_2(t)|/F_2$, where $F_2$ is the value of the second moment at the end of the stream.
We report the average maximum tracking error and the worst (maximum) maximum tracking over each of the ten streams for every choice of parameters.

The table indicates that, for every choice of parameter settings, the worst maximum tracking error is not much worse than the average maximum tracking error.  
We observe that the tracking error has relatively low variance, even when $r=1$.
It also shows that the smallest possible tracker, with $r=b=1$, is highly unreliable.

\begin{table*}\label{tab: F2 tracking}\def\arraystretch{1.1}
\begin{center}
\begin{tabular}{|r|l|l|l|l|l|l|l|l|l|l|}\hline
 & \multicolumn{5}{|c|}{avg.\ maximum $F_2$ tracking error} & \multicolumn{5}{|c|}{worst maximum $F_2$ tracking error}\\\hline
{\bf b} $\backslash$ {\bf r} & \multicolumn{1}{|r|}{\bf 1} & \multicolumn{1}{|r|}{\bf 2} & \multicolumn{1}{|r|}{\bf 4} & \multicolumn{1}{|r|}{\bf 8} & \multicolumn{1}{|r|}{\bf 16} & \multicolumn{1}{|r|}{\bf 1} & \multicolumn{1}{|r|}{\bf 2} & \multicolumn{1}{|r|}{\bf 4} & \multicolumn{1}{|r|}{\bf 8} & \multicolumn{1}{|r|}{\bf 16} \\\hline
{\bf 1}    & 1.2   & 0.71  & 0.82  & 0.66  & 0.59  & 4.3    & 1.2   & 2.7   & 0.85  & 0.86  \\\hline
{\bf 10}   & 0.35  & 0.30  & 0.33  & 0.19  & 0.16  & 1.1    & 0.68  & 0.91  & 0.28  & 0.20  \\\hline
{\bf 100}  & 0.12  & 0.095 & 0.080 & 0.074 & 0.052 & 0.24   & 0.17  & 0.13  & 0.13  & 0.10  \\\hline
{\bf 1000} & 0.044 & 0.030 & 0.028 & 0.018 & 0.017 & 0.076  & 0.060 & 0.045 & 0.029 & 0.024 \\\hline
\end{tabular}
\end{center}
\vspace{-0.5cm}

\caption{Average and maximum $F_2$ tracking error over 10 streams for different choices of $b$ and $r$.}
\end{table*}

\subsection{Implementations of \hht and \countsketch}\label{sec: impl details}

{\bf \hht implementation details.} 
We have implemented the algorithm \hht as described in Algorithm~\ref{algo: guess}.  
The maximum number of rounds is $R=\min\{\lceil 3\log_2 n\rceil,64\}$.
We implemented the four-wise independent hashing using the ``CW'' trick using the $C$ code from \cite{thorup2012tabulation} Appendix A.14.
We use the code from Appendix A.3 of \cite{thorup2012tabulation} to generate 2-universal random variables for random relabeling of the item.
The $F_2$ tracker from the previous section was used, we found experimentally that setting the tracker parameters as $r=1$ and $b=30$ is accurate enough for \hht.
We also tried four-wise hashing with the tabulation-based hashing for 64-bit keys with 8 bit characters and compression as implemented in $C$ in Appendix A.11 of \cite{thorup2012tabulation}.
This led to a 48\% increase in speed (updates/millisecond), but at the cost of a 55 times increase in space. 

{\bf \countsketch implementation details.}
We implemented \countsketch in C as described in the original paper~\cite{CharikarCF04} with parameters that lead to the smallest possible space. 
We use the \countsketch parameters as $b=2$ (number of buckets/row) and $r=\lceil 3+\log_2n\rceil$ (number of rows).
The choice of $b$ is the smallest possible value. 
The choice of $r$ is the minimum needed to guarantee that, with probability $7/8$, there does not exist an item $i\in[n]\setminus \{H\}$ that collides the heavy hitter in every row of the data structure.
In particular, if we use only $r' < r$ rows then we expect $2^{\log_2 N -r'}$ collisions with the heavy hitter, which would break the guarantee of the \countsketch.
Indeed, suppose there is a collision with the heavy hitter and consider a stream where all occurrences of $H$ appear at the beginning, then \countsketch will not correctly return $H$ as the most frequent item because some item that collides with it and appears after it will replace $H$ as the candidate heavy hitter in the heap.  
In our experiments, the \countsketch does not reliably find the $\alpha$-heavy hitter with these parameters when $\alpha <32$.
This gives some speed and storage advantage to the \countsketch in the comparison against \hht, since $b$ and/or $r$ would need to increase to make \countsketch perform as reliably as \hht during these tests. 

We also tried implementing the four-wise hashing with the Thorup-Zhang tabulation.  
With the same choices of $b$ and $r$ this led to an 18\% speed-up and a 192 times average increase in space.
Since the hash functions are such a large part of the space and time needed by the data structure this could likely be improved by taking $b>2$, e.g.\ $b=100$, and $r\approx\lceil\log_b n\rceil$.
No matter what parameters are chosen the storage will be larger than using the CW trick because each tabulation-based hash function occupies 38kB, which already ten times larger than the whole \countsketch table.

\subsection{Heaviness}

The goal of this experiment is to approximately determine the minimum value $K$ where if $f_H\geq K\sqrt{n}$ then \hht correctly identifies $H$. 
As shown in Lemma~\ref{lem: main}, $K\leq 12\cdot380,000$ but we hope this is a very pessimistic bound.
For this experiment, we take $n=10^8$ and consider $\alpha\in\{1,2,2^2,\ldots,2^6\}$.
For each value of $\alpha$ and all four types of streams we ran \hht one hundred times independently.
Figure~\ref{fig: heaviness} displays the success rate, which is the fraction of the one hundred trials where \hht correctly returned the heavy hitter.
The figure indicates that \hht succeeds reliably when $\alpha\geq 32$.
\begin{figure}\label{fig: heaviness}
\begin{tikzpicture}
\begin{semilogxaxis}[xlabel=$\alpha$, ylabel=success rate,xtick={1,2,4,8,16,32,64},xticklabels={1,2,4,8,16,32,64},legend pos=south east,width=\columnwidth,height=6cm]
\addplot coordinates {(1,0.94) (2,1) (4,1) (8,1) (16,1) (32,1) (64,1)};
\addplot coordinates {(1,.04) (2,.27) (4,1) (8,1) (16,1) (32,1) (64,1)};
\addplot coordinates {(1,0) (2,0) (4,0.02) (8,0.14) (16,0.71) (32,1) (64,1)};
\addplot coordinates {(1,0) (2,0.01) (4,0.02) (8,.25) (16,.78) (32,1) (64,1)};
\legend{(1) start, (2) end, (3) random, (4) blocks}
\end{semilogxaxis}
\end{tikzpicture}
\vspace{-0.5cm}

\caption{Success rate for \hht on four types of streams with $n=10^8$ and heavy hitter frequency $\alpha\sqrt{n}$.}
\label{fig: KN}
\end{figure}

\subsection{\hht versus \countsketch comparison}

In the final experiment we compare \hht against \countsketch.
The goal is to understand space and time trade-off in a regime where both algorithms reliably find the heavy hitter.

For each choice of parameters we compute the update rate of the \countsketch and \hht (in updates/millisecond) and the storage used (in kilobytes) for all of the variables in the associated program. 
The results are presented in Figure~\ref{fig: countsketch}.

The figure shows that \hht is much faster and about one third of the space. 
The dramatic difference in speed is to be expected because two bottlenecks in \countsketch are computing the median of the $\Theta(\log n)$ values and evaluating $\Theta(\log n)$ hash functions.
\hht removes both bottlenecks.
Furthermore, as the subroutine \hho progresses a greater number of items are rejected from the stream, which means the program avoids the associated hash function evaluations in \hht. 
This phenomena is responsible for the observed \emph{increase} in the update rate of \hht as $n$ increases. 
An additional factor that contributes to the speedup is amortization of the start-up time for \hht and of the restart time for each copy of \hho.
\begin{figure}
\begin{center}
\begin{tikzpicture}[scale=0.7]
\begin{axis}[axis y line=left, 
xlabel=$n$, ylabel={\color{blue}rate (updates/ms)}, xtick={6,7,8,9}, xticklabels={$10^6$,$10^7$,$10^8$,$10^{9}$}, ymin=0, ymax=6000,width=1.1*\columnwidth,height=6cm]
\addplot[mark=square*,thick,blue] coordinates {(6,274) (7,245) (8,227) (9,204)}; 
\addplot[mark=*,thick,blue] coordinates {(6,4237) (7,4480) (8,4966) (9,5204) }; 
\end{axis}
\begin{axis}[axis y line=right, axis x line=none, 
xlabel=$n$, ylabel={\color{red}space (kB)}, xtick={6,7,8,9}, xticklabels={$10^6$,$10^7$,$10^8$,$10^{9}$}, ymin=0, ymax=5.5,width=1.1*\columnwidth,height=6cm]
\addplot[mark=square,thick,red] coordinates {(6,3.2) (7,3.8) (8,4.1) (9,4.5)}; 
\addplot[mark=o,thick,red] coordinates {(6,1.4) (7,1.4) (8,1.4) (9,1.4)};
\end{axis}
\end{tikzpicture}
\end{center}
\vspace{-0.75cm}

\caption{Update rate in updates/ms ({\color{blue}$\bullet$}) and storage in kB ({\color{red}$\circ$}) for \hht  and \countsketch ({\color{blue}$\scriptscriptstyle\blacksquare$} and {\color{red}$\scriptscriptstyle\square$}, respectively) with the CW trick hashing.
}\label{fig: countsketch}
\end{figure}

\subsection{Experiments summary}
We found \hht to be faster and smaller than \countsketch. 
The number of rows strongly affects the running time of \countsketch because during each update $r$ four-wise independent hash functions must be evaluated and of a median $r$ values is computed.
The discussion in Section~\ref{sec: impl details} explains why the number of rows $r$ cannot be reduced by much without abandoning the \countsketch guarantee or increasing the space.
Thus, when there is a $K$-heavy hitter for sufficiently large~$K$ our algorithm significantly outperforms \countsketch.
Experimentally we found $K=32$ was large enough.

The full \bptree data structure is needed to find an item with smaller frequency, but for finding an item of smaller frequency \countsketch could outperform \bptree until $n$ is very large. 
For example, to identify an $\alpha$-heavy hitter in the stream our experiments suggest that one can use a \bptree structure with about $\lceil(32/\alpha)^2\rceil$ buckets per row.  
In comparison a \countsketch with roughly $\max\{2,1/\alpha^2\}$ buckets per row should suffice.
When $\alpha$ is a small constant, e.g.\ 0.1, what we find is that one can essentially reduce the number rows of the data structure from $\log(n)$ to just a few, e.g.\ one or two, at the cost of a factor $32^2=1024$ increase in space.\footnote{Recall, $\Omega(\log n/\log (1/\alpha))$ rows are necessary \countsketch whereas \bptree needs only $O(\log 1/\alpha)$ rows.}
This brief calculation suggests that \countsketch will outperform \bptree when the heaviness is $\alpha<1$ until $n\gtrsim 2^{1024}$---which is to say always in practice.
On the other hand, our experiments demonstrate that \hht clearly outperforms \countsketch with a sufficiently heavy heavy hitter.
More experimental work is necessary to determine the heaviness threshold (as a function of $n$) where \bptree out-performs \countsketch.
There are many parameters that affect the trade-offs among space, time, and accuracy, so such an investigation is beyond the scope of the preliminary results reported here.

\section{Conclusion}
In this paper we studied the heavy hitters problem, which is arguably one of the most important problems for data streams. The problem is heavily inspired from practice and algorithms for it are used in commercial systems. We presented the first space and time optimal algorithm for finding $\ell_2$-heavy hitters, which is the strongest notion of heavy hitters achievable in polylogarithmic space. By optimal, we mean the time is $O(1)$ to process each stream update, and the space is $O(\log n)$ bits of memory. These bounds match trivial lower bounds (for constant $\epsilon$). We also provided new techniques which may be of independent interest: (1) a one-pass implementation of a multi-round adaptive compressed-sensing scheme where we use that after filtering a fraction of items, the heavy item is becoming even heavier (2) a derandomization of Bernoulli processes relevant in this setting using limited independence. Both are essential in obtaining an optimal heavy hitters algorithm with $O(1)$ memory. Technique (1) illustrates a new power of insertion-only streams and technique (2) can be stated as a general chaining result with limited independence in terms of the size of the nets used. Given the potential practical value of such an algorithm, we provided preliminary experiments showing our savings over previous algorithms. 

\section*{Acknowledgments}
We would like to thank Huy L.\ Nguy$\tilde{\hat{\mbox{e}}}$n for pointing out that our original proof that $6$-wise independence suffices for the conclusion of Theorem~\ref{thm: sup bound} could be slightly tweaked to in fact show that $4$-wise independence suffices.
We would also like to thank the anonymous referees for their suggestions to improve the readability of this paper.

\bibliographystyle{abbrv}

\appendix

\section{\pickdrop counter example}\label{app: pick and drop}

We begin with a brief description of the \pickdrop algorithm and refer the reader to~\cite{braverman2013approximating} for the full details.
Afterwards we will describe a stream where the algorithm fails, with probability at least $1-n^{-1/8}$, to find a $\ell_2$ heavy hitter  and give some intuition about why the algorithm has this behavior. That is, the probability the algorithm succeeds is inverse polynomially small.

There is a parameter to the algorithm $t$ and the algorithm begins by partitioning the stream into $m/t$ consecutive intervals of $t$ updates each.
The value of $t$ is chosen such that $m/t$ is roughly the smallest frequency of a heavy hitter that we wish to find.
Hence, the average number of heavy hitter updates per interval is $\Omega(1)$.
In each interval, independently, a position $T\in\{1,2,\ldots,t\}$ is chosen at random and the item in the $T$th position within the interval is sampled. 
We also count how many times the sampled item it appears within $\{T,T+1,\ldots,t\}$. 
Next, the following ``competition'' is performed. 
We traverse the intervals sequentially from the first to the last and maintain a global sample and counter. 
Initially, the global sample is the sample from the first interval and the global counter is the counter from the first interval. 
For each future interval $i$, two options are possible: (1) the global sample is replaced with the sample from $i$ and the global counter is replaced with the counter from interval $i$, or (2) the global sample remains unchanged and the global counter is increased by the number of times the global sample item appears in interval $i$. 
Also, the algorithm maintains $X$ which is the current number of intervals for which the current global counter has not been replaced. 
If the maximum between $X$ and the counter from the $i$th interval is greater than the global counter then (1) is executed, otherwise (2) is executed.

Consider the following counter example that shows \pickdrop cannot find $\ell_2$ heavy hitters in $O(1)$ words.
$f\in\R^{2n}$ is a frequency vector where one coordinate $H$ has frequency $\sqrt{n}$, $n$ elements have frequency $1$, and $\sqrt{n}$ elements have frequency $n^{1/4}$, call the latter ``pseudo-heavy''. 
The remaining coordinates of $f$ are~0.
Consider the stream that is split into $t$ intervals $B_1,\ldots,B_t$ where $t =\sqrt{n}$ and each interval has size $\Theta(t)$. 
The items are distributed as follows.
\begin{itemize}\itemsep=-3pt
\item Each interval $w$ where $w = q n^{1/4}$ for $q = 1,2,\ldots, n^{1/4}$, is filled with $n^{1/4}$ pseudo-heavy elements each appearing $n^{1/4}$ times and appearing in no other interval.
\item Each interval $w+h$, for $h = 1,2,\ldots,n^{1/8}$ contains $n^{1/8}$ appearances of $H$ and remaining items that appear only once in the stream.
\item Each interval $w+h$, for $h=n^{1/8}+1,\ldots,n^{1/4}-1$ contains items that appear only once in the stream.
\end{itemize}
Obviously, a pseudo-heavy element will be picked in every ``$w$ interval''.
In order for it to be beaten by $H$, its count must be smaller than $n^{1/8}$ and $H$ must be picked from one of the $n^{1/8}$ intervals immediately following.
The intersection of these events happens with probability no more than $n^{-1/8}\left(n^{1/8}\cdot\frac{n^{1/8}}{n^{1/2}}\right) = n^{-3/8}$.
As there are only $n^{1/4}$ ``$w$ intervals'', the probability that the algorithm outputs $H$ is smaller than $n^{-1/8}$.

Note that the algorithm cannot be fixed by choosing other values of $t$ in the above example. If $t\gg \sqrt{n}$ then $H$ might be sampled with higher probability but the pseudo-heavy elements will also have higher probabilities and the above argument can be repeated with a different distribution in the intervals. If $t\ll \sqrt{n}$ then the probability to sample $H$ in any of the rounds becomes too small.

This counterexample is not contrived---it shows why the whole \pickdrop sampling algorithm fails to shed any light on the $\ell_2$ heavy hitters problem. 
Let us explain why the algorithm will not work in polylogarithmic space for $k = 2$. Consider the case when the global sample is $h\neq  H$ and the global counter is $f_h$. In this case, the global sample can ``kill'' $f_h$ appearances of $H$ in the next $f_h$ intervals, by the description of the algorithm. The probability to sample $h$ is $f_h/t$, so the expected number of appearances of $H$ that will be killed is upper bounded by $\sum_h f_h^3 / t = F_3/t$. In the algorithm, we typically choose $t = \sqrt{F_1}$. Consider the case when $F_1 = \Theta(n)$, $F_2 = \Theta(n)$  but $F_2 =o(F_3)$. In this case the algorithm needs $f_H$ to be at least $C F_3/\sqrt{F_2}$ for a constant $C$. This is impossible if $f^2_H = \Theta(F_2)$.
For $t = o(\sqrt{n})$ the probability that $H$ is sampled becomes $o(1)$. For $t = \omega(\sqrt{n})$ we need a smaller decay for $H$ to survive until the end in which case the above analysis can be repeated with the new decay factor for pseudo-heavy elements.

\section{Alternative $F_2$ tracking analysis}\label{app: another F2}

This section describes an alternative $F_2$ tracking analysis.
It shows that $4$-wise independence is enough, that is the original setting used by~\cite{AlonMS99}, but at the cost of an extra $\log 1/\epsilon$ factor on the space complexity.
Let $N\in\mathbb{N}$, $Z^j$ be a vector of $4$-wise independent Rademacher random variables for $j\in N$, and define $X_{j,t} = \langle Z^j,f^{(t)}\rangle$.
Let $Y_t = \frac{1}{N}\sum_{j=1}^NX_{j,t}^2$, obviously $Y_t$ can be computed by a streaming algorithm.

\begin{theorem}\label{thm: beefed up AMS 2}
Let $0<\epsilon<1$.
There is a streaming algorithm that outputs at each time $t$ a value $\hat{F}_2^{(t)}$ such that
\[\Pr(|\hat{F}_2^{(t)}-F_2^{(t)}|\leq \epsilon F_2,\text{ for all }0\leq t\leq m)\geq 1-\delta.\]
The algorithm use $O(\frac{1}{\epsilon^2}\log\frac{1}{\delta\epsilon})$ words of storage and has $O(\frac{1}{\epsilon^{2}}\log\frac{1}{\delta\epsilon})$ update time.
\end{theorem}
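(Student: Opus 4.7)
My plan is discretization plus median boosting. First I partition the stream into $K = O(1/\epsilon)$ intervals by choosing checkpoints $0 = t_0 < t_1 < \cdots < t_K = m$ in the analysis (the algorithm need not know them) so that $F_2^{(t_{i+1})} - F_2^{(t_i)} \leq \epsilon F_2$ for every $i$. The algorithm itself runs $L = \Theta(\log(K/\delta)) = \Theta(\log(1/(\delta\epsilon)))$ independent copies of the $N$-row AMS sketch with $N = \Theta(1/\epsilon^2)$ 4-wise independent Rademachers per copy, and reports the coordinate-wise median $\hat{F}_2^{(t)}$ at each time.

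The pointwise concentration at each checkpoint is standard: the 4-wise variance calculation gives $\Var(Y_{t_i}^{(\ell)}) \leq 2(F_2^{(t_i)})^2/N$, so Chebyshev with $N$ chosen appropriately yields a constant failure probability per copy, and the median trick amplifies this to $\delta/(3(K+1))$. A union bound over checkpoints ensures $|\hat{F}_2^{(t_i)} - F_2^{(t_i)}| \leq \epsilon F_2$ simultaneously at every $t_i$ with probability $\geq 1 - \delta/3$.

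The main work is to control $\sup_{t \in [t_i, t_{i+1}]}|\hat{F}_2^{(t)} - \hat{F}_2^{(t_i)}|$ within each interval. Setting $v_t = f^{(t)} - f^{(t_i)}$ and $V_j(t) = \langle Z^j, v_t\rangle$, the insertion-only property implies $\|v_t\|_2^2 \leq F_2^{(t)} - F_2^{(t_i)} \leq \epsilon F_2$. Using $X_{j,t}^2 = (X_{j,t_i} + V_j(t))^2$, the per-copy increment decomposes as
\begin{equation*}
Y_t^{(\ell)} - Y_{t_i}^{(\ell)} - (F_2^{(t)} - F_2^{(t_i)}) = \frac{2}{N}\sum_{j=1}^{N}\bigl[X_{j,t_i}V_j(t) - \langle f^{(t_i)}, v_t\rangle\bigr] + \biggl[\frac{1}{N}\sum_{j=1}^{N}V_j(t)^2 - \|v_t\|_2^2\biggr].
\end{equation*}
For the second bracket, applying Theorem~\ref{thm: sup bound} to each Bernoulli process $V_j$ (together with a second-moment version of the chaining argument) yields $\E \sup_t V_j(t)^2 = O(\|v_{t_{i+1}}\|_2^2) = O(\epsilon F_2)$; averaging over $j$ and Markov give a supremum bound of $O(\epsilon F_2)$ with constant probability. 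For the cross term, a direct 4-wise variance calculation yields $\Var(X_{j,t_i}V_j(t)) \leq F_2\,\|v_t\|_2^2 + 2\langle f^{(t_i)}, v_t\rangle^2 = O(\epsilon F_2^2)$, so Chebyshev plus averaging over $j$ give pointwise control of size $\epsilon F_2$ with constant probability. Since the median is $1$-Lipschitz in each of its $L$ input coordinates, the standard median-boosting amplifies each interval's failure probability to $\delta/(3K)$, and a final union bound over the $K$ intervals combined with the triangle inequality completes the argument.

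The hardest step will be upgrading the pointwise cross-term bound to a uniform bound over $t$ using only $4$-wise independence, because $X_{j,t_i}$ and $V_j(t)$ share the same randomness $Z^j$ and I cannot invoke Hanson-Wright as in Lemma~\ref{lem: F2 tracker chaining}. My plan is to bound $|X_{j,t_i}|\,\sup_t|V_j(t)|$ by Cauchy--Schwarz and apply Theorem~\ref{thm: sup bound} conditionally for the supremum factor, then average over $j$. Because this route only furnishes polynomial (Chebyshev-type) concentration per interval, the union bound over $K = O(1/\epsilon)$ intervals must be paid by medians, which is precisely the source of the extra $\log(1/\epsilon)$ factor in the final space complexity relative to Lemma~\ref{lem: F2 tracker chaining}.
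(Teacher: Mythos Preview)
Your overall architecture matches the paper's Appendix~B proof: partition the stream at checkpoints, control a single AMS copy uniformly on each subinterval by combining the pointwise AMS bound at the left endpoint with a supremum bound on the increment (via Theorem~\ref{thm: sup bound}), and then boost by medians over $\Theta(\log(\#\text{intervals}/\delta))$ copies. The difference that creates a real gap is your choice $K=O(1/\epsilon)$.

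With $F_2^{(t_{i+1})}-F_2^{(t_i)}\le \epsilon F_2$, your own route for the uniform cross term gives only
\[
\E\Bigl[\tfrac{2}{N}\sum_{j}|X_{j,t_i}|\,\sup_t|V_j(t)|\Bigr]
\;\le\; 2\sqrt{\E X_{1,t_i}^2}\,\sqrt{\E\sup_t V_1(t)^2}
\;=\;O\bigl(\sqrt{F_2}\cdot\sqrt{\epsilon F_2}\bigr)
\;=\;O(\sqrt{\epsilon}\,F_2),
\]
not $O(\epsilon F_2)$. Centering does not rescue this: the deterministic part $\langle f^{(t_i)},v_t\rangle$ can itself be $\Theta(\sqrt{\epsilon}\,F_2)$, and you cannot chain the centered degree-two form $X_{j,t_i}V_j(t)-\langle f^{(t_i)},v_t\rangle$ with only $4$-wise independence (that is exactly why Lemma~\ref{lem: F2 tracker chaining} needs $8$-wise). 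So your pointwise variance calculation is correct but it never gets upgraded to a uniform $\epsilon F_2$ bound at this interval width. The ``median is $1$-Lipschitz'' remark does not fix this either: to transfer a sup bound through the median you would need every copy (or a majority simultaneously for all $t$) to satisfy the increment bound, and you only have constant probability per copy.

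The paper's resolution is simply to make the intervals much finer. In Lemma~\ref{lem: AMS interval} it takes $\Delta = F_2^{(v)}-F_2^{(u)} \le (\epsilon/(20NC_*))^2 F_2$, so that the crude bound $Y_t-Y_u\le 2\sqrt{Y_u}\,\|X^{(u:t)}\|_2+\|X^{(u:t)}\|_2^2$ together with Theorem~\ref{thm: sup bound} already gives $Y_t-F_2^{(t)}\le 2\epsilon F_2$ uniformly with probability $\ge 2/3$. This forces $\mathrm{poly}(1/\epsilon)$ intervals rather than $O(1/\epsilon)$, but only $\log(\#\text{intervals})$ enters the median boost, so the space remains $O(\epsilon^{-2}\log(1/\delta\epsilon))$. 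Once you replace $K=O(1/\epsilon)$ by a sufficiently small $\mathrm{poly}(\epsilon)$ interval width and drop the Lipschitz step in favor of the standard ``majority of copies good on the whole interval'' Chernoff argument, your proof becomes the paper's.
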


The proof of Theorem~\ref{thm: beefed up AMS} uses the following technical lemma that bounds the divergence the estimate over an entire interval of updates.
It follows along the lines of Lemma~22 from the full version of \cite{BravermanCIW16}.
\newcommand{\amsintervallemmatext}{Let $\epsilon<1$, $N\geq 12/\epsilon^2$, and $\Delta>0$.
If $F_2^{(v)}-F_2^{(u)}\leq (\frac{\epsilon}{20NC_*})^2F_2  $,  then
\[\Pr\left( Y_t-F^{(t)}_2\leq 2\epsilon F_2,\,\forall t\in [u,v]\right) \geq \frac{2}{3}.\]}
\begin{lemma}\label{lem: AMS interval}
\amsintervallemmatext
\end{lemma}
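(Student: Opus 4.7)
The plan is to control $Y_t - F_2^{(t)}$ by splitting it at the endpoint $u$, handling the endpoint via standard $4$-wise AMS analysis, and controlling the increment by invoking Theorem~\ref{thm: sup bound} on the sub-stream indexed by times $(u,v]$. Set $g_t := f^{(t)} - f^{(u)}$ and $V_j(t) := \langle Z^j, g_t\rangle$. Since the stream is insertion-only, each coordinate of $g_t$ is a non-negative integer, so the identity $(a-b)^2 \le (a-b)(a+b) = a^2 - b^2$ for $0 \le b \le a$ gives
\[ \|g_t\|_2^2 \;\le\; F_2^{(t)} - F_2^{(u)} \;\le\; F_2^{(v)} - F_2^{(u)} \;\le\; \Delta'^2, \qquad \Delta' := \tfrac{\epsilon}{20NC_*}F_2^{1/2}. \]
Expanding $X_{j,t}^2 = (X_{j,u} + V_j(t))^2$ and $F_2^{(t)} = F_2^{(u)} + 2\langle f^{(u)}, g_t\rangle + \|g_t\|_2^2$ yields the clean decomposition $Y_t - F_2^{(t)} = (Y_u - F_2^{(u)}) + D(t)$, where
\[ D(t) \;=\; \tfrac{2}{N}\sum_{j=1}^{N} X_{j,u}V_j(t) \;-\; 2\langle f^{(u)}, g_t\rangle \;+\; \tfrac{1}{N}\sum_{j=1}^{N} V_j(t)^2 \;-\; \|g_t\|_2^2. \]

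First I would handle the endpoint. The standard $4$-wise-independent AMS variance bound gives $\Var(X_{j,u}^2) \le 2(F_2^{(u)})^2 \le 2F_2^2$, hence $\Var(Y_u) \le 2F_2^2/N \le \epsilon^2 F_2^2/6$ using $N \ge 12/\epsilon^2$. Chebyshev then gives $|Y_u - F_2^{(u)}| \le \epsilon F_2$ with probability at least $5/6$, which in particular implies $Y_u \le 2F_2$.

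Next I would control $\sup_{t\in[u,v]} |D(t)|$ deterministically on a high-probability event. Apply Theorem~\ref{thm: sup bound} to the sub-stream with updates $p_{u+1},\ldots,p_v$: since the $Z^j$ are drawn from a $4$-wise independent family and $g_v$ is the final frequency vector of this sub-stream, $\E\sup_{t\in[u,v]}|V_j(t)| \le C_*\|g_v\|_2 \le C_*\Delta'$. Markov gives $\Pr(\sup_t |V_j(t)| > 6NC_*\Delta') \le 1/(6N)$, so a union bound over $j$ yields, with probability at least $5/6$, that $M := 6NC_*\Delta'$ is a uniform upper bound on all $\sup_t |V_j(t)|$. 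On the intersection with the endpoint event (which has probability at least $2/3$), each of the four terms in $D(t)$ is then bounded as follows: $2|\langle f^{(u)}, g_t\rangle| \le 2F_2^{1/2}\Delta' = \epsilon F_2/(10NC_*)$; $\frac{1}{N}\sum_j V_j(t)^2 + \|g_t\|_2^2 \le M^2 + \Delta'^2 \lesssim (\epsilon/20)^2 F_2$; and, by Cauchy--Schwarz and $Y_u \le 2F_2$,
\[ \tfrac{2}{N}\sum_{j} |X_{j,u}||V_j(t)| \;\le\; \tfrac{2M}{N}\cdot \sqrt{N}\sqrt{\textstyle\sum_j X_{j,u}^2} \;=\; 2M\sqrt{Y_u} \;\le\; 2M\sqrt{2F_2} \;=\; \tfrac{12\sqrt{2}}{20}\epsilon F_2 \;<\; \epsilon F_2. \]
Summing the pieces gives $|D(t)| \le \epsilon F_2$ uniformly in $t\in[u,v]$, and adding the endpoint contribution $|Y_u - F_2^{(u)}| \le \epsilon F_2$ completes the proof, matching both the $2\epsilon F_2$ error and the $2/3$ success probability in the statement.

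The main obstacle I expect is the cross term $\frac{2}{N}\sum_j X_{j,u}V_j(t)$: its $j$-summands are not independent of $X_{j,u}$ after conditioning on the endpoint event, so a direct moment bound is inconvenient. The device above is to pull out $M$ via the uniform supremum bound, then absorb $\sum_j |X_{j,u}|$ using Cauchy--Schwarz against $Y_u$, which is precisely where the endpoint bound $Y_u \le 2F_2$ gets re-used; the shrinkage $\Delta' \lesssim \epsilon F_2^{1/2}/N$ built into the hypothesis is tuned exactly so that this $\sqrt{NY_u}$ factor does not blow the budget.
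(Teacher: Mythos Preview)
Your proof is correct and follows essentially the same route as the paper's: split at the endpoint $u$, control $Y_u-F_2^{(u)}$ via the standard AMS variance bound with Chebyshev, bound each $\sup_t|V_j(t)|$ by Theorem~\ref{thm: sup bound} plus Markov plus a union bound over the $N$ coordinates, and absorb the cross term via Cauchy--Schwarz against $\sqrt{Y_u}$. The paper packages the increment as $Y_t-Y_u\le 2\sqrt{Y_u}\,\|X^{(u:t)}\|_2+\|X^{(u:t)}\|_2^2$ using the triangle inequality on the vector $X^{(t)}=\tfrac{1}{\sqrt{N}}(X_{1,t},\ldots,X_{N,t})$, whereas you expand $X_{j,t}^2$ term-by-term; these are the same computation, and your constants ($M=6NC_*\Delta'$, endpoint threshold $\epsilon F_2$, failure probabilities $1/6+1/6$) coincide with the paper's choices $b_1=\epsilon F_2$, $b_2=6NC_*\sqrt{\Delta F_2}$.
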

\begin{proof}
We denote $\Delta=(F_2^{(v)}-F_2^{(u)})/F_2$ and split the expression above as follows
\[Y_t-F^{(t)}_2 =  (Y_t-Y_u) + (Y_u-F^{(u)}_2) + (F^{(t)}_2-F^{(u)}_2).\]
Let $b_1>0$. 
For the second term, and it is shown in~\cite{AlonMS99} that
\[\Pr(Y_u-F^{(u)}_2\geq b_1)\leq \frac{2(F_2^{(u)})^2}{N b_1^2}.\]
Let $X^{(t)}=\frac{1}{\sqrt{N}}(X_{1,t},X_{2,t},\ldots,X_{N,t})$ and $X^{(u:t)}=X_t-X_u$, in particular $Y_t = \|X^{(t)}\|_2^2$.
For the first term, we have
\[Y_t = \|X^{(u)}+X^{(u:t)}\|^2_2 \leq \left(\|X^{(u)}\|_2 +\|X^{(u:t)}\|_2\right)^2,\]
so
\[Y_t-Y_u \leq 2\sqrt{Y_u}\|X^{(u:t)}\|_2 + \|X^{(u:t)}\|_2^2.\]
Now from Theorem~\ref{thm: sup bound} and a union bound it follows that for $b_2>0$
\[P\left(\sup_{j\in[N],u\leq t\leq v}|X_{j,t}-X_{j,u}|\geq b_2\right)\leq \frac{N C_*\|f^{(u:v)}\|_2}{b_2},\]
so $P(\sup_{t\geq u}\|X^{(u:t)}\|_2\geq b_2) \leq NC_*\|f^{(u:v)}\|_2/b_2$.

With probability at least $1-\frac{2(F_2^{(u)})^2}{N b^2_1}-\frac{NC_*\|f^{(u:v)}\|_2}{b_2}$ we get, for all $t\geq u$
\begin{align*}
Y_t-F_2^{(t)}
&\leq 2(F_2^{(u)}+b_1)^{\frac{1}{2}}b_2 + b_2^2 + b_1 + \Delta F_2.
\end{align*}
Now we set $b_1=\epsilon F_2$ and $b_2=6NC_*\sqrt{\Delta F_2}$ and the above expression is bounded by
\begin{align*}
Y_t - F_2^{(t)}&\leq 12\sqrt{2}NC_*F_2\sqrt{\Delta} + \epsilon F_2 +2\Delta\\
&\leq 20NC_*F_2\sqrt{\Delta} + \epsilon F_2,\\
&\leq 2\epsilon F_2
\end{align*}
since $\Delta\leq F_2$.
The probability of success is at least
\begin{align*}
1 - \frac{2(F_2^{(u)})^2}{N \epsilon^2F_2^2} - \frac{\|f^{(u:m)}\|_2}{6\sqrt{\Delta}} \leq 1-\frac{1}{3}.
\end{align*}
\end{proof}

We now proceed to describe the $F_2$ estimation algorithm and prove Theorem~\ref{thm: beefed up AMS}.
\begin{proof}[proof of Theorem~\ref{thm: beefed up AMS}]
  The algorithm returns at each time $t$ the value
\[\hat{F}_2(t) = \max_{s\leq t}\median(Y_{1,t},Y_{2,t},\ldots,Y_{M,t}),\] 
which are $M=\Theta(\log\frac{1}{\delta\epsilon})$ independent copies of $Y_t$ with $N=12(\frac{3}{\epsilon})^2$.
Let $\Delta = (\frac{\epsilon/3}{20NC_*})^2$ and define $t_0=0$ and $t_k = \max\{t\leq m\mid F_2^{(t_k)}\leq F_2^{(t_{k-1})}+\Delta F_2\}$, for $k\geq 1$.
These times separate the stream into no more than $2/\Delta$ intervals during which the second moment increases no more than $\Delta F_2$.
Lemma~\ref{lem: AMS interval} and an application of Chernoff's Inequality imply that for each interval $k$
\begin{align*}
\Pr&\left( \median_{j\in[M]}(Y_{j,t}) -F_2^{(t)}\leq \frac{2}{3}\epsilon F_2,\,\forall t\in[t_{k-1},t_k)\right)\\
&\geq 1-\poly(\delta\epsilon).
\end{align*}
The original description of the AMS algorithm~\cite{AlonMS99} implies that, for all $k$,
\[\Pr\left( \median_{j\in[M]}(Y_{j,t_k})\geq (1-\epsilon/3)F_2^{(t)}\right)\geq 1-\poly(\delta\epsilon).\]
By choosing the constants appropriately and applying a union bound over all $O(\epsilon^{-2})$ intervals and endpoints we achieve all of these events events occur with probability at least $1-\delta$.
One easily checks that this gives the desired guarantee.
\end{proof}

\section{Faster with less space}\label{app: speedup}

Maintaining all of the $O(\epsilon^{-2}\log\frac{1}{\epsilon\delta})$ instances of the $F_2$ tracking algorithm is a significant speed and memory bottleneck for \bptree.
For example, Section~\ref{sec: experiments} uses a tracker with thirty counters, which is a lot of additional storage. 
Furthermore, evaluating the four-wise hash functions is one of the main bottlenecks limiting the speed, so it is much slower to have update one $F_2$ tracker for every row of the datastructure. 
This section addresses that bottleneck by reducing the number of instances of $F_2$ tracking to only one. 
This is the most efficient way we have found to implement the algorithm.
It does not improve (nor degrade) the space complexity or update time for \bptree.
In this section we will describe the intuition behind the following theorem. 
Its proof appears in the full version of this paper.
\begin{theorem}\label{thm: faster bptree}
The algorithm \bptree can be implemented with a single $F_2$ tracker, which periodically restarts.
The storage, update time, and retrieval times all retain the same asymptotic complexity.
\end{theorem}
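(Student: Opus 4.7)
The plan is to replace the collection of $\Theta(\epsilon^{-2}\log(1/\epsilon\delta))$ per-bucket $F_2$-trackers used inside the \hht subroutines of \bptree with a single global tracker running on the entire input stream. First I would instantiate one copy of the tracker from Theorem~\ref{thm: beefed up AMS} with constant accuracy parameters, so that $\hat F_2^{(t)}$ sits within a constant factor of $F_2^{(t)}$ at all times $t$ with high constant probability. This tracker occupies $O(\epsilon^{-2})$ words, which is already within the budget for the full data structure, and crucially does not scale with the number of buckets.

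Second, the global tracker would drive the guessing mechanism of \hht in every bucket simultaneously. I would declare a global epoch boundary whenever $\hat F_2^{(t)}$ crosses a new power of two. At each boundary, every one of the $br$ buckets of the attached \countsketch-style grid simultaneously spawns a fresh instance of \hho, each initialized with the same $\sigma=\sqrt{\hat F_2^{(t)}}$. For each bucket I would retain only the two most recent instances of \hho, exactly mirroring the bookkeeping of \hht in Algorithm~\ref{algo: guess}, so the per-bucket storage remains $O(1)$ words and the total storage is unchanged.

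Third, to keep the single tracker's additive error bounded as $F_2$ grows over the stream, I would periodically restart the tracker itself each time $\hat F_2$ doubles from its value at the last restart. After such a restart the tracker estimates only the $F_2$ of the suffix starting at the restart time, but since by design the true $F_2$ has at most doubled since the previous restart this suffix contribution is at least a constant fraction of the total, which is exactly what the powers-of-two guessing scheme for $\sigma$ needs. Combining by a union bound over the $O(\log m)$ epochs---with tracker failure probability driven to $\delta/\poly(\log m)$, which only costs a constant factor in space---gives the claimed correctness bound.

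The main obstacle is checking that feeding a single global value of $\sigma$ into every bucket still satisfies the hypotheses of Lemma~\ref{lem: R rounds} in the particular bucket holding the heavy hitter~$H$. One must verify both the upper bound $\sigma\le 2\sqrt{2}\, f_H$, which follows because hashing into $b=\Theta(\epsilon^{-2})$ buckets amplifies $H$ to an $\Omega(1)$-heavy hitter in its bucket and forces $f_H^2=\Omega(F_2)$ there, and the lower bound $\sigma\ge 2K'C_*\sqrt{F_2^{\mathrm{loc}}(\calH_0)}$, which uses the fact that $F_2^{\mathrm{loc}}(\calH_0)\lesssim F_2/b$ for the non-heavy tail after hashing. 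Once these two inequalities are verified uniformly over the $O(\log m)$ epochs, the remainder of the correctness argument is a verbatim replay of the proofs of Lemma~\ref{lem: main} and Theorem~\ref{thm: like CountSketch}.
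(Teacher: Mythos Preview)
Your plan has two genuine gaps. First, feeding the global estimate $\sigma=\sqrt{\hat F_2^{(t)}}$ into each bucket's \hho cannot satisfy the upper bound $\sigma\le 2\sqrt 2\,f_H$ in Lemma~\ref{lem: R rounds}. You argue that hashing amplifies $H$ so that $f_H^2=\Omega(F_2^{\mathrm{loc}})$ in its bucket, but $\sigma$ is tied to the \emph{global} $F_2$, not $F_2^{\mathrm{loc}}$: for an $\epsilon$-heavy $H$ one only has $f_H\ge\epsilon\sqrt{F_2}$, so $\sigma\approx\sqrt{F_2}\approx f_H/\epsilon\gg 2\sqrt 2\,f_H$. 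The paper fixes this by passing $\sigma=\tfrac{\epsilon}{16}2^{k/2}$, i.e.\ scaling the global estimate down by $\epsilon$ so that it becomes commensurate with $f_H$ and with $\sqrt{F_2^{\mathrm{loc}}}\approx\epsilon\sqrt{F_2}$.

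Second, your ``keep the two most recent \hho outputs'' rule, copied from \hht, does not survive the move to a global clock. In \hht the restart times are determined by the \emph{bucket's own} $F_2$, which guarantees that the penultimate epoch in that bucket contains enough of $H$. With a single global tracker, the epoch in which $H$ is recoverable need not be among the last two global epochs at all; the paper shows only that \emph{some} index $k^*$ in the last $O(\log\epsilon^{-1})$ epochs is ``good'' (Lemmas~\ref{lem: i and ii} and~\ref{lem: iii}). To avoid losing $H$ after it is found, the paper replaces the stagger-start with a single \hho per bucket plus a retained candidate $H_0$ whose frequency is compared against each new candidate via the auxiliary \countsketch (this is the event $E_2$ and condition~(ii) of a good index). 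Relatedly, once all buckets share the tracker the rows are no longer independent, so the final Chernoff step must be carried out conditionally on $\calT$ (and $\calC$); the ``verbatim replay'' of Lemma~\ref{lem: main} does not go through. A minor point: taking a union bound over $O(\log m)$ epochs with tracker failure $\delta/\poly(\log m)$ is not a constant-factor cost; the paper avoids this by only caring about the final $O(\log\epsilon^{-1})$ epochs after $F_2^{(t)}\ge(\epsilon/100)^2F_2$.
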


The single-tracker version will use the same data structure with $b\cdot r$ buctets, where $\frac{K^2}{\epsilon^2} < b = O(1/\epsilon^2)$ and $r=O(\log{1/\delta\epsilon})$.
However, the supressed constants may be larger.
For convenience, we will take $b\cdot r$ to be the dimensions of the auxiliary \countsketch, too.  
However, it is not necessary that the same size table is used for the auxiliary \countsketch.

We will make two modifications to \bptree.
The first is to change how we determine when to restart \hho.
Instead of using one $F_2$ tracker in each of the $b\cdot r = O(\epsilon^{-2}\log \frac{1}{\delta\epsilon})$ buckets, we will use only one $F_2$ tracker overall, which periodically restarts.
To be precise, replace all $b\cdot r=O(\epsilon^{-2}\log 1/\delta\epsilon)$ copies of the $F_2$ trackers used by the instances of \hht in each bucket with a single tracker that is periodically restarted.
At the beginning we start $T_1$, the first $F_2$ tracker, we label the first restart $T_2$, then $T_3$, and so on. 
The random bits used for each new tracker are independent of the previous ones.
For $k\geq 1$, we recursively define $t_k = \min\{t\geq 0\mid T_k^{(t)}\geq 2^k\}$ and $T_k^{(t)}$, $t\geq t_{k-1}$, is an $F_2$ estimate on the stream at times $t_{k-1},t_{k-1}+1,\ldots t$.
Let $\hat{\ell}$ be the index of the last tracker started, and we define $t_0=1$ and $t_{\hat{\ell}}=m$ for convenience even though $T_{\hat{\ell}}^{(t)}<2^{\hat{\ell}}$ by definition.
We will take the trackers to each have error probability $\delta'=O(\delta/\log \epsilon^{-1})$, where the supressed constant will be taken sufficiently small, and accuracy $\epsilon'=1/100$.
Upon each time $t_k$, for $k=0,1,2,\ldots,\hat{\ell}$, the instance of \hho in every bucket is restarted simultaneously with the value $\sigma=\frac{\epsilon}{16}2^{k/2}$ given as the input ``guess'' of $F_2$.
When a restart happens at time $t_k$, the old $F_2$ tracker $T_k$ and the corresponding \hho instance are discarded by the algorithm.
Here is another difference from the implementation using \hht, this faster version does not stagger-start the instances of \hho, only a single copy of \hho is operating in each bucket at one time.

The second modification is to guarantee that we do not discard a heavy hitter after it is identified.
Let us focus on a singe bucket.
In this bucket, \hho is restarted repeatedly, and each time it could output a candidate heavy hitter.
Rather than keep the last two candidates, as in \bptree, we maintain the identity of one candidate, call it $H_0$, and each time a new candidate is identified for this bucket, we compare its estimated frequency with an estimate of $f_{H_0}$ and replace $H_0$ with the new candidate if only if the freqency estimate of $H_0$ is the smaller of the two.
When comparing frequencies we use estimates from the auxiliary \countsketch.

The rest of this section contains the details of the proof of Theorem~\ref{thm: faster bptree}, but here is a rough outline.
We define two favorable events, these are $E_1$:``the $F_2$ tracking is accurate'' and $E_2$:``the auxiliary \countsketch is accurate'' and show that they have high enough probability in Lemmas~\ref{lem: E1} and~\ref{lem: E2}.
Next, we show that the probability that some heavy hitter $H$ is identified, conditionally given the random bits used by the $F_2$ trackers and $E\coloneqq E_1\cap E_2$, is $\Omega(1)$ per each of the $r$ rows of the data structure (Lemmas~\ref{lem: i and ii} and~\ref{lem: iii}).
That is enough to guarantee that $H$ is identified at least once among all $r=O(\log 1/\delta\epsilon)$ rows with probability at least $1-\poly(\delta\epsilon)$ using a Chernoff Bound, so we can apply a union bound over all heavy hitters. 
A more detailed version of that argument completes the proof of Theorem~\ref{thm: faster bptree} at the end of this section.

We need the conditioning because the rows of the \bptree data structure are no longer independent, since all of the restarts are timed from the same $F_2$ tracker, and that precludes a Chernoff Bound.
However, the rows are conditionally independent.
In order to show that $H$ is indeed identified with probability $\Omega(1)$, we will define the event that a restart $k$ is ``good'' in a particular bucket of the data structure.  
If $k$ is good it means two things.
First, the corresponding interval of updates, $t_{k},t_{k}+1,\ldots,t_{k+1}-1$, in this bucket satisfies the hypotheses of Lemma~\ref{lem: R rounds}, so $H$ is identified with (conditional) probability at least 2/3, and second, $H$ is never eclipsed as the heavy hitter in any of the subsequent rounds.
The rationale behind the proof of the bound itself is that, on average, rounds that are not good use up few of the updates to $H$, because the correspond to intervals in the stream where $H$ is not a heavy hitter.

Let $\calT_k$ denote the random bits used for all trackers started at times $t_{k'}$, for $k'\leq k$, and let $\calT$ denote the random bits for all of the trackers in total.
Similarly, let $\calC$ denote the collection of random bits used by the auxiliary \countsketch.
In most of the events we need to consider we can as well condition on $\calC$ without any extra legwork because $\calC$ is independent of $\calT$ and all instances of \hho.
Let $t^* = \min\{t\geq0\mid F_2^{(t)}\geq \frac{\epsilon^2}{100^2}F_2\}$ and let $\ell^* = \min\{k\mid t_k\geq t^*\}$.
We define $E_1$ to be the event that, for all $k$ where $t_k\geq t^*$ and all $t\in[t_{k-1},t_k)$, $|\hat{T}_k^{(t)} - F_2^{(t_{k-1}:t)}|\leq \frac{1}{100}F_2^{(t_{k-1}:t_k)}$.
\begin{lemma}\label{lem: E1}
$\Pr(E_1)\geq 1-\delta/3$
\end{lemma}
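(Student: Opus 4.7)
The proof proposal is a straightforward union bound over the trackers that are actually ``used'' by the algorithm, i.e., those with $t_k \geq t^*$. The two things I need to control are (i) the number of such trackers and (ii) the failure probability of each.

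First I would bound the number of relevant trackers. By definition $T_k^{(t_k)} \geq 2^k$, and since each tracker $T_k$ has accuracy parameter $\epsilon' = 1/100$, on the ``good'' event we have $F_2^{(t_{k-1}:t_k)} \asymp 2^k$. Telescoping across intervals, $\sum_k F_2^{(t_{k-1}:t_k)} \leq F_2$ forces $\hat{\ell} \leq \log_2 F_2 + O(1)$. Likewise, $t_{\ell^*} \geq t^*$ and the definition of $t^*$ imply $2^{\ell^*} \gtrsim (\epsilon/100)^2 F_2$. Subtracting, the number of indices $k$ with $t_k \geq t^*$ is at most $O(\log(1/\epsilon))$.

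Second, I would invoke Theorem~\ref{thm: beefed up AMS} for each such $T_k$. The theorem's at-all-times additive guarantee, applied to the substream starting at $t_{k-1}$ with final time set to $t_k$, yields
\[
|\hat{T}_k^{(t)} - F_2^{(t_{k-1}:t)}| \leq \epsilon' F_2^{(t_{k-1}:t_k)}\quad\text{for all } t\in[t_{k-1},t_k),
\]
with probability at least $1 - \delta'$. The mild subtlety is that $t_k$ is data-dependent (it is the first hitting time of the level $2^k$ by $\hat{T}_k$), so strictly speaking Theorem~\ref{thm: beefed up AMS} is stated for a deterministic final time. I would handle this by coupling with the deterministic stopping time $\tau_k \coloneqq \min\{t : F_2^{(t_{k-1}:t)} \geq 2^k/(1-\epsilon')\}$: on the accuracy event for $T_k$ run up to time $\tau_k$, one has $\hat{T}_k^{(\tau_k)} \geq (1-\epsilon') F_2^{(t_{k-1}:\tau_k)} \geq 2^k$, so $t_k \leq \tau_k$ and the additive bound $\epsilon' F_2^{(t_{k-1}:\tau_k)}$ at all times in $[t_{k-1},\tau_k]$ transfers to the same-form bound on $[t_{k-1},t_k)$, absorbing the factor $1/(1-\epsilon')$ into the constant $1/100$ by adjusting $\epsilon'$ if needed.

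Finally, I would set $\delta' = c\delta/\log(1/\epsilon)$ for a sufficiently small absolute constant $c$ and apply a union bound over the $O(\log(1/\epsilon))$ trackers with $t_k \geq t^*$; this yields total failure probability at most $\delta/3$. The main place where I would be careful is the data-dependent stopping-time issue described above; once that is handled by the coupling argument, the rest is bookkeeping.
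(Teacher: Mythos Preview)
Your approach is essentially the same as the paper's: define per-tracker accuracy events, bound the number of relevant trackers by $O(\log(1/\epsilon))$ via the doubling of $F_2$ after $t^*$, and union-bound with $\delta' = c\delta/\log(1/\epsilon)$. The paper handles the sequential dependence by conditioning on $\calT_{k-1}$ (so $t_{k-1}$ is fixed and $T_k$ uses fresh randomness) rather than your deterministic-stopping-time coupling, but the two devices serve the same purpose; your explicit treatment of the data-dependent final time $t_k$ is in fact more careful than the paper's one-line appeal to Theorem~\ref{thm: beefed up AMS}.
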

\begin{proof}
Let $D_k$ be the event that, for all $t\in[t_{k-1},t_k)$, $|\hat{T}_k^{(t)} - F_2^{(t_{k-1}:t)}|\leq \frac{1}{100}F_2^{(t_{k-1}:t_k)}$.
By Theorem~\ref{thm: beefed up AMS} and our choice $\delta'=O(\delta/\log\epsilon^{-1})$ for the error probability of the $F_2$ tracking, we have that, for all $k=1,2,\ldots,\hat{\ell}+1$, $\Pr(D_k\mid\calT_{k-1})\geq 1-\delta'$.
Starting from $t^*$, the value of $F_2^{(t)}$ doubles only $O(\log \epsilon^{-1})$ times in the remainder of the stream.
Therefore, $\hat{\ell}-\ell^*=O(\log\epsilon^{-1})$ on the event $E_1$, and $E_1\subseteq \bigcup_{k=\hat{\ell}-O(\log\frac{1}{\epsilon})}^{\hat{\ell}+1} D_k$.
The bound follows as
\begin{align*}
1-\Pr(E_1) 
&\leq \sum_{k\in L} 1-\Pr(D_k)\\
&= \sum_{k\in L} 1-\E\Pr(D_k\mid\calT_{k-1})\\
&\leq O(\log\epsilon^{-1})\cdot \delta'\leq\frac{\delta}{3}.
\end{align*}
The last inequality follows by taking the supressed constant in the definition of $\delta'$ to be sufficiently small.
\end{proof}
An observation from the last proof that we will use again is $E_1$ implies $O(\log \epsilon^{-1})$ restarts are enough to go from $t^*$ to the start of the stream.
Let $L=\{\ell^*,\ell^*+1,\ldots,\hat{\ell}\}$.
Although $L$ is random ($\hat{\ell}$ and $\ell^*$ depend on the $F_2$ trackers), $|L|=O(\log \epsilon^{-1})$ over the event $E_1$, which is sufficient for our purposes.

Now we will define the event $E_2$ that the auxiliary \countsketch frequency estimates are accurate enough.
Let $\calH$ denote the (random) set of candidate heavy hitters returned from a given bucket by instances of \hho started at times $t_k$, for $k\in L$, i.e. the final $O(\log\epsilon^{-1})$ instances of \hho run on the items in this bucket.
Let $\calH_{tot}$ denote the union over all of the buckets of the sets $\calH$.
$E_2$ is the event that at every time $t_{k}$, $k\in L$, and every $H'\in\calH_{tot}$, the auxiliary \countsketch returns a $(1\pm 1/2)$ approximation to $f_{H'}^{(t_{k})}$ when it is queried at time $t_{k}$.  
\begin{lemma}\label{lem: E2}
$\Pr(E_2|E_1)\geq 1-\delta/3$.
\end{lemma}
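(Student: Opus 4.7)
The plan is to exploit the independence of the auxiliary CountSketch's random bits $\calC$ from both $\calT$ and from the randomness of all the \hho instances. Once I condition on a realization of $(\calT,\text{\hho randomness})$ on which $E_1$ holds, the set $\calH_{tot}$ and the times $\{t_k\}_{k\in L}$ are deterministic, and the only remaining randomness is that of $\calC$. The standard CountSketch guarantee then applies directly to a fixed collection of queries, which is exactly the regime in which it was designed to work.

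Once conditioned, the accounting is straightforward. From the argument for Lemma~\ref{lem: E1} we already know $|L|=O(\log\epsilon^{-1})$ on $E_1$. Each of the $br=O(\epsilon^{-2}\log(1/\delta\epsilon))$ buckets contributes at most one candidate per restart, so $|\calH_{tot}|\leq br\cdot |L|$. The total number of (time, candidate) queries is therefore
\[
Q \;=\; |L|\cdot |\calH_{tot}| \;=\; O\bigl(\epsilon^{-2}\log(1/\delta\epsilon)\log^2\epsilon^{-1}\bigr).
\]
With $b=\Theta(K^2/\epsilon^2)$ buckets per row, a single CountSketch query returns the target frequency up to additive error $O(\epsilon/K)\|f\|_2$ except with probability $2^{-cr}$ for some constant $c>0$. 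For any candidate $H'$ whose true frequency $f^{(t_k)}_{H'}$ sits above the usual CountSketch threshold this translates into the required multiplicative $(1\pm 1/2)$ bound, and for lighter $H'$ the small estimate is consistent with the way $E_2$ is used later (to reject non-heavy candidates). By enlarging the hidden constant in $r=\Theta(\log(1/\delta\epsilon))$, I can drive the per-query failure probability below $\delta/(3Q)$.

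A union bound over the $Q$ queries then yields $\Pr(\neg E_2\mid \calT,\text{\hho randomness},E_1)\leq \delta/3$ on the conditioned event, and integrating out the conditioning proves $\Pr(E_2\mid E_1)\geq 1-\delta/3$. The one subtle point I would emphasize in the write-up is that $\calH_{tot}$ is itself a random set determined by $\calT$ and the \hho randomness; without the independence of $\calC$, the union bound could not be applied after fixing the queries, so routing the argument through this conditioning is what makes the simple bookkeeping above legitimate.
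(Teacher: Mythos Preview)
Your proposal is correct and follows essentially the same route as the paper: fix $\calT$ and the \hho randomness so that $\calH_{tot}$ and the times $\{t_k\}_{k\in L}$ become deterministic, invoke the independence of the auxiliary \countsketch, and union bound over the $b r\cdot|L|^2=\poly(1/\epsilon,1/\delta)$ queries using the $r=\Theta(\log(1/\delta\epsilon))$ rows to make each query fail with probability at most $\delta/(3Q)$. Your explicit remark about translating the \countsketch's additive guarantee into the stated $(1\pm 1/2)$ multiplicative one (and why this is harmless for light candidates given how $E_2$ is used) is a point the paper glosses over, but otherwise the arguments coincide.
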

\begin{proof}
The number of buckets is $b r=O(\epsilon^{-2}\log 1/\delta\epsilon)$, in each bucket the number of candidate heavy hitters returned by the $L$ instances of \hho is $|\calH|\leq |L| = O(\log \epsilon^{-1})$, and we are requesting that the \countsketch gives an accurate estimate of each of those items on $|L|=O(\log \epsilon^{-1})$ different queries.
Thus we need a union bound over at most $|\calH_{tot}|\cdot|L| \leq b\cdot r\cdot|L|^2 = \poly(1/\epsilon)$ queries to the \countsketch.
The set $\calH_{tot}$ and the times $t_k$, $k\in L$, are independent of the auxiliary \countsketch, and since the auxiliary \countsketch has $r=O(\log 1/\delta\epsilon)$ rows we can guarantee that $P(E_2|E_1)\geq 1-\delta/3$ by choosing the surpressed constant on $r$ to be sufficiently large.
\end{proof}

Let us narrow our focus to a single bucket within the data structure and suppose it contains some $\epsilon$-heavy hitter $H$.  
Let $B\subseteq [n]\setminus\{H\}$ be the (random) set of items in the bucket besides $H$.
Let us say that index $k\in L$ is \emph{good} in this bucket if the following three things happen, where $K=2^{12}C_*$, 
\begin{enumerate}[(i)]\setlength{\itemsep}{-3pt}
\item $f_H^{(t_{k-1}:t_{k})}\geq \frac{\epsilon}{16} 2^{k/2}$,\label{it: K-heavy}
\item for all indices $k'\geq k$ and items $i\in B$, we have $f_H^{(t_{k'})}\geq 4 f_i^{(t_{k'})}$, and \label{it: good freq}
\item $F_2^{(t_{k-1}:t_{k})}(B)^{1/2}\leq \frac{\epsilon}{16K} 2^{k/2}$.\label{it: good sigma}
\end{enumerate}

The next two lemmas prove that within any bucket holding a heavy hitter there is a good index with (conditional) probability at least 1/2.
The first lemma establishes properties \eqref{it: K-heavy} and \eqref{it: good freq}.
The second lemma establishes~\eqref{it: good sigma}.

\begin{lemma}\label{lem: i and ii}
If $H$ is an item with frequency $f_H\geq \epsilon\sqrt{F_2}$ and is $B$ the set of items going into some bucket with $H$, then conditionally given $\calT$ and $E$ the probability that there exists an index $k^*\in L$ satisfying both \eqref{it: K-heavy} and \eqref{it: good freq} it at least 3/4.
\end{lemma}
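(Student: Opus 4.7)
The bucketing hash $h$ defining $B$ is independent of both $\calT$ and the auxiliary \countsketch (hence of $E$), so the conditional probability in question is really a probability over $h$ alone. I will show (I) with high probability over $h$, no item $i \in B$ has $f_i > f_H/16$, and (II) deterministically under that event, both (i) and (ii) hold for some $k^* \in L$.

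For (I), let $S := \{i \neq H : f_i > f_H/16\}$. Since $f_H \geq \epsilon\sqrt{F_2}$, $|S| \leq 256 F_2/f_H^2 \leq 256/\epsilon^2$. Pairwise independence of $h$ and the assumption $b > K^2/\epsilon^2$ yield $\E|S \cap B| \leq 256/K^2$, so by Markov $\Pr(S \cap B = \emptyset) \geq 3/4$ (in fact much more, since $K$ is large). Call this event $G$.

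For (II), observe that under $G$, for every $i \in B$ and $k' \geq k^*$, $4 f_i^{(t_{k'})} \leq 4 f_i \leq f_H/4$, so (ii) is implied by $f_H^{(t_{k^*})} \geq f_H/4$, call this (b$'$). To produce $k^*$, let $k^\dagger := \min\{k \in L : f_H^{(t_k)} \geq f_H/4\}$; this exists because $t_{\hat\ell} = m$ and $f_H^{(m)} = f_H$. If $k^\dagger > \ell^*$, minimality gives $f_H^{(t_{k^\dagger-1})} < f_H/4$; if $k^\dagger = \ell^*$, then $f_H^{(t_{k^\dagger-1})} \leq f_H^{(t^*)} \leq (\epsilon/100)\sqrt{F_2} \leq f_H/100$. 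Either way,
\[ \sum_{k \in L,\, k \geq k^\dagger} f_H^{(t_{k-1}:t_k)} \;=\; f_H^{(t_{k^\dagger-1}:m)} \;>\; \tfrac{3}{4}f_H. \]
On $E_1$ the tracker accuracy gives $2^{\hat\ell/2} \lesssim \sqrt{F_2}$, and the geometric sum yields
\[ \sum_{k \in L,\, k \geq k^\dagger} \tfrac{\epsilon}{16}\, 2^{k/2} \;\leq\; \tfrac{\epsilon}{16}\cdot\tfrac{\sqrt{2}}{\sqrt{2}-1}\cdot 2^{\hat\ell/2} \;<\; \tfrac{3}{4}f_H, \]
using $f_H \geq \epsilon\sqrt{F_2}$ and an appropriate choice of constants. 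Pigeonhole then produces some $k^* \geq k^\dagger$ in $L$ satisfying $f_H^{(t_{k^*-1}:t_{k^*})} \geq (\epsilon/16)\, 2^{k^*/2}$, establishing (i); and since $k^* \geq k^\dagger$, monotonicity of $f_H^{(t)}$ yields (b$'$) and hence (ii).

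The main obstacle is the constant-chasing in the final display: verifying that the sum of thresholds strictly undershoots $3f_H/4$ requires carefully bounding $2^{\hat\ell/2}$ in terms of $\sqrt{F_2}$ using the additive $F_2$-tracking error guaranteed by $E_1$.
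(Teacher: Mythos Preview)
Your argument is correct, but it proceeds by a genuinely different route from the paper's. The paper argues by contrapositive: it lets $j$ be the \emph{last} index in $L$ at which condition~(ii) fails (so some $i\in B$ has $4f_i^{(t_j)}>f_H^{(t_j)}$, forcing $f_H^{(t_j)}\le 4\sqrt{F_2(B)}$), observes that failure of the lemma then forces (i) to fail for every $k>j$, and telescopes to obtain $f_H\le 4\sqrt{F_2(B)}+\frac{\epsilon}{4}2^{\hat\ell/2}$. Combined with $2^{\hat\ell}\le 4F_2$ and $f_H\ge\epsilon\sqrt{F_2}$ this yields $F_2(B)\ge\frac{\epsilon^2}{32}F_2$, and a single Markov bound on $\E[F_2(B)]=\frac{1}{b}(F_2-f_H^2)$ finishes. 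You instead apply Markov up front to the \emph{count} $|S\cap B|$ of items in the bucket with $f_i>f_H/16$, and then, on the event $\{S\cap B=\emptyset\}$, give an explicit construction of $k^*$ via pigeonhole from the first index $k^\dagger$ where $f_H^{(t_{k^\dagger})}\ge f_H/4$. Your approach is more constructive and arguably cleaner once the event $G$ is in hand; the paper's approach avoids singling out a threshold on individual frequencies and works directly with the aggregate quantity $F_2(B)$, which is the same quantity controlled in the companion Lemma~\ref{lem: iii}.

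Two small remarks. First, in the case $k^\dagger=\ell^*$ you wrote $f_H^{(t_{k^\dagger-1})}\le f_H^{(t^*)}$; strictly one has $t_{\ell^*-1}<t^*$, so the clean bound is $f_H^{(t_{\ell^*-1})}\le f_H^{(t^*-1)}<\frac{\epsilon}{100}\sqrt{F_2}$, which is what you need. Second, the constants in your final display are not free parameters: with $2^{\hat\ell}\le 4F_2$ (which $E_1$ gives) one gets $\sum_{k\le\hat\ell}\frac{\epsilon}{16}2^{k/2}\le \frac{\epsilon}{16}\cdot\frac{\sqrt 2}{\sqrt 2-1}\cdot 2\sqrt{F_2}\approx 0.43\,\epsilon\sqrt{F_2}<\tfrac34 f_H$, so the numerics do close and no ``appropriate choice'' is required.
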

\begin{proof}
Let $j\in L$ be the largest index such that exists an element $i\in B$ such that $4f_i^{(t_{j})}>f_H^{(t_{j})}$, or $j=\ell^*-1$ if no such index exists.  
Then $f_H^{(t_j)}\leq 4f_i^{(t_j)} \leq 4\sqrt{F_2(B)}$.
Let $F$ be the event that there is no $k^*\in L$ satisfying both \eqref{it: K-heavy} and \eqref{it: good freq}.
$F$ implies $f_H^{(t_{k-1}:t_{k})}<\frac{\epsilon}{16}2^{k/2}$, for all $k>j$.
Therefore, on the event $F$ we have
\begin{align*}
f_H &= f_H^{(t_j)} + \sum_{k=j+1}^{\hat{\ell}} f_H^{(t_{k-1}:t_k)} 
\leq 4\sqrt{F_2(B)} + \sum_{k=j+1}^{\hat{\ell}} \frac{\epsilon}{16} 2^{k/2}\\
&\leq 4\sqrt{F_2(B)} + \frac{\epsilon}{16} 2^{\hat{\ell}/2}\sum_{k=0}^{\infty} 2^{k/2}
= 4\sqrt{F_2(B)} + \frac{\epsilon}{4} 2^{\hat{\ell}}.
\end{align*}
The event $E$ implies $4F_2\geq 2^{\hat{\ell}}$, so $f_H-\frac{\epsilon}{4}2^{\hat{\ell}/2} 
\geq \frac{\epsilon}{2}\sqrt{F_2}$.
Thus,
$F_2(B)\geq \frac{\epsilon^2}{32} F_2$, and it implies $\Pr(F|\calT,E)\leq \Pr(F_2(B)\geq \frac{\epsilon^2}{32}F_2|\calT,E)$.
However, $\E(F_2(B)|\calT,E)=\E(F_2(B)) = \frac{1}{b}(F_2-f_H^2)$. 
The conditional probability can be bounded with Markov's Inequality to find $\Pr(F_2(B)\geq \frac{\epsilon^2}{32}F_2|\calT,E)\leq \frac{16}{\epsilon^2b}$, so $b = O(1/\epsilon^2)$ can be chosen large enough to guarantee $\Pr(F|\calT,E)\leq 1/4$, as desired.
\end{proof}

\begin{lemma}\label{lem: iii}
For every $k\in L$
\[\Pr\left(F_2^{(t_{k-1}:t_{k})}(B)
\leq \frac{\epsilon^2}{16K}2^{k}
\mid \calT,E\right)\geq 3/4.\]
\end{lemma}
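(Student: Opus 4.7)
The plan is to apply Markov's inequality after conditioning on the tracker randomness $\calT$, exactly mirroring the strategy used in Lemma~\ref{lem: i and ii}. The key observation is that, since the stream is adversarial but fixed, the times $t_{k-1}$ and $t_k$ are measurable with respect to $\calT$, and so in particular the event $E_1$ is $\calT$-measurable. Hence conditioning on $\calT$ on the event $E_1$ pins down both $t_{k-1}$ and $t_k$, making $F_2^{(t_{k-1}:t_k)}$ a deterministic quantity.

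First I would use the definition of $t_k$ together with $E_1$ to bound $F_2^{(t_{k-1}:t_k)}$. Because $\hat{T}_k^{(t_k-1)} < 2^k$ while the tracker is accurate to within $\tfrac{1}{100}F_2^{(t_{k-1}:t_k)}$ on $E_1$, we obtain $F_2^{(t_{k-1}:t_k)} \leq \tfrac{100}{99}2^k$ up to a lower-order additive term from the single stream update between $t_k-1$ and $t_k$, and in particular $F_2^{(t_{k-1}:t_k)} \lesssim 2^k$.

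Next, the bucket hash that determines $B$ is drawn from a $2$-universal family independent of $\calT$ and of the auxiliary \countsketch. For each $i \neq H$, pairwise independence gives $\Pr(i \in B \mid \calT) = 1/b$, so
\begin{equation*}
\E\!\left[F_2^{(t_{k-1}:t_k)}(B) \,\Bigm|\, \calT\right]
= \frac{1}{b}\sum_{i \neq H}\bigl(f_i^{(t_{k-1}:t_k)}\bigr)^2
\leq \frac{F_2^{(t_{k-1}:t_k)}}{b}
\lesssim \frac{2^k}{b}
\end{equation*}
on $E_1$. Markov's inequality and the hypothesis $b > K^2/\epsilon^2$ stated at the start of this section then give
\begin{equation*}
\Pr\!\left(F_2^{(t_{k-1}:t_k)}(B) \geq \tfrac{\epsilon^2}{16K}2^k \;\Bigm|\; \calT\right)
\;\lesssim\; \frac{K}{b\,\epsilon^2} \;\lesssim\; \frac{1}{K},
\end{equation*}
which is far below $1/4$ since $K = 2^{12}C_*$ is a large absolute constant.

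The main obstacle, and the only delicate point, is that the lemma conditions on the full event $E = E_1 \cap E_2$ rather than just on $\calT$: while $E_1$ is $\calT$-measurable, $E_2$ depends on the bucket hash through the candidate set $\calH_{tot}$, so $B$ is not literally independent of $E_2$. I would handle this exactly as Lemma~\ref{lem: i and ii} tacitly does, by writing $\Pr(A \mid \calT, E) \leq \Pr(A \mid \calT)/\Pr(E_2 \mid \calT, E_1)$ and observing that Lemmas~\ref{lem: E1}--\ref{lem: E2} imply $\Pr(E_2 \mid \calT, E_1) \geq 1-O(\delta)$, which at worst costs a factor of $2$ and can be absorbed into the constant in $b = \Theta(K^2/\epsilon^2)$ without changing the asymptotic space bound.
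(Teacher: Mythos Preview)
Your proposal is correct and follows essentially the same route as the paper: bound $F_2^{(t_{k-1}:t_k)}\lesssim 2^k$ using the tracker accuracy on $E_1$, use pairwise independence of the bucket hash to get $\E[F_2^{(t_{k-1}:t_k)}(B)\mid\calT]\le F_2^{(t_{k-1}:t_k)}/b$, then apply Markov's inequality with $b>K^2/\epsilon^2$. The paper's proof is terser and simply writes $\E(F_2^{(t_{k-1}:t_{k})}(B)\mid\calT,E)=\frac{1}{b}\E(F_2^{(t_{k-1}:t_{k})}-(f_H^{(t_{k-1}:t_{k})})^2\mid\calT,E)$ without addressing the $E_2$ dependence you flag; your explicit treatment of that point via $\Pr(A\mid\calT,E)\le\Pr(A\mid\calT)/\Pr(E_2\mid\calT,E_1)$ is a genuine refinement over the paper's presentation.
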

\begin{proof}
We have
\begin{align*}
 \E(F_2^{(t_{k-1}:t_{k})}(B)|\calT,E)
&=\frac{1}{b}\E(F_2^{(t_{k-1}:t_{k})}-(f_H^{(t_{k-1}:t_{k})})^2|\calT,E)\\
&\leq\frac{1}{b}\E(F_2^{(t_{k-1}:t_{k})}|\calT,E)\\
&\leq \frac{1}{b}2^{k}(1+1/100).
\end{align*}
Upon choosing $b=O(1/\epsilon^2)$ sufficiently large, Markov's Inequality shows that the desired inequality holds with conditional probability at least $3/4$.
\end{proof}

\begin{proof}[Proof of Theorem~\ref{thm: faster bptree}]
It is enough to show that if $H\in[n]$ has freqency $f_H\geq\epsilon\sqrt{F_2}$ then in each repetition $r'\leq r$ the probability, conditionally given $\calT$, $\calC$, and $E\coloneqq E_1\cap E_2$, that $H$ is identified by some instance of \hho and is not discarded thereafter is $\Omega(1)$.  
The reason that this enough is that the $r=\Theta(\log1/\delta\epsilon)$ rows of the data structure are independent conditionally given $\calT$, $\calC$, and $E$, so a Chernoff's Bound guarantees that $H$ is found with probability $1-\delta\epsilon^2/3$.
Let $A$ represent the event that all $\epsilon$-heavy hitters are returned. 
By a union bound, what we have is $\Pr(A|\calT,\calC,E)\geq 1-\delta/3$.
Thus, by Lemmas~\ref{lem: E1} and~\ref{lem: E2}, $\Pr(A) \geq \Pr(A|E)-\Pr(\bar{E}) \geq \E (\Pr(A|\calT,\calC,E)) - 2\delta/3\geq 1-\delta$.

It remains to prove that within a given bucket containing a heavy hitter $H$ the probability that $H$ is reported from the bucket is $\Omega(1)$.
Conditioning on $\calC$, as well, doesn't change Lemmas~\ref{lem: i and ii} and~\ref{lem: iii}. 
Thus, the probability that an index is good conditionally given $\calT$, $\calC$, and $E$ is at least $1-2\cdot 1/4 =1/2$.
If index $k$ is good then $f_H\geq \sigma=\frac{\epsilon}{16}2^{k/2}\geq K(F_2^{(t_{k-1}:t_k)}(B))^{1/2}$.
Therefore, the hypotheses for Lemma~\ref{lem: R rounds} are met and $H$ is returned by the modified algorithm with (conditional) probability at least $2/3$.
Furthermore, after $H$ is identified it is never discarded because the estimated frequencies from the auxiliary \countsketch are guaranteed by \eqref{it: good freq} and $E$ to satisfy $\hat{f}^{(t_{k'})}_H> \hat{f}^{(t_{k'})}_j$, for all $j\in B$ and $k'\geq k$.
The probability that $H$ is identified, conditionally given $\calC$, $\calT$, and $E$ is at least $2/3-1/2=1/6$, which completes the proof.
\end{proof}

\end{document}